\documentclass[11pt]{article}

\usepackage[margin=1in]{geometry}
\usepackage{amsmath,amsthm,amssymb}
\usepackage{environ, tabularx}
\usepackage{tikz-cd,graphicx}
\usepackage{float}
\usepackage{indentfirst}
\usepackage{mathtools}
\usepackage{hyperref}
\usepackage{amsmath}
\usepackage{dirtytalk}
\usepackage{pifont}
\usepackage{soul}
\usepackage{xcolor}
\usepackage{mathrsfs}

\newcommand{\bd}{\text{Bd}}

\newcommand{\R}{\mathbb{R}}
\newcommand{\Z}{\mathbb{Z}}
\newcommand{\N}{\mathbb{N}}

\newcommand{\K}{\mathcal{K}}
\newcommand{\M}{\mathcal{M}}
\newcommand{\C}{\mathcal{C}}
\newcommand{\D}{\mathcal{D}}
\renewcommand{\H}{\mathcal{H}}

\newcommand{\NP}{\textsf{NP}}

\newcommand{\supp}{\textsf{supp}}

\newcommand{\mobius}{\text{M\"{o}bius}}
\newcommand{\dist}{\textsf{dist}}

\newtheorem{proposition}{Proposition}[section]
\newtheorem{definition}{Definition}

\makeatletter

\usepackage{aliascnt}
\newtheorem{theorem}{Theorem}[section]

\newaliascnt{lemma}{theorem}
\newtheorem{lemma}[lemma]{Lemma}
\aliascntresetthe{lemma}

\newaliascnt{corollary}{theorem}
\newtheorem{corollary}[corollary]{Corollary}
\aliascntresetthe{corollary}

\DeclareMathOperator{\im}{im}

\DeclareMathOperator{\Bd}{Bd}

\author{William Maxwell \and Amir Nayyeri}

\title{Generalized max-flows and min-cuts in simplicial complexes}
\date{}

\begin{document}

\maketitle

\begin{abstract}
We consider high dimensional variants of the maximum flow and minimum cut problems in the setting of simplicial complexes and provide both algorithmic and hardness results.
By viewing flows and cuts topologically in terms of the simplicial (co)boundary operator we can state these problems as linear programs and show that they are dual to one another.
Unlike graphs, complexes with integral capacity constraints may have fractional max-flows.
We show that computing a maximum integral flow is \NP-hard.
Moreover, we give a combinatorial definition of a simplicial cut that seems more natural in the context of optimization problems and show that computing such a cut is \NP-hard.
However, we provide conditions on the simplicial complex for when the cut found by the linear program is a combinatorial cut.
For $d$-dimensional simplicial complexes embedded into $\R^{d+1}$ we provide algorithms operating on the dual graph: computing a maximum flow is dual to computing a shortest path and computing a minimum cut is dual to computing a minimum cost circulation.
Finally, we investigate the Ford-Fulkerson algorithm on simplicial complexes, prove its correctness, and provide a heuristic which guarantees it to halt.
\end{abstract}

\section{Introduction}
Computing flows and cuts are fundamental algorithmic problems in graphs, which are one dimensional simplicial complexes.
In this paper,  we explore generalizations of these algorithmic problems in higher dimensional simplicial complexes.
Flows and cuts in simplicial complexes have natural algebraic definitions arising from the theory of simplicial (co)homology.
A flow is an element of the kernel of the simplicial boundary operator, and a cut is an element of the image of the simplicial coboundary operator.
These subspaces serve as generalizations of the cycle and cut spaces of a graph.
This generalization has been studied by Duval, Klivans, and Martin in the setting of CW complexes~\cite{Duval2015}.
We formulate the algorithmic problems of computing max-flows and min-cuts algebraically.
By forgetting about the underlying graph structure and focusing on the (co)boundary operators, we obtain methods that naturally generalize to high dimensions.

In a graph an $st$-flow is an assignment of real values to the edges satisfying the conservation of flow constraints:
the net flow out of any vertex other than $s$ and $t$ is zero, and, thus, the net flow that leaves $s$ is equal to the net flow that enters $t$.
Therefore, each $st$-flow can be viewed as a circulation in another graph with an extra edge that connects $t$ to $s$.
Circulations are elements of the cycle space of the graph with coefficients taken over $\R$.
In a $d$-dimensional simplicial complex $\K$ the $d$-dimensional cycles are the formal sums (over $\R$) of $d$-dimensional simplices whose boundary is zero.
Because there are no $(d+1)$-simplices flows are the elements of the $d$th homology group $H_d(\K, \R)$.
The maximum flow problem in a simplicial complex asks to find an optimal element of $H_d(\K,\R)$ subject to capacity constraints.


The max-flow min-cut theorem states that in a graph the value of a maximum $st$-flow is equal to the value of a minimum $st$-cut.
This result is a special case of linear programming duality.
By rewriting the linear program in terms of the (co)boundary operator we obtain a similar result for simplicial complexes.
The question of whether or not a similar max-flow min-cut theorem holds for simplicial complexes was asked, and left open, in a paper by Latorre~\cite{latorre2012maxflow}.
We give a positive answer to this question, but with a caveat. 
When viewing flows and cuts from a topological point of view their linear programs are dual to one another.
However, we also provide a combinatorial definition of a cut which feels more natural for a minimization problem. 
Topological and combinatorial cuts are equivalent for graphs, but they become different in dimensions $d > 1$.
Flows in higher dimension, are dual to topological cuts, but not combinatorial cuts in general. 
From a computational complexity viewpoint the two notions of cuts are very different.
We show that computing a minimum topological cut can be solved via linear programming, but that computing a minimum combinatorial cut is \NP-hard.

A closely related problem is the problem of computing a max-flow in a graph which admits an embedding into some topological space. The most well-studied cases are planar graphs and the more general case when the graph embeds into a surface~\cite{Borradaile2009,cen-mcshc-09,HomFlow,en-mcsnc-11,f-faspp-87,Hassin1981MaximumFI,hj-oamfu-85,is-mfpn-79,insw-iamcmf-11,morell2020minimumcost,r-mstcp-83}.
Max-flows and min-cuts are computationally easier to solve in surface embedded graphs, especially planar graphs.
We consider this problem generalized to simplicial complexes.
Planar graphs are 1-dimensional complexes embedded in $\R^2$, in Section~\ref{sec:maxflow-embed} we consider the special case when a $d$-dimensional simplicial complex admits an embedding into $\R^{d+1}$.
These complexes naturally admit a dual graph which we use to compute maximum flows and minimum cuts (both topological and combinatorial).
We show that a maximum flow in a simplicial complex can be found by solving a shortest paths problem in its dual graph. This idea was used by Hassin to solve the maximum flow problem in planar graphs~\cite{Hassin1981MaximumFI}. 
Further, we show that finding a minimum topological cut can be done by finding a minimum cost circulation in its dual graph.
By setting the demand equal to one in the minimum cost circulation problem we obtain an algorithm computing a minimum combinatorial cut.

Maximum flows in graphs can be computed using the Ford-Fulkerson algorithm.
Moreover, the fact that the Ford-Fulkerson algorithm halts serves as a proof that there exists a maximum integral flow when the graph has integral capacity constraints.
In dimensions $d > 1$ the maximum flow may be fractional, even with integral capacity constraints. The problem arises due to the existence of torsion in simplicial complexes of dimension $d > 1$.
We show that despite the maximum flow being fractional the Ford-Fulkerson algorithm halts on simplicial complexes. However, in order for it to halt a special heuristic on picking the high dimensional analog of an augmenting path must be implemented.
Despite the algorithm halting it could we could not prove a polynomial upper bound on the number of iterations it takes.

\section{Preliminaries}\label{sec:prelim}
\paragraph{Simplicial homology}
We give a brief overview of some basic concepts from simplicial (co)homology that will be used throughout the paper. We recommend the book by Hatcher~\cite{Hatcher} for a more complete exposition.
In this paper $\K$ will always be a finite oriented $d$-dimensional simplicial complex which we now define. 
Given a finite set of vertices $V$ we define a $d$-dimensional simplex to be a subset of $d+1$ vertices of $V$.
$0$-simplices are vertices, $1$-simplices are edges, $2$-simplices are triangles, and so on.
We define an \textit{abstract simplicial complex} $\K$ to be a finite collection of simplices with the property that it is downward closed. That is, for every $\sigma \in \K$ if $\tau \subset \sigma$ then $\tau \in \K$. We call the subsets of a simplex the \textit{faces} of the simplex. The dimension of $\K$ is dimension of its largest simplex.
Further, we define an \textit{orientation} on $\K$ by fixing a linear ordering on the vertices in $V$ and treating simplices as ordered sets.
An oriented simplex is a permutation of the linear ordering, and the orientation of the simplex is the parity of the permutation. We call oriented simplices that agree with the linear ordering \textit{positively oriented} otherwise we call them \textit{negatively oriented}.

We define a $d$-chain to be a formal sum of $d$-simplices with coefficients over $\R$.
By $\K^d$ we denote the \textit{$d$-skeleton} of $\K$ which is the set of all $d$-simplices in $\K$ and we denote its cardinality by $n_d$.
Additionally, we may assign a weight function on the $d$-skeleton of $\K$ denoted $c \colon \K^d \rightarrow \R$. In the context of the maximum flow problem we will refer to the weight function as a \textit{capacity function}.

Given a simplicial complex $\K$ we define the $d$th \textit{chain space} $C_d(\K)$ to be the vector space generated by $\K^d$ with coefficients over $\R$.
Let $\sigma = [v_0,\dots,v_d]$ be a $d$-simplex with each $v_i$ a vertex.
We define the simplicial boundary operator $\partial_d \colon C_{d+1}(\K) \rightarrow C_d(\K)$ by \[\partial_d (\sigma) = \sum_{i=0}^d (-1)^i [v_0,\dots,\hat{v_i},\dots,v_d],\] where $[v_0,\dots,\hat{v_i},\dots,v_d]$ denotes the $(d-1)$-simplex obtained by removing $v_i$ from $\sigma$. 
The boundary operator extends linearly over $\R$ in the natural way.
The boundary operator is defined for all $0 \leq k \leq d$ and we will drop the subscript denoting it $\partial$ when the context is clear. Since we are assuming $C_d(\K)$ to be a finite dimensional vector space we can equivalently define the boundary operator to be the $n_{d-1} \times n_d$ matrix such that the entry $\partial_{i,j}$ is equal to $1$ or $-1$ (depending on the orientation) if the $i$th $(d-1)$-simplex is a face of the $j$th $d$-simplex, and 0 otherwise.
With the boundary operator we define the \textit{chain complex} on $\K$ to be the following sequence of vector spaces connected by their boundary operators.
\[\dots \xrightarrow{\partial_{d+1}} C_d(\K) \xrightarrow{\partial_d} C_{d-1}(\K) \xrightarrow{\partial_{d-1}} \dots \xrightarrow{\partial_2} C_1(\K) \xrightarrow{\partial_1} C_0(\K) \rightarrow 0 \]
The space of \textit{$d$-cycles} $Z_d(\K)$ is defined to be $\ker(\partial_d)$ and the space of \textit{$d$-boundaries} $B_d(\K)$ is defined to be $\im(\partial_{d+1})$. We call any $d$-cycle $\gamma \in B_d(\K)$ a \textit{null-homologous} cycle. Since $\partial_{d} \circ \partial_{d+1} = 0$ the quotient space $H_d(\K) = Z_d(\K) / B_d(\K)$ is well-defined, and we call $H_d(\K)$ the $d$th \textit{homology space} of $\K$.
By $\beta_d$ we denote the $d$th Betti number of $\K$ which is defined to be $\beta_d = \dim H_d(\K)$.

We obtain the \textit{cochain space} by dualizing $C_d(\K)$ in the following way.
The $d$th cochain space $C^d(\K)$ is defined to be the dual space of $C_d(\K)$, that is, the space of all linear functions $f \colon C_d(\K) \rightarrow \R$. We call $f$ a \textit{cochain} and we define the \textit{coboundary operator} $\delta_d \colon C^{d-1}(\K) \rightarrow C^{d}(\K)$ on cochains as the composition of functions $\delta_d f = f \circ \partial_d$. The cochain complex is the following sequence of cochain spaces.
\[\dots \xleftarrow{\delta_{d+1}} C^d(\K) \xleftarrow{\delta_d} C^{d-1}(\K) \xleftarrow{\delta_{d-1}} \dots \xleftarrow{\delta_2} C^1(\K) \xleftarrow{\delta_1} C^0(\K) \leftarrow 0 \]
The space of \textit{$d$-cocycles} $Z^d(\K)$ is defined to be $\ker(\delta_{d+1})$ and the space of \textit{$d$-coboundaries} $B^d(\K)$ is defined to be $\im(\delta_d)$. Again, since $\delta_{d+1} \circ \delta_d = 0$ the quotient space $H^d(\K) = Z^d(\K) / B^d(\K)$ is well-defined and we call $H^d(\K)$ the $d$th \textit{cohomology space} of $\K$.

Since $C_d(\K)$ and $C^d(\K)$ are finite dimensional vector spaces we have the isomorphisms $C_d(\K) \cong C^d(\K)$ and $H_d(\K) \cong H^d(\K)$. It can be shown that the coboundary operator as a matrix is equal to the transpose of the boundary operator: $\delta_d = \partial_d^T$.
We will view $d$-(co)chains as both $d$-dimensional vectors and as linear functions $C_d(\K) \rightarrow \R$ whenever it is convenient to do so. However, we will refer to flows as $d$-chains and cuts as $d$-cochains unless explicitly stated otherwise.
We will often want to talk about the underlying set of simplices of the (co)chain and refer to this set as the \textit{support} of the (co)chain; the support of a chain $\sigma = \sum \alpha_i \sigma_i$ is defined as the set $\supp(\sigma) = \{\sigma_i \in \K^d \mid \alpha_i \neq 0 \}$.

We will need to define the notion of \textit{relative homology} in order to cite known results about the boundary matrix of a simplicial complex.
Let $\K_0 \subseteq \K$ be a subcomplex of $\K$. We call the quotient space $C_d(\K, \K_0) = C_d(\K) / C_d(\K_0)$ the space of $d$-dimensional \textit{relative chains} and there is an induced mapping $\partial_d^{\K,\K_0} \colon C_d(\K, \K_0) \rightarrow C_{d-1}(\K, \K_0)$. From the induced mapping we define the spaces of \textit{relative $d$-cycles} $Z_d(\K, \K_0)$, \textit{relative $d$-boundaries} $B_d(\K, \K_0)$, and \textit{relative $d$-dimensional homology} $H_d(\K, \K_0)$ in the natural way.
Further, let $\mathcal{L} \subseteq \K$ be a \textit{pure} $d$-dimensional subcomplex; that is, every $(d-1)$-simplex in $\mathcal{L}$ is incident to some $d$-simplex in $\mathcal{L}$.
Let $\mathcal{L}_0 \subseteq \mathcal{L}$ be a pure $(d-1)$-dimensional subcomplex of $\mathcal{L}$.
The induced map on relative homology $\partial_d^{\mathcal{L},\mathcal{L}_0} \colon C_d(\mathcal{L}, \mathcal{L}_0) \rightarrow C_{d-1}(\mathcal{L},\mathcal{L}_0)$ has a natural matrix representation.
We construct the matrix $\partial_d^{\mathcal{L},\mathcal{L}_0}$ by starting with $\partial_d$ and including the columns corresponding to $d$-simplices in $\mathcal{L}$ while excluding the rows corresponding to $(d-1)$-simplices in $\mathcal{L}_0$.

\paragraph{Dual graphs}
We will provide algorithms for a special class of simplicial complexes: \textit{embedded complexes}. This class of simplicial complexes admits a natural graph structure known as the \textit{dual graph}.
Let $\K$ be a $d$-dimensional simplicial complex. 
We say $\K$ has an embedding into $\R^{d+1}$ if there exists a function $f \colon \K \rightarrow \R$ such that $f$ restricted to any $k$-simplex is an injection into a subspace homeomorphic to a $k$-dimensional disk. Moreover, for any two simplices $\sigma, \tau \in \K$ we require that $f(\sigma) \cap f(\tau) = f(\sigma \cap \tau)$. That is, the image of any two simplices only intersects at their shared faces.

Let $\K$ be a $d$-dimensional simplicial complex embedded in $\R^{d+1}$. The Alexander duality theorem implies that $\R^{d+1} \setminus \K$ consists of $\dim H_d(\K) + 1$ connected components~\cite{AlexanderDuality}, see ~\cite[Corollary 3.45]{Hatcher} for a modern treatment.
We define the dual graph $\K^*$ of $\K$ to be the graph whose vertices are in bijection with the connected components of $\R^{d+1} \setminus \K$ and whose edges are in bijection with the $d$-simplices of $\K$. Since simplices are two-sided each $d$-simplex is on the boundary of at most two connected components of $\R^{d+1} \setminus \K$, and the endpoints of its dual edge are dual to these connected components.
Exactly one connected component of $\R^{d+1} \setminus \K$ is unbounded and we denote the vertex dual to this component by $v_\infty$.

\paragraph{Integral coefficients}
In order to state our results regarding integral solutions we will need to define the integral homology groups of a simplicial complex. Given a simplicial complex $\K$ we define $C_d(\K, \Z), Z_d(\K, \Z), B_d(\K, \Z),$ and $H_d(\K, \Z)$ in the same was as before, but with coefficients over $\Z$ instead of $\R$. With this change we  no longer have vector spaces, but instead have free abelian groups.
The fundamental theorem of finitely generated abelian groups gives us the decomposition $H_d(\K, \Z) \cong \Z^k \oplus \Z_{t_1} \oplus \dots \oplus \Z_{t_n}$ for some $k \in \N$. We call the subgroup $\Z_{t_1} \oplus \dots \oplus \Z_{t_n}$ the \textit{torsion subgroup} of $H_d(\K, \Z)$ and when this subgroup is trivial we call the complex \textit{torsion-free}. We say $\K$ is relatively torsion-free in dimension $d$ if $H_d(\mathcal{L}, \mathcal{L}_0, \Z)$ is torsion-free for all subcomplexes $\mathcal{L}$ and $\mathcal{L}_0$. There exist complexes that are torsion-free but are not relatively torsion-free; see~\cite{Dey2011} for examples.

\paragraph{Total unimodularity}
Let $A$ be a matrix; we say that $A$ is \textit{totally unimodular} if every square submatrix $A'$ of $A$ has $\det(A') \in \{-1, 0, 1\}$.
Totally unimodular matrices are important in combinatorial optimization because linear programs with totally unimodular constraint matrices are guaranteed to have integral solutions~\cite{VeinottJr1968}.
Dey, Hirani, and Krishnamoorthy have provided topological conditions on when a simplicial complex has a totally unimodular boundary matrix~\cite{Dey2011} stated below.
We call a simplicial complex meeting the criteria of Theorem~\ref{thm:dey} \text{relative torsion-free} in dimension $d-1$.

\begin{theorem}[Dey et al.~\cite{Dey2011}, Theorem 5.2]\label{thm:dey}
Let $\K$ be a $d$-dimensional simplicial complex. The boundary matrix $\partial \colon C_d(\K) \rightarrow C_{d-1}(\K)$ is totally unimodular if and only if $H_{d-1}(\mathcal{L}, \mathcal{L}_0, \Z)$ is torsion-free for all pure subcomplexes $\mathcal{L}_0,\mathcal{L}$ of $\K$ of dimensions $d-1$ and $d$ where $\mathcal{L}_0 \subset \mathcal{L}$.
\end{theorem}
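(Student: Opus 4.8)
The plan is to reduce the statement to a fact about Smith normal forms of relative boundary matrices, using a dictionary between square submatrices of $\partial_d$ and relative boundary operators of pure subcomplexes. The starting observation is that a square submatrix $B$ of $\partial_d$ is specified by a set $S$ of columns ($d$-simplices) and a set $R$ of rows ($(d-1)$-simplices), and that (up to sign and reindexing) $B$ is itself a relative boundary matrix. First I would take $\mathcal{L}$ to be the downward closure of $S$; this is pure of dimension $d$. Any row of $R$ that is not a face of a simplex of $S$ is identically zero, so it may be discarded without changing $|\det B|$, and I may assume $R \subseteq (\mathcal{L})^{d-1}$. Letting $\mathcal{L}_0$ be the downward closure of $(\mathcal{L})^{d-1}\setminus R$ (pure of dimension $d-1$, being a union of closed $(d-1)$-simplices), and using that a $(d-1)$-simplex is never a proper face of another, one gets $(\mathcal{L})^{d-1}\setminus(\mathcal{L}_0)^{d-1}=R$ exactly, so by the construction of the relative boundary matrix recalled in Section~\ref{sec:prelim}, $B=\partial_d^{\mathcal{L},\mathcal{L}_0}$. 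Conversely every square $\partial_d^{\mathcal{L},\mathcal{L}_0}$ with $\mathcal{L},\mathcal{L}_0$ pure of dimensions $d,d-1$ is a submatrix of $\partial_d$. Hence total unimodularity of $\partial_d$ is equivalent to $|\det\partial_d^{\mathcal{L},\mathcal{L}_0}|\le 1$ for all such square relative boundary matrices.

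Next I would record that torsion in relative homology is visible already on $\partial_d^{\mathcal{L},\mathcal{L}_0}$. For a pure pair $(\mathcal{L},\mathcal{L}_0)$ consider $C_d(\mathcal{L},\mathcal{L}_0)\xrightarrow{\partial_d}C_{d-1}(\mathcal{L},\mathcal{L}_0)\xrightarrow{\partial_{d-1}}C_{d-2}(\mathcal{L},\mathcal{L}_0)$ over $\Z$; all groups are free. Since $C_{d-1}/\ker\partial_{d-1}\cong\im\partial_{d-1}$ is free, $\ker\partial_{d-1}$ is a direct summand, so $\cok(\partial_d^{\mathcal{L},\mathcal{L}_0})\cong H_{d-1}(\mathcal{L},\mathcal{L}_0,\Z)\oplus F$ with $F$ free. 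Therefore the torsion subgroup of $H_{d-1}(\mathcal{L},\mathcal{L}_0,\Z)$ equals that of $\cok(\partial_d^{\mathcal{L},\mathcal{L}_0})$, i.e. it is $\bigoplus_i\Z_{s_i}$ where the $s_i>1$ are the nontrivial elementary divisors of $\partial_d^{\mathcal{L},\mathcal{L}_0}$.

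The two implications then fall out. For the ``if'' direction, assume every relative $H_{d-1}$ is torsion-free and let $B=\partial_d^{\mathcal{L},\mathcal{L}_0}$ be a square submatrix as above. If $\det B=0$ there is nothing to check; if $\det B\neq 0$ then $\im B$ is $\Q$-spanning in $C_{d-1}(\mathcal{L},\mathcal{L}_0)$, and since $\im B\subseteq\ker\partial_{d-1}^{\mathcal{L},\mathcal{L}_0}$ this forces $\partial_{d-1}^{\mathcal{L},\mathcal{L}_0}=0$, so $H_{d-1}(\mathcal{L},\mathcal{L}_0,\Z)=\cok B$, a finite group of order $|\det B|$; torsion-freeness gives $|\det B|=1$. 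For the ``only if'' direction I argue by contrapositive: if some pure pair has torsion in $H_{d-1}(\mathcal{L},\mathcal{L}_0,\Z)$, then $\partial_d^{\mathcal{L},\mathcal{L}_0}$ has an elementary divisor $s>1$; if $r$ is its rank, the gcd of the $r\times r$ minors equals the product of all elementary divisors, a multiple of $s$ and hence $\ge 2$, so some nonzero $r\times r$ minor has absolute value $\ge 2$, and the corresponding $r\times r$ submatrix of $\partial_d$ witnesses that $\partial_d$ is not totally unimodular.

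The step I expect to be the main obstacle is the bookkeeping in the dictionary: checking that restricting to pure subcomplexes of dimensions exactly $d$ and $d-1$ is without loss of generality, that zero rows and $(d-1)$-simplices of $\mathcal{L}$ that are neither kept nor explicitly deleted are handled correctly, and that the submatrix obtained really coincides (up to sign) with $\partial_d^{\mathcal{L},\mathcal{L}_0}$. A secondary delicate point is justifying in the middle step that it is $\partial_d^{\mathcal{L},\mathcal{L}_0}$ alone, rather than the whole relative chain complex, that carries the torsion; this rests entirely on the freeness of $\im\partial_{d-1}$, and one should be careful that the free part $F$ of $\cok(\partial_d^{\mathcal{L},\mathcal{L}_0})$ (which records $\beta_{d-1}$) does not interfere with the argument.
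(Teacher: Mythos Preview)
The paper does not prove this statement: Theorem~\ref{thm:dey} is quoted from Dey, Hirani, and Krishnamoorthy~\cite{Dey2011} in the preliminaries and is used as a black box (notably in Theorem~\ref{thm:totuni}). So there is no ``paper's own proof'' to compare against.

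That said, your argument is correct and is essentially the standard Smith-normal-form proof one finds in~\cite{Dey2011}. The dictionary you set up between square submatrices of $\partial_d$ and relative boundary matrices $\partial_d^{\mathcal{L},\mathcal{L}_0}$ is right, including the handling of zero rows (if a chosen row lies outside $\mathcal{L}^{d-1}$ then the square submatrix already has determinant $0$, so those cases are vacuous). The identification of the torsion of $H_{d-1}(\mathcal{L},\mathcal{L}_0;\Z)$ with the torsion of $\cok\partial_d^{\mathcal{L},\mathcal{L}_0}$ via the splitting $C_{d-1}\cong\ker\partial_{d-1}\oplus\im\partial_{d-1}$ is the key algebraic point and you have it. Both implications are clean: in the ``if'' direction the full-rank square $B$ forces $\partial_{d-1}^{\mathcal{L},\mathcal{L}_0}=0$ and hence $H_{d-1}=\cok B$ is finite of order $|\det B|$; in the ``only if'' direction the determinantal-divisor identity $d_r=s_1\cdots s_r$ produces a bad $r\times r$ minor.

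One small bookkeeping point worth tightening: when $R=\mathcal{L}^{d-1}$ your $\mathcal{L}_0$ is empty, which is not literally ``pure of dimension $d-1$.'' This is harmless (relative homology with $\mathcal{L}_0=\emptyset$ is just absolute homology, and the theorem as stated in~\cite{Dey2011} accommodates this), but you should say so explicitly since the hypothesis in the statement quantifies only over pure $(d-1)$-dimensional $\mathcal{L}_0$.
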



\paragraph{Hodge theory}
Throughout this paper we utilize discrete Hodge theory and recommend the survey by Lim \cite{hodge_survey} as an introduction to the topic.
In particular, we use the Hodge decomposition which can be stated as a result on real valued matrices satisfying $AB=0$.
\begin{theorem}[Hodge decomposition~\cite{hodge_survey}]\label{thm:hodge}
Let $A \in \R^{m \times n}$ and $B \in \R^{n \times p}$ be matrices satisfying $AB = 0$.
We can decompose $\R^n$ into the orthogonal direct sum, \[\R^n = \im(A^T) \oplus \ker(A^TA + BB^T) \oplus \im(B).\]
\end{theorem}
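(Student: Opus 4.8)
The plan is to set $L := A^{T}A + BB^{T}$ and reduce the whole statement to the fundamental theorem of linear algebra together with a single positive-semidefiniteness observation. First I would record the immediate consequences of the hypothesis $AB = 0$: it says exactly that $\im(B) \subseteq \ker(A)$, and transposing gives $B^{T}A^{T} = 0$, i.e. $\im(A^{T}) \subseteq \ker(B^{T})$. I also recall the orthogonal-complement relations $\ker(A) = \im(A^{T})^{\perp}$ and $\ker(B^{T}) = \im(B)^{\perp}$, both taken inside $\R^{n}$.

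The key step is the identity $\ker(L) = \ker(A) \cap \ker(B^{T})$. Here $L$ is symmetric, and for any $x \in \R^{n}$ we have $x^{T}Lx = \|Ax\|^{2} + \|B^{T}x\|^{2} \ge 0$; hence $Lx = 0$ if and only if $x^{T}Lx = 0$ if and only if $Ax = 0$ and $B^{T}x = 0$, while the reverse inclusion is immediate since $Lx = A^{T}(Ax) + B(B^{T}x)$. This is the one place where the argument is more than formal bookkeeping, so I would flag it as the crux, though it is short.

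Mutual orthogonality of the three summands then follows mechanically: $\im(B) \subseteq \ker(A) = \im(A^{T})^{\perp}$ gives $\im(A^{T}) \perp \im(B)$, and $\ker(L) \subseteq \ker(A) = \im(A^{T})^{\perp}$ together with $\ker(L) \subseteq \ker(B^{T}) = \im(B)^{\perp}$ handle the remaining two orthogonalities. In particular the sum $W := \im(A^{T}) + \ker(L) + \im(B)$ is direct. To see that $W$ exhausts $\R^{n}$, take any $y \perp W$: from $y \perp \im(A^{T})$ we get $y \in \ker(A)$, from $y \perp \im(B)$ we get $y \in \ker(B^{T})$, so $y \in \ker(A) \cap \ker(B^{T}) = \ker(L)$ by the key step; but $y \perp \ker(L)$ as well, forcing $\langle y, y\rangle = 0$ and hence $y = 0$. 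Thus $W^{\perp} = \{0\}$, so $W = \R^{n}$, and the decomposition is the asserted orthogonal direct sum.

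As an alternative ending I could instead do a rank count: since $\im(B) \subseteq \ker(A)$, one has $\ker(A) = \im(B) \oplus \bigl(\ker(A) \cap \im(B)^{\perp}\bigr)$, which yields $\dim\ker(L) = \dim\ker(A) - \operatorname{rank}(B)$ and hence $\operatorname{rank}(L) = \operatorname{rank}(A) + \operatorname{rank}(B)$, so the three dimensions add up to $n$; combined with mutual orthogonality this again gives the decomposition. I prefer the orthogonal-complement finish since it avoids the dimension arithmetic, and I do not expect any genuine obstacle beyond keeping track of which complements are formed inside $\R^{n}$.
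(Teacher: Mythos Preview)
Your argument is correct. The identification $\ker(A^{T}A+BB^{T})=\ker(A)\cap\ker(B^{T})$ via the positive-semidefiniteness observation $x^{T}Lx=\|Ax\|^{2}+\|B^{T}x\|^{2}$ is exactly the crux, and both of your finishes (the orthogonal-complement argument and the rank count) are valid.

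There is nothing to compare against: the paper does not prove this theorem but merely quotes it from Lim's survey~\cite{hodge_survey} as background in the preliminaries. Your write-up is in fact the standard proof one finds in that reference, so in that sense it matches the cited source.
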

Setting $A = \partial_{d}$ and $B = \partial_{d+1}$ yields the Hodge decomposition for simplicial complexes. 
The middle term of the direct sum becomes $\ker(\delta_{d+1} \partial_{d+1} + \partial_d \delta_d)$. The linear operator $\delta_{d+1}\partial_{d+1} + \partial_d \delta_d$ is known as the combinatorial Laplacian of $\K$ which is a generalization of the graph Laplacian.  
Moreover, it can be shown that $\ker(\delta_{d+1} \partial_{d+1} + \partial_d \delta_d) \cong H_d(\K, \R)$.
We now state the Hodge decomposition on simplicial complexes as the following isomorphism:
\[C_d(\K, \R) \cong \im(\delta_d) \oplus H_d(\K, \R) \oplus \im(\partial_{d+1}).\]

\section{Flows and cuts}\label{sec:cutflow}
In this section we give an overview of our generalizations of flows and cuts from graphs to simplicial complexes.
Flows and cuts in higher dimensional settings have been studied previously. Duval, Klivans, and Martin have generalized cuts and flows to the setting of CW complexes~\cite{Duval2015}. Their definitions are algebraic; defining cuts to be elements of $\im(\delta)$ and flows to be elements of $\ker(\partial)$.
Our definitions are closely related, but are motivated by the algorithmic problems of computing maximum flows and minimum cuts.
In Section~\ref{subsec:top} we give definitions of flows and cuts from from the perspective of algebraic topology, and in Section~\ref{subsec:comb} we give a combinatorial definition of a cut in a simplicial complex.
The distinction between the two types of cuts will be important when formulating the minimum cut problem on simplicial complexes.

\subsection{Topological flows and cuts}\label{subsec:top}

First we briefly recall the definition of an $st$-flow in a directed graph $G=(V,E)$. An $st$-flow $f$ is a function $f \colon E \rightarrow \R$ satisfying the conservation of flow constraint: for all $v \in V \setminus \{s, t\}$ we have $\sum_{(u, v) \in E} f(u, v) = \sum_{(v, u) \in E} f(v, u)$. That is, the amount of flow entering the vertex equals the amount of flow leaving the vertex. 
The value of $f$ is equal to the amount of flow leaving $s$ (or equivalently, entering $t$).
Alternatively, we may view $f$ as a 1-chain and we have $\partial f = k(t - s)$ where $k$ is the value of $f$.
Note that $t - s$ is a null-homologous 0-cycle. 
More generally, for any null-homologous $(d-1)$-cycle $\gamma$ we call a $d$-chain $f$ with $\partial f = k \gamma$ a $\gamma$-flow of value $k$.
Note that under our naming convention an \say{$st$-flow} in a graph would be called a $(t-s)$-flow. However, in the case of graphs we use the traditional naming convention and call a flow from $s$ to $t$ an $st$-flow.

\begin{definition}[$\gamma$-flow]
Let $\K$ be a $d$-dimensional simplicial complex and $\gamma$ be a null-homologous $(d-1)$-cycle in $\K$. A \textbf{$\gamma$-flow} is a $d$-chain $f$ with $\partial f = k \gamma$ where $k \in \R$. We call $k$ the value of the flow $f$ and denote the value of $f$ by $\|f\|$.
We say that $f$ is feasible with respect to a capacity function $c \colon \K^d \rightarrow \R^+$ if $0 \leq f(\sigma) \leq c(\sigma)$ for all $\sigma \in \K^d$.
\end{definition}

Our definition of a $\gamma$-flow is very close to the algebraic definition which is element of $\ker(\partial)$. Given a simplicial complex $\K$ and a $\gamma$-flow $f$ of value $k$ we convert $f$ into a \textit{circulation}, where a circulation is defined to be an element of $\ker(\partial)$.
To convert $f$ into a circulation we add an additional basis element to $C_d(\K)$, call it $\Sigma$, whose boundary is $\partial \Sigma = -\gamma$. This operation is purely algebraic; we should think of it as operating on the chain complex rather than the underlying topological space. 
Now we construct the circulation $f' = f + k\Sigma$. We call any circulation built from a $\gamma$-flow a $\gamma$-circulation. Clearly, $f' \in \ker(\partial)$ in the new chain complex. Moreover, there is a clear bijection between $\gamma$-flows and $\gamma$-circulations.
The value of the circulation is the value of $f'(\Sigma)$, so this bijection preserves the value.

We now shift our focus to the generalization of cuts to a simplicial complex.
The algebraic definition, elements of $\im(\delta)$, is natural. The cut space of a graph is commonly defined to be the space spanned by the coboundaries of each vertex.
In a simplicial complex $\K$, removing the support of a $d$-chain in $\im(\delta)$ increases $\dim H_{d-1}(\K)$.
In a graph $G$, removing the support of any 1-chain in $\im(\delta)$ increases $\dim H_0(G)$ which is equivalent to increasing the number of connected components of $G$.

The above definition implies that a cut is a $d$-chain in a $d$-dimensional simplicial complex. However, for our purposes we will define a cut to be a $(d-1)$-cochain.
To motivate our definition we recall the notion of an $st$-cut in a graph.
An $st$-cut in a graph is a partition of the vertices into sets $S$ and $T$ such that $s \in S$ and $t \in T$. Define $p \colon V(G) \rightarrow \{0, 1\}$ such that $p(v) = 1$ if $v \in S$ and $p(v) = 0$ if $v \in T$. The support of the coboundary of $p$ is a set of edges whose removal destroys all $st$-paths. 
That is, upon removing the support, the 0-cycle $t - s$ is no longer null-homologous.
Moreover, $p$ is a $0$-cochain with $p(t - s) = -1$. The sign of $p(t - s)$ will be important when we consider directed cuts. With this in mind we define our notion of a $\gamma$-cut.

\begin{definition}[$\gamma$-cut]
Let $\K$ be a $d$-dimensional simplicial complex with weight function $c \colon \K^d \rightarrow \R^+$ and $\gamma$ be a null-homologous $(d-1)$-cycle in $\K$. A \textbf{$\gamma$-cut} is a $(d-1)$-cochain $p$ such that $p(\gamma) = -1$.
Denote the coboundary of $p$ as the formal sum $\delta(p) = \sum \alpha_i \sigma_i$, we define the size of a $\gamma$-cut to be $\|p\| = \sum |\alpha_i c(\sigma_i)|$.
\end{definition}

Because of the requirement that $p(\gamma) = -1$ we call $p$ a \textit{unit $\gamma$-cut}.
By relaxing this requirement to $p(\gamma) < 0$ the cochain $p$ still behaves as a $\gamma$-cut, but its size can become arbitrarily small by multiplying by some small value $\epsilon > 0$.
We justify our definition with the following proposition which shows that removing the support of the coboundary of a $\gamma$-cut prevents $\gamma$ from being null-homologous.

\begin{proposition}\label{prop:cut}
Let $\K$ be $d$-dimensional simplicial complex and $p$ be a $\gamma$-cut. The cycle $\gamma$ is not null-homologous in the subcomplex $\K \setminus \supp(\delta(p))$.
\end{proposition}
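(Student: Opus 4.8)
The plan is to argue by contradiction. Write $\K' := \K \setminus \supp(\delta(p))$ and suppose, contrary to the claim, that $\gamma$ is null-homologous in $\K'$; I will deduce $p(\gamma) = 0$, contradicting the defining property $p(\gamma) = -1$ of a $\gamma$-cut.

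First I would check that the statement is not vacuous, that is, that $\gamma$ is still a $(d-1)$-cycle in $\K'$. Forming $\K'$ only deletes $d$-simplices, and deleting $d$-simplices preserves downward closure, so the $(d-1)$-skeleton of $\K'$ equals that of $\K$; in particular $C_{d-1}(\K') = C_{d-1}(\K)$, the chain $\gamma$ is supported in $\K'$, and $\partial \gamma = 0$ still holds. Hence it is meaningful to ask whether $\gamma \in B_{d-1}(\K')$.

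Now assume $\gamma \in B_{d-1}(\K')$, so there is a $d$-chain $f \in C_d(\K')$ with $\partial f = \gamma$. Using the definition $\delta p = p \circ \partial$ of the coboundary, I compute
\[
p(\gamma) = p(\partial f) = (\delta p)(f) = \delta(p)(f).
\]
Write $\delta(p) = \sum_i \alpha_i \sigma_i$ as a formal sum of $d$-simplices; by definition every $\sigma_i$ belongs to $\supp(\delta(p))$ and so has been removed from $\K'$. Since $f$ is supported on $d$-simplices of $\K'$, it assigns coefficient $0$ to each $\sigma_i$, whence $\delta(p)(f) = \sum_i \alpha_i f(\sigma_i) = 0$. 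Therefore $p(\gamma) = 0$, contradicting $p(\gamma) = -1$, and the proposition follows.

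I do not anticipate a genuine obstacle here: the argument is essentially the adjointness $\delta = \partial^{T}$ combined with the disjointness $\supp(f) \cap \supp(\delta(p)) = \emptyset$. The only points that call for a little care are (i) confirming that passing to the subcomplex leaves $\gamma$ a cycle, so the conclusion has content, and (ii) being consistent about viewing $\delta(p)$ simultaneously as a $d$-cochain and as a formal sum of $d$-simplices, so that its support and its pairing with $f$ are well defined.
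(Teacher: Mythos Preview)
Your argument is correct and is essentially identical to the paper's: both proceed by contradiction, assume a $d$-chain $\Gamma$ in $\K\setminus\supp(\delta(p))$ with $\partial\Gamma=\gamma$, use the adjunction $p(\partial\Gamma)=\langle\Gamma,\delta(p)\rangle$, and observe this pairing vanishes since $\supp(\Gamma)\cap\supp(\delta(p))=\emptyset$. Your extra remark that deleting only $d$-simplices leaves $\gamma$ a $(d-1)$-cycle is a nice sanity check the paper omits.
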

\begin{proof}
By way of contradiction let $\Gamma$ be a $d$-chain in $\K \setminus \supp(\delta(p))$ such that $\partial \Gamma = \gamma$. 
Since $\delta(p) = 0$ in $\K \setminus \supp(\delta(p))$, we have that $\langle \Gamma, \delta(p) \rangle = 0$ in $\K \setminus \supp(\delta(p))$. However, this implies that $0 = \langle \Gamma, \delta(p) \rangle = \langle \partial \Gamma, p \rangle = \langle \gamma, p \rangle = p(\gamma) = 0$, a contradiction as $p$ is a $\gamma$-cut and $p(\gamma) = -1$. 
\end{proof}

\subsection{Combinatorial cuts}\label{subsec:comb}
Alternatively, we can view a $\gamma$-cut as a discrete set of $d$-simplices rather than a $d$-chain. 
In the case of graphs a combinatorial $st$-cut is just a set of edges whose removal disconnects $s$ from $t$.
This distinction will become important when we consider the minimization problem of finding a minimum cost set of $d$-simplices whose removal prevents $\gamma$ from being null-homologous.

\begin{definition}[Combinatorial $\gamma$-cut]
Let $\K$ be a $d$-dimensional simplicial complex with weight function $c \colon \K^d \rightarrow \R^+$ and $\gamma$ be a null-homologous $(d-1)$-cycle in $\K$. A \textbf{combinatorial $\gamma$-cut} is a set of $d$-simplices $C \subseteq \K^d$ such that $\gamma$ is not null-homologous in $\K \setminus \supp(C)$.
The size of a combinatorial $\gamma$-cut is defined by the sum of the weights of the $d$-simplices $\|C\| = \sum_{\sigma \in C} c(\sigma)$.
\end{definition}

The next proposition shows a relationship between $\gamma$-cuts and combinatorial $\gamma$-cuts. Removing a combinatorial $\gamma$-cut $C$ from $\K$ increases $\dim H_{d-1}(\K)$. 
This is because removing $C$ must decrease the rank of $\partial_d$ and by duality this also decreases the rank of $\delta_d$ which increases the dimension of $H^{d-1}(\K) \cong H_{d-1}(\K)$.
It follows that $C$ must contain the support of some coboundary. Given an additional minimality condition on $C$ we show that $C$ is equal to the support of some coboundary.

\begin{proposition}
Let $C$ be a combinatorial $\gamma$-cut in a $d$-dimensional simplicial complex $\K$. 
Further, assume that $C$ is minimal in the sense that for all $C' \subset C$ we have $\dim H_{d-1}(\K \setminus C') < \dim H_{d-1}(\K \setminus C)$.
There exists a $(d-1)$-cochain $p$ such that $\supp(\delta(p)) = C$.
\end{proposition}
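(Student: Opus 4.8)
The plan is to reduce the statement to linear algebra on the boundary matrix $\partial = \partial_d$, viewed as the $n_{d-1}\times n_d$ matrix whose column indexed by a $d$-simplex $\sigma$ is the chain $\partial\sigma \in C_{d-1}(\K)$. For $S \subseteq \K^d$ write $\partial_S$ for the submatrix of $\partial$ keeping only the columns indexed by $S$, and write $\bar C = \K^d \setminus C$. Identifying $C^{d-1}(\K)$ with $C_{d-1}(\K)$ via the standard inner product (so that the coefficient of $\sigma$ in $\delta(p)$ equals $\langle \partial\sigma, p\rangle$, exactly as in the proof of Proposition~\ref{prop:cut}), the goal becomes: find $p \in C_{d-1}(\K)$ with $\langle \partial\sigma, p\rangle = 0$ for every $\sigma \notin C$ and $\langle \partial\sigma, p\rangle \neq 0$ for every $\sigma \in C$.

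First I would restate the minimality hypothesis. Deleting $d$-simplices from $\K$ does not alter the $(d-1)$-skeleton, hence leaves $Z_{d-1}(\K) = \ker\partial_{d-1}$ unchanged, and only replaces $B_{d-1} = \im\partial_d$ by $\im\partial_{\bar{C'}}$. Therefore, for every $C' \subseteq C$,
\[ \dim H_{d-1}(\K \setminus C') = \dim Z_{d-1}(\K) - \operatorname{rank}\partial_{\bar{C'}}, \]
so the hypothesis ``$\dim H_{d-1}(\K\setminus C') < \dim H_{d-1}(\K\setminus C)$ for all $C'\subsetneq C$'' is equivalent to ``$\operatorname{rank}\partial_{\bar{C'}} > \operatorname{rank}\partial_{\bar C}$ for all $C'\subsetneq C$.'' Specializing to $C' = C\setminus\{\sigma\}$ for each $\sigma \in C$ gives $\operatorname{rank}\partial_{\bar C \cup \{\sigma\}} > \operatorname{rank}\partial_{\bar C}$, i.e.\ the column $\partial\sigma$ does not lie in $W := \operatorname{span}\{\partial\tau : \tau\in\bar C\}$.

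Next I would build $p$ by a dimension count. The condition $\langle\partial\sigma,p\rangle=0$ for all $\sigma\notin C$ is exactly $p\in W^\perp$. For a fixed $\sigma\in C$, since $\partial\sigma \notin W = (W^\perp)^\perp$, there is some $q\in W^\perp$ with $\langle\partial\sigma,q\rangle\neq 0$, so $H_\sigma := \{p\in W^\perp : \langle\partial\sigma,p\rangle = 0\}$ is a proper subspace of $W^\perp$. As $C$ is finite and $\R$ is infinite, $W^\perp$ is not the union of the finitely many proper subspaces $\{H_\sigma\}_{\sigma\in C}$; choosing $p\in W^\perp \setminus \bigcup_{\sigma\in C}H_\sigma$ yields $\langle\partial\sigma,p\rangle\neq 0$ for every $\sigma\in C$, hence $\supp(\delta(p)) = C$.

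The only real content lies in the first step: recognizing that the minimality hypothesis says precisely that each column $\partial\sigma$ with $\sigma\in C$ is linearly independent of the columns outside $C$. The remainder is the standard fact that a finite-dimensional vector space over an infinite field is not a finite union of proper subspaces, together with the routine bookkeeping that only $B_{d-1}$, and not $Z_{d-1}$, changes when top-dimensional simplices are removed — which is where I would be most careful, though no genuine obstacle arises there.
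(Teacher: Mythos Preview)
Your proof is correct and takes a genuinely different route from the paper's.

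The paper argues by extremality: it picks a $(d-1)$-cochain $p$ with $\supp(\delta(p))\subseteq C$ of maximum support size, sets $C' = C\setminus\supp(\delta(p))$, and uses the minimality hypothesis to manufacture a second cochain $p'$ whose coboundary is supported in $C$ but meets $\supp(\delta(p))$; adding $p'$ to $p$ is then meant to enlarge the support and yield a contradiction. Your argument instead isolates the linear-algebraic content of the hypothesis up front---minimality says exactly that for each $\sigma\in C$ the column $\partial\sigma$ lies outside $W=\operatorname{span}\{\partial\tau:\tau\notin C\}$---and then finishes with the one-line genericity fact that a real vector space is not a finite union of proper subspaces. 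This is cleaner: it makes transparent precisely which consequence of minimality is being used (only the singleton deletions $C'=C\setminus\{\sigma\}$ matter), it sidesteps the delicate point in the additive argument that $\delta(p)$ and $\delta(p')$ could cancel on their overlap, and it works verbatim over any infinite field. The paper's approach, by contrast, is closer in spirit to an algorithm (iteratively grow the support), which fits the paper's overall optimization viewpoint, but as a bare existence proof yours is more direct.
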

\begin{proof}
Recall that $H_{d-1}(\K) = \ker(\partial_{d-1})/\im(\partial_d)$, $H^{d-1} = \ker(\delta_d)/\im(\delta_{d-1})$, and over real coefficients we have the isomorphism $H_{d-1}(\K) \cong H^{d-1}(\K)$.
Removing the set of $d$-simplices $C$ from $\K$ does not affect $\ker(\partial_{d-1})$ or $\im(\delta_{d-1})$.
However, as $C$ is a combinatorial $\gamma$-cut it must decrease the dimension of $\im(\partial_d)$ and increase the dimension of $\ker(\delta_d)$.
It follows that there must exist some $(d-1)$-cochain $p$ such that $\delta_d(p) \neq 0$ in $\K$ but $\delta_d(p) = 0$ in $\K \setminus C$. 
Hence, $\supp(\delta_d(p)) \subseteq C$.
Without loss of generality we choose $p$ such that $|\supp(\delta_d(p))|$ is maximized.
Define $C' \coloneqq C \setminus |\supp(\delta_d(p))|$.
By minimality we have $\dim H_{d-1}(\K \setminus C') < \dim H_{d-1}(\K \setminus C)$, so there is some $(d-1)$-cycle $\gamma'$ that is null-homologous in $\K \setminus C'$ but not null-homologous in $\K \setminus C$.
Dually, there is some $(d-1)$-cochain $p'$ with $\delta_d(p') \neq 0$ in $\K \setminus C'$ but $\delta_p(p') = 0$ in $\K \setminus C$.
It follows that $\delta_d(p + p') = 0$ in $\K \setminus C$ but $\delta_d(p + p') \neq 0$ in $\K$, so $\supp(p + p') \subseteq C$.
Since $\supp(p)$ and $\supp(p')$ are disjoint we have $|\supp(p)| < |\supp(p + p')|$, contradicting our assumption that $p$ maximizes $|\supp(p)|$.
\end{proof}

In graphs the linear program solving the minimum $st$-cut problem takes as input a directed graph and returns a set of directed edges whose removal destroys all directed $st$-paths. This is called a directed cut. After removing the directed cut the 0-cycle $t-s$ may still be null-homologous; we can find a 1-chain with boundary $t-s$ using negative coefficients to traverse an edge in the backwards direction.
In order to generalize the minimum cut linear program to simplicial complexes we will need to define a directed combinatorial $\gamma$-cut, which requires the additional assumption that the $d$-simplices of $\K$ are oriented.

\begin{definition}[Directed combinatorial $\gamma$-cut]
Let $\K$ be an oriented $d$-dimensional simplicial complex with weight function $c \colon \K^d \rightarrow \R$ and $\gamma$ be a null-homologous $(d-1)$-cycle in $\K$. A \textbf{directed combinatorial $\gamma$-cut} is a set of $d$-simplices $C \subset \K^d$ such that in $\K \setminus \supp(C)$ there exists no $d$-chain $\Gamma$ with non-negative coefficients such that $\partial\Gamma = \gamma$.
The size of a directed combinatorial $\gamma$-cut is defined by the sum of the weights of the $d$-simplices $\|C\| = \sum_{\sigma \in C} c(\sigma)$.
\end{definition}

Given a directed graph consider an $st$-cut given by the cochain definition. That is, a 0-cochain $p \colon V(G) \rightarrow \{0, 1\}$ with $p(s) = 1$ and $p(t) = 0$ partitioning $V$ into $S$ and $T$.
The support of $\delta(p)$ consists of two types of edges: edges leaving $S$ and entering $T$, and edges leaving $T$ and entering $S$. If $e \in E$ leaves $S$ and enters $T$ we have $(p \circ \partial)(e) = -1$ and if $e$ leaves $T$ and enters $S$ we have $(p \circ \partial)(e) = 1$. To construct a directed $st$-cut we simply take all of the edges mapped to $-1$.
The following proposition shows that we can build a directed combinatorial $\gamma$-cut from a coboundary just like in the case of directed graphs.

\begin{proposition}\label{prop:dircomcut}
Let $p$ be a $\gamma$-cut with coboundary $\delta(p) = \sum \alpha_i \sigma_i$. The set of $d$-simplices $C = \{ \sigma_i \mid \alpha_i < 0 \}$ is a directed combinatorial $\gamma$-cut.
\end{proposition}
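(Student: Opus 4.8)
The plan is to argue by contradiction using exactly the same pairing trick that appeared in the proof of Proposition~\ref{prop:cut}, but now tracking signs carefully so that non-negativity of the coefficients of the putative flow clashes with the sign of $p(\gamma)$. Suppose $C = \{\sigma_i \mid \alpha_i < 0\}$ were \emph{not} a directed combinatorial $\gamma$-cut. Then there exists a $d$-chain $\Gamma$ supported in $\K \setminus \supp(C)$, with all coefficients non-negative, such that $\partial\Gamma = \gamma$. View $\Gamma$ simultaneously as a $d$-chain in $\K$ (assigning coefficient $0$ to every simplex of $C$); the identity $\partial\Gamma = \gamma$ still holds in the chain complex of $\K$, since the boundary of a $d$-simplex not in $C$ is computed the same way in $\K$ and in $\K\setminus\supp(C)$.

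Next I would evaluate the inner product $\langle \Gamma, \delta(p)\rangle$ in the chain complex of $\K$ in two ways. On one hand, by the adjointness $\delta = \partial^{T}$ we get $\langle \Gamma, \delta(p)\rangle = \langle \partial\Gamma, p\rangle = \langle \gamma, p\rangle = p(\gamma) = -1$, using that $p$ is a $\gamma$-cut. On the other hand, writing $\Gamma = \sum_i \Gamma_i \sigma_i + (\text{terms on simplices not among the }\sigma_i)$, and noting that $\delta(p) = \sum_i \alpha_i \sigma_i$ is supported on the $\sigma_i$, we get $\langle \Gamma, \delta(p)\rangle = \sum_i \alpha_i \Gamma_i$. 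Now split the sum by the sign of $\alpha_i$: terms with $\alpha_i = 0$ vanish; terms with $\alpha_i < 0$ vanish because then $\sigma_i \in C$ and $\Gamma$ is supported away from $C$, so $\Gamma_i = 0$; and terms with $\alpha_i > 0$ satisfy $\alpha_i \Gamma_i \ge 0$ because $\Gamma_i \ge 0$ by hypothesis. Hence $\langle \Gamma, \delta(p)\rangle \ge 0$, contradicting $\langle \Gamma, \delta(p)\rangle = -1$.

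I do not anticipate a genuine obstacle here; the only point that needs care is the bookkeeping of which pairing is taken in which complex, i.e.\ making explicit that $\Gamma$ and its boundary can be transported to the full complex $\K$ so that the adjointness $\langle \Gamma,\delta(p)\rangle = \langle \partial\Gamma, p\rangle$ is applied over $\K$ (where $\delta(p) = \sum \alpha_i\sigma_i$ is defined), while the support condition on $\Gamma$ is what was inherited from $\K\setminus\supp(C)$. Once that is pinned down, the sign analysis is immediate. As a concluding remark it may be worth noting the analogy with the graph case discussed just before the statement: the simplices with $\alpha_i < 0$ are precisely the higher-dimensional analog of the edges of $\delta(p)$ oriented "from $S$ into $T$", which are exactly the ones that must be deleted to forbid a non-negative (i.e.\ forward-traversing) flow realizing $\gamma$.
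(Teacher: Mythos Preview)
Your argument is correct and follows the same contradiction-via-pairing strategy as the paper. The only difference is packaging: the paper introduces an auxiliary basis element $\Sigma$ with $\partial\Sigma=-\gamma$, notes that $\Gamma+\Sigma$ is a $d$-cycle, and then invokes the Hodge decomposition to conclude $\langle \Gamma+\Sigma,\delta(p)\rangle=0$, whence $\langle\Gamma,\delta(p)\rangle=-\langle\Sigma,\delta(p)\rangle=p(\gamma)=-1$. You reach the same equality $\langle\Gamma,\delta(p)\rangle=-1$ in one line via the adjointness $\langle\Gamma,\delta(p)\rangle=\langle\partial\Gamma,p\rangle$, without needing $\Sigma$ or the Hodge decomposition. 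Your route is slightly more elementary; the paper's version has the minor expository advantage of reusing the $\Sigma$-construction and Hodge orthogonality that appear elsewhere in the paper.
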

\begin{proof}
By way of contradiction let $\Gamma$ be a non-negative $d$-chain in $\K \setminus C$ with $\partial \Gamma = \gamma$.
By the definition of $C$ we have $\langle \Gamma, \delta(p) \rangle \geq 0$ in $\K \setminus C$.
Construct a new chain complex by adding an additional basis element $\Sigma$ to $C_d(\K)$ such that $\partial \Sigma = - \gamma$. By construction $\langle \Sigma, \delta(p) \rangle = -p(\gamma) = 1$, hence we have $\langle \Gamma + \Sigma, \delta(p) \rangle > 0$.
However, $\Gamma + \Sigma$ is a $d$-cycle and $\delta(p)$ is a $d$-coboundary, so the Hodge decomposition ensures that they are orthogonal. Hence, $\langle \Gamma + \Sigma, \delta(p) \rangle = 0$, a contradiction.
\end{proof}

To conclude the section we will show that computing a minimum combinatorial $\gamma$-cut is \NP-hard. As we will see in Section~\ref{sec:linprog} minimum topological $\gamma$-cuts can be computed with linear programming.
Our hardness result holds for both the directed and undirected cases.
Our hardness result is a reduction from the well-known \NP-hard hitting set problem which we will now define.
Given a set $S$ and a collection of subsets $\Sigma = (S_1,\dots,S_n)$ where $S_i \subseteq S$ the hitting set problem asks to find the smallest subset $S' \subseteq S$ such that $S' \cap S_i \neq \emptyset$ for all $S_i$. We call such a subset $S'$ a \textit{hitting set} for $(S, \Sigma)$.

\begin{theorem}\label{thm:combcuthard}
Let $\K$ be a $d$-dimensional simplicial complex and $\gamma$ be a null-homologous $(d-1)$-cycle. Computing a minimum combinatorial $\gamma$-cut is \NP-hard for $d \geq 2$.
\end{theorem}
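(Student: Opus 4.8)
The plan is to reduce from the \NP-hard hitting set problem, carrying out the construction directly in dimension $d$; I describe $d=2$, the cases $d>2$ being identical with triangles replaced by $d$-simplices and the disk gadgets below by $d$-ball gadgets (see also the last paragraph). Given an instance $(S,\Sigma)$ with $\Sigma=(S_1,\dots,S_n)$, I would build a polynomial-size $2$-complex $\K$ with a null-homologous $1$-cycle $\gamma$ and capacities $c$ as follows. Include a triangle $\sigma_e$ of capacity $1$ for each element $e\in S$, and give every other triangle of $\K$ the large capacity $N:=|S|+1$. The remainder of $\K$ is a gadget whose purpose is to provide, for each set $S_i$, a $2$-chain $D_i$ with $\partial D_i=\gamma$ whose only capacity-$1$ triangles are exactly $\{\sigma_e : e\in S_i\}$, each with coefficient $1$ (the rest of $\supp(D_i)$ being expensive filler), and — crucially — arranged so that $(\star)$: every $2$-chain $\Gamma$ with $\partial\Gamma=\gamma$ satisfies $\{\sigma_e : e\in S_i\}\subseteq\supp(\Gamma)$ for some $i$.

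Granting $(\star)$, correctness is short. Since each $S_i$ is nonempty, $(\star)$ shows that $\{\sigma_e : e\in S\}$ is already a combinatorial $\gamma$-cut, of cost $|S|<N$; hence no optimum cut uses an expensive triangle, and we may assume the cut is $C=\{\sigma_e : e\in S'\}$ for some $S'\subseteq S$, with $\|C\|=|S'|$. If $S'$ is not a hitting set, pick $i$ with $S_i\cap S'=\emptyset$: then $\supp(D_i)$ misses $C$ (its capacity-$1$ triangles lie in $\{\sigma_e:e\in S_i\}$, disjoint from $C$, and its other triangles are expensive), so $D_i$ keeps $\gamma$ null-homologous in $\K\setminus\supp(C)$ and $C$ is not a cut. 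Conversely, if $S'$ is a hitting set and $\partial\Gamma=\gamma$, then by $(\star)$ some $\{\sigma_e : e\in S_i\}\subseteq\supp(\Gamma)$, and since $S'$ hits $S_i$ we get $\supp(\Gamma)\cap C\neq\emptyset$; so $\gamma$ is not null-homologous in $\K\setminus\supp(C)$ and $C$ is a cut. Thus the minimum combinatorial $\gamma$-cut has cost equal to the minimum hitting set size, giving the reduction. The directed case goes through verbatim, reading ``non-negative $2$-chain $\Gamma$ with $\partial\Gamma=\gamma$'' everywhere and invoking the orthogonality of Theorem~\ref{thm:hodge} (cf.\ Proposition~\ref{prop:dircomcut}) when verifying the $D_i$ behave as claimed; the extra sign restriction only strengthens $(\star)$.

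The whole weight of the proof is in realizing a gadget for which $(\star)$ holds, and this is the step I expect to be the main obstacle. The obvious candidate — $n$ triangulated disks $D_1,\dots,D_n$, each bounding $\gamma$, glued to one another only along the $\sigma_e$ they share — does \emph{not} satisfy $(\star)$: because each $D_i-D_j$ is a $2$-cycle, every real combination $\sum_i\lambda_i D_i$ with $\sum_i\lambda_i=1$ is again a filling of $\gamma$, and allowing $\lambda_i<0$ lets the capacity-$1$ triangles cancel (for instance $-D_1+D_2+D_3$ contains none of them when $S_1=S_2\sqcup S_3$ and $S_2,S_3$ are singletons, even though $\{S_1,S_2,S_3\}$ then forces a hitting set of size $2$). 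So the gadget must be designed so that the part of the $2$-cycle space of $\K$ ``seen'' by the triangles $\sigma_e$ is controlled — via exactly which per-set pieces do and do not share each $\sigma_e$, together with a careful choice of orientations — so that no such cancellation can occur; verifying $(\star)$ for the resulting complex, via the Hodge decomposition or an explicit description of $Z_2(\K)$, is the technical heart of the argument.

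Finally, for \NP-hardness in every dimension $d\ge 2$ (undirected case) one may alternatively start from the $d=2$ instance and pass to the suspension $\widehat\K=\K*\{p,q\}$, a $(d{+}1)$-complex, with the null-homologous $d$-cycle $\widehat\gamma=p*\gamma-q*\gamma$: one checks that $\partial(q*F-p*F)=\widehat\gamma$ and that the fillings of $\widehat\gamma$ in $\widehat\K$ are exactly the chains $q*F-p*F$ for fillings $F$ of $\gamma$, with support $\{p*\tau,\,q*\tau : \tau\in\supp(F)\}$. Setting $c(p*\tau)=c(q*\tau)=c(\tau)$, the maps $C_0\mapsto\{q*\tau : \tau\in C_0\}$ and $C\mapsto\{\tau : \{p*\tau,q*\tau\}\cap C\neq\emptyset\}$ then exhibit a cost-preserving correspondence between combinatorial $\gamma$-cuts of $\K$ and combinatorial $\widehat\gamma$-cuts of $\widehat\K$, so the minimum-cut value is unchanged; iterating propagates the hardness to all $d\ge 2$ (for the directed case one instead repeats the dimension-$d$ construction directly).
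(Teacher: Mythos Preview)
Your reduction from Hitting Set is the same route the paper takes, and your identification of property $(\star)$ as the crux is exactly right. The gap is that you never construct a complex satisfying $(\star)$: you show the obvious glued-disk gadget fails it via the $-D_1+D_2+D_3$ example, then declare the correct gadget ``the technical heart of the argument'' and stop. That missing construction is the entire content of the reduction, so what you have is an outline, not a proof.

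It is worth noting where the paper stands on this point. The paper uses no capacities: each disk $D_i$ is a single $d$-simplex with boundary $\gamma$, iteratively subdivided until it has one sub-simplex $t_{i,s}$ per $s\in S_i$ (plus at most one spare $t'_i$), and then the $t_{i,s}$ with common $s$ are identified across the $D_i$; all weights are $1$. But the paper's correctness argument only establishes the direction ``minimum cut $\Rightarrow$ hitting set'' (each $D_i$ is itself a filling of $\gamma$, so any cut must meet it), and never verifies that a hitting set gives a cut---i.e., never verifies your $(\star)$. Your $-D_1+D_2+D_3$ objection applies verbatim to the paper's gadget (and the simplex identifications across disks of different sizes are already delicate to make well-defined as a simplicial complex). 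So the subtlety you isolate is genuine and is not explicitly resolved in the paper either; the difference is only that the paper commits to a specific construction while you defer one. Your suspension argument for lifting hardness to higher $d$ is sound and is a cleaner alternative to the paper's per-dimension subdivision, but it too rests on the unfinished $d=2$ base case.
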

\begin{proof}
Our proof is a reduction from the hitting set problem.
First we consider the case when $d=2$ then we generalize to any $d \geq 2$.
Let $S$ be a set and $\Sigma = (S_1,\dots,S_n)$ where each $S_i \subseteq S$.
We construct a $2$-dimensional simplicial complex $\K$ from $S$ and $\Sigma$ in the following way.
For each $S_i \in \Sigma$ construct a triangulated disk $\D_i$ such that $\partial \D_i = \gamma$. That is, each $\D_i$ shares the common boundary $\gamma$.
To accomplish this we construct each $\D_i$ by beginning with a single triangle $t$ with $\partial t = \gamma$ and repeatedly adding a new vertex in the center of some triangle with edges connecting it to every vertex in that triangle. 
By this process we can construct a disk containing any odd number of triangles as each step increments the number of triangles in the disk by two. 
Moreover, at each step the boundary of the disk is always $\gamma$.
We construct each disk $\D_i$ such that $\D_i$ consists of one triangle $t_{i,s}$ for each element $s \in S_i$ and potentially one extra triangle $t'_i$ in the case that $|S_i|$ is even.
Next, for each $s \in S$ and $S_i$ with $s \in S_i$, we construct the quotient space by identifying each $t_{i,s}$ into a single triangle.
A minimum combinatorial $\gamma$-cut $C$ must contain at least one triangle from each $\D_i$ and without loss of generality we can assume $C$ does not contain any $t'_i$. 
If $t_i' \in C$ then by minimality it is the only triangle in $C \cap \supp(\D_i)$ and we can swap it with any other triangle in $\D_i$ without increasing the size of the cut.
By construction $C$ is a hitting set for $(S, \Sigma)$ since each $C \cap \supp(\D_i) \neq \emptyset$ for all $\D_i$.

To perform the above construction in higher dimensions we simply start with a single $d$-simplex $\sigma = [v_1,\dots,v_{d+1}]$ with boundary $\partial\sigma = \gamma$.
We subdivide $\sigma$ in the following way: add an additional vertex $v_{d+2}$ and replace $\sigma$ with the $d$-simplices $\sigma_i := [v_1,\dots,v_{i-1},v_{d+2},v_{i+1},\dots,v_{d+1}]$ for each $1 \leq i \leq d+1$. 
At each step we increase the number of $d$-simplices by $d$; moreover, at each step the complex remains homeomorphic to a $d$-dimensional disk. 
Our final complex will have at most $d$ extra $d$-simplices so for any fixed dimension $d$ the size of the complex is within a constant factor of the size of the given hitting set instance.
It remains to show that the subdivision process does not change the boundary of the disk.
To accomplish this we will show that $\partial \sigma = \sum_{i=1}^{d+1} \partial \sigma_i$.

For each $\sigma_i$ the boundary $\partial \sigma_i$ has $d$ terms in its sum; each term is a $(d-1)$-simplex which consists of $d$ vertices. Consider the matrix $A$ such that the entry $A_{i,j}$ contains the $j$th term of $\partial \sigma_i$. Note that no term in the $j$th column will contain the vertex $v_j$. Also note that the sum of the diagonal of $A$ is equal to $\partial \sigma$.
Below is an example for $d=3$.
\begin{equation*}
    \begin{bmatrix}
        v_2 v_3 v_4 & -v_5 v_3 v_4 & v_5 v_2 v_4 & -v_5 v_2 v_3\\
        v_5 v_3 v_4 & -v_1 v_3 v_4 & v_1 v_5 v_4 & -v_1 v_5 v_3\\
        v_2 v_5 v_4 & -v_1 v_5 v_4 & v_1 v_2 v_4 & -v_1 v_2 v_5\\
        v_2 v_3 v_5 & -v_1 v_3 v_5 & v_1 v_2 v_5 & -v_1 v_2 v_3
    \end{bmatrix}
\end{equation*}
The matrix $A$ is almost symmetric. The entries $A_{i,j}$ and $A_{j,i}$ contain the same vertices but possibly differ up to a sign or a permutation of the vertices. We want to show that $A_{i,j} = -A_{j,i}$ so that the sum $\sum_{i=1}^{d+1} \partial \sigma_i$ only contains diagonal which is equal to $\partial \sigma$.
The ordering of vertices in $A_{i,j}$ differs from $A_{j,i}$ by the placement of $v_{d+2}$, which we now characterize. Let $v_{i,j}$ denote the vertex in $\sigma_i$ that is not included in the term $A_{i,j}$.
Without loss of generality assume $A_{i,j}$ is in the upper triangle. Then $v_{d+2}$ is in the $i$th position of $A_{i,j}$ because $v_{i,j}$ appears after it in the ordering of $\sigma_i$.
It follows that $v_{d+2}$ is in the $(j-1)$th in the term $A_{j,i}$ since $v_{j,i}$ must appear before $v_{d+2}$ in the ordering of $\sigma_j$. So, the position of $v_{d+2}$ in $A_{i,j}$ and $A_{j,i}$ differs by $|i - j + 1|$ and this is the number of transpositions needed to permute $A_{i,j}$ into $A_{j,i}$.
When $i \equiv j \mod 2$ the terms $A_{i,j}$ and $A_{j,i}$ have the same sign, but differ by an odd permutation so $A_{i,j} = -A_{j,i}$. Similarly, when $i \not\equiv j \mod 2$ the terms $A_{i,j}$ and $A_{j,i}$ have opposite signs, but differ by an even permutation so $A_{i,j} = -A_{j,i}$ which concludes the proof.
\end{proof}

\section{Linear programming}\label{sec:linprog}
\subsection{Max-flow min-cut}
A \emph{simplicial flow network} is a tuple $(\K, c, \gamma)$ where $\K$ is an oriented $d$-dimensional simplicial complex, $c$ is the \emph{capacity function} which is a non-negative function $c \colon \K^d \rightarrow \R^{\geq 0}$, and $\gamma$ is a null-homologous $(d-1)$-cycle.
In a simplicial flow network we work with real coefficients; that is, we consider the chain groups $C_k(\K, \R)$.
In order to utilize the Hodge decomposition (Theorem~\ref{thm:hodge}) we modify $C_d(\K)$ by adding an additional basis element $\Sigma$ such that $\partial \Sigma = - \gamma$. Moreover, we extend the capacity function such that $c(\Sigma) = \infty$.
This allows us to work with circulations instead of flows while leaving the solution unchanged. The notation $n_d$ will refer to the number of basis elements in $C_d(\K, \R)$ which is now one more than the number of $d$-simplices in the underlying simplicial complex.

The goal of the maximum flow problem is to find a $d$-chain $f$ obeying the capacity constraints such that $\partial f = k \gamma$ where $k \in \R$ is maximized. 
Equivalently, we find a $d$-cycle $f$ which maximizes $f(\Sigma)$.
The linear program for the max-flow problem in a simplicial flow network is identical to the familiar linear program for graphs, but expressed in terms of the coboundary operator.
In a graph, conservation of flow at a vertex $v$ is the constraint $\delta_1(v) \cdot f = 0$; to formulate the linear program in higher dimensions we simply replace vertices with $(d-1)$-simplices.
The Hodge decomposition states that cycles are orthogonal to coboundaries, so conservation of flow ensures that $f$ is indeed a cycle. We now state the linear program for max-flow in a simplicial flow network.

\begin{equation}\label{lp:flow}\tag{LP1}
\begin{aligned}
\text{maximize}& \quad f(\Sigma)\\
\text{subject to}& \quad \delta(\tau) \cdot f = 0 &&\text{for each $\tau \in \K^{d-1}$}\\
&\quad 0 \leq f(\sigma) \leq c(\sigma) &&\text{for each $\sigma \in \K^d$}
\end{aligned}
\end{equation}

We dualize \ref{lp:flow} to obtain a generalization of the minimum cut problem in directed graphs. To make the dualization more explicit we will write out \ref{lp:flow} in matrix form: maximize $s \cdot f$ subject to $Af \leq b$ and $f \geq 0$, where we have \[A = \begin{bmatrix} \partial \\ -\partial \\ I_{n_d} \end{bmatrix},\, b = \begin{bmatrix} 0_{n_{d-1}} \\ 0_{n_{d-1}} \\ c \end{bmatrix},\, s = \begin{bmatrix} 0_{n_d - 1} \\ 1 \end{bmatrix}.\]
The matrix $A$ has dimension $(2 n_{d-1} + n_d) \times n_d$.
In our notation $I_k$ is the $k \times k$ identity matrix and $0_k$ is the $k \times 1$ column vector consisting of all zeros. Since the value of the flow is equal to $f(\Sigma)$ the vector $s$ is all zeros except for the final entry which is indexed by $\Sigma$ and receives an entry equal to one.
Further, $c$ is the $n_d \times 1$ capacity vector indexed by the $d$-simplices such that the entry indexed by $\sigma$ has value equal to $c(\sigma)$.

We can now state the dual program in matrix form: minimize $y \cdot b$ subject to $y^T A \geq s$ and $y \geq 0$. The vector $y$ is a $(2 n_{d-1} + n_d) \times 1$ column vector indexed by both the $(d-1)$-simplices and the $d$-simplices. However, only the entries indexed by $d$-simplices contribute to the objective function since $b$ is zero everywhere outside of the capacity constraints. We will denote the truncated vector consisting of entries indexed by $d$-simplices by $y_{d}$ and the entry corresponding to the $d$-simplex $\sigma \in \K^d$ will be denoted by $y_d(\sigma)$. Similarly we have two truncated vectors $y^1_{d-1}$ and $y^2_{d-1}$ corresponding to the entries indexed by the $(d-1)$-simplices.
Moreover, the rows of $y^T A \geq s$ are in the form
\[(y_{d-1}^1 - y_{d-1}^2)^T \partial + y_d \geq s.\]
For simplicity we define $y_{d-1} = y_{d-1}^1 - y_{d-1}^2$ and write $y_{d-1}(\tau)$ for the entry indexed by the $(d-1)$-simplex $\tau$.
Putting this together, we state the dual linear program as follows.

\begin{equation}\label{lp:cut}\tag{LP2}
\begin{aligned}
\text{minimize}& \quad \sum_{\sigma \in \K^d} y_d(\sigma) c(\sigma) \\
\text{subject to}& \quad y_{d-1} \cdot \partial \sigma + y_d(\sigma) \geq 0 &&\text{for each $\sigma \in \K^d$}\\
& \quad  y_{d-1} \cdot \partial \Sigma + y_d(\Sigma) = 1 \\
& \quad y_d \geq 0
\end{aligned}
\end{equation}
Note the strict equality in the second constraint does not follow from the duality.
However, we can assume a strict equality since if $y_{d-1} \cdot \partial \Sigma + y_d(\Sigma) > 1$ we can multiply $[y_{d-1},y_d]^T$ by some scalar $\epsilon < 1$ to make the inequality tight. This multiplication only decreases the value of $\sum y_d(\sigma)c(\sigma)$ so it does not change the optimal solution.

In the case of graphs \ref{lp:cut} has dual variables for vertices and edges. 
Moreover, there exists an integral solution such that each vertex is either assigned a 0 or a 1 since a graph cut is a partition of the vertices. 
The second inequality requires $y_0(s) = 1$ and $y_0(t) = 0$. 
To see this, when solving an $st$-cut on a graph, the basis element $\Sigma$ is an edge with $\partial\Sigma = s - t$, and $y_1(\Sigma) = 0$ otherwise the solution is infinite.
This naturally defines a partition of the vertices: $S$ containing vertices assigned a 1, and $T$ containing vertices assigned a 0. 
The constraints force an edge to be assigned a 1 if it leaves $S$ and enters $T$, otherwise it is assigned a 0. 
This solution can be interpreted as a 0-cochain $p$ with $p(st) = 1$, or in the notation of our definition of a simplicial cut: $p(t - s) = -1$.
Further, $y_1(e) = 1$ for every edge $e$ that is negative on $\delta(p)$ and a 0 otherwise, hence $y_1$ fits our definition of a directed $st$-cut in a 1-complex.
We will show the same result holds in higher dimensions; that is, $y_d$ is a directed $\gamma$-cut arising from the $(d-1)$-cochain $y_{d-1}$.

\begin{lemma}\label{lem:feasible1}
Let $y = [y_{d-1},y_d]^T$ be an optimal solution to \ref{lp:cut}. The set $\supp(y_d)$ is a directed combinatorial $\gamma$-cut.
\end{lemma}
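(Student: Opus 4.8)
The plan is to show that $\supp(y_d)$ satisfies the definition of a directed combinatorial $\gamma$-cut: namely, that in $\K \setminus \supp(y_d)$ there is no non-negative $d$-chain $\Gamma$ with $\partial \Gamma = \gamma$. I would argue by contradiction, so suppose such a $\Gamma$ exists. The key idea is to use the pairing between $\Gamma$ (a cycle, once we incorporate $\Sigma$) and the coboundary-like object $\delta(y_{d-1})$ coming from the dual solution, exactly in the spirit of the proof of Proposition~\ref{prop:dircomcut}.

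First I would unpack the constraints of \ref{lp:cut}. The first family of constraints says $y_{d-1} \cdot \partial \sigma \geq -y_d(\sigma)$ for each $d$-simplex $\sigma$, and since $y_d \geq 0$, for any $\sigma \notin \supp(y_d)$ we have $y_d(\sigma) = 0$ and hence $y_{d-1} \cdot \partial \sigma \geq 0$; that is, $\delta(y_{d-1})(\sigma) \geq 0$ on every simplex outside the cut. The second constraint gives $y_{d-1} \cdot \partial \Sigma + y_d(\Sigma) = 1$, and since $c(\Sigma) = \infty$ forces $y_d(\Sigma) = 0$ in any finite (hence optimal) solution, we get $y_{d-1} \cdot \partial \Sigma = 1$, i.e. $\delta(y_{d-1})(\Sigma) = 1$. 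Now I would form the chain complex augmented by $\Sigma$ with $\partial \Sigma = -\gamma$, and consider $\Gamma + \Sigma$ (viewing $\Gamma$ as supported in $\K \setminus \supp(y_d)$). Then $\partial(\Gamma + \Sigma) = \gamma - \gamma = 0$, so $\Gamma + \Sigma$ is a $d$-cycle. On the other hand, $\delta(y_{d-1})$ is a $d$-coboundary, so by the Hodge decomposition (Theorem~\ref{thm:hodge}) these are orthogonal: $\langle \Gamma + \Sigma, \delta(y_{d-1}) \rangle = 0$.

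Now I would compute this inner product directly. We have $\langle \Gamma + \Sigma, \delta(y_{d-1}) \rangle = \langle \Gamma, \delta(y_{d-1}) \rangle + \langle \Sigma, \delta(y_{d-1}) \rangle$. The second term equals $\delta(y_{d-1})(\Sigma) = 1$ by the analysis above. For the first term, write $\Gamma = \sum_{\sigma} g_\sigma \sigma$ with all $g_\sigma \geq 0$ and all $\sigma$ in the support lying outside $\supp(y_d)$; then $\langle \Gamma, \delta(y_{d-1}) \rangle = \sum_\sigma g_\sigma \, \delta(y_{d-1})(\sigma) \geq 0$ since each $g_\sigma \geq 0$ and each $\delta(y_{d-1})(\sigma) = y_{d-1}\cdot\partial\sigma \geq 0$ for $\sigma \notin \supp(y_d)$. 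Hence $\langle \Gamma + \Sigma, \delta(y_{d-1}) \rangle \geq 1 > 0$, contradicting orthogonality. Therefore no such $\Gamma$ exists and $\supp(y_d)$ is a directed combinatorial $\gamma$-cut.

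The step I expect to need the most care is the justification that $y_d(\Sigma) = 0$ at an optimal solution (so that $\delta(y_{d-1})(\Sigma) = 1$ cleanly): this uses that $c(\Sigma) = \infty$, so any solution with $y_d(\Sigma) > 0$ has infinite objective value and cannot be optimal, while finiteness of the optimum follows from feasibility of the primal \ref{lp:flow} (e.g. the zero flow) together with strong duality. I would also want to be careful that $\Gamma$, a priori a chain in the subcomplex $\K \setminus \supp(y_d)$, is correctly reinterpreted as a chain in the augmented complex on $\K$ whose support avoids $\supp(y_d)$, so that the orthogonality from the Hodge decomposition on the augmented complex genuinely applies; this is routine but worth stating explicitly since the whole argument hinges on pairing a cycle against a coboundary in the same complex.
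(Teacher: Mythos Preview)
Your proposal is correct and follows essentially the same approach as the paper. The paper's proof first observes that $y_d(\Sigma)=0$ so $y_{d-1}(\gamma)=-1$, then uses optimality to argue the set equality $\supp(y_d)=\{\sigma_i:\alpha_i<0\}$ where $\delta(y_{d-1})=\sum\alpha_i\sigma_i$, and finally invokes Proposition~\ref{prop:dircomcut}; you instead inline the Hodge-orthogonality contradiction from that proposition directly, needing only the inclusion $\{\sigma_i:\alpha_i<0\}\subseteq\supp(y_d)$ (which follows from feasibility alone) rather than the full equality.
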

\begin{proof}
Note that $y_{d-1}$ can be interpreted as a $(d-1)$-cochain. 
Since $c(\Sigma) = \infty$ we have $y_d(\Sigma) = 0$, otherwise the solution is infinite.
It follows from the second constraint that $-y_{d-1}(\gamma) = y_{d-1} (-\gamma) = y_{d-1} \cdot \partial \Sigma = 1$, hence $y_{d-1}(\gamma) = -1$ making $y_{d-1}$ a $\gamma$-cut.
Expanding $\delta(y_{d-1})$ into a linear combination of $d$-simplices $\sum \alpha_i \sigma_i$ and applying the first inequality constraint gives us the set equality $\supp(y_d) = \{ \sigma_i \mid \alpha_i < 0 \}$ since $y_d(\sigma_i) = 0$ precisely when $\alpha_i > 0$.
Thus, the result follows from \autoref{prop:dircomcut}.
\end{proof}

\begin{lemma}\label{lem:feasible2}
Let $p$ be a $\gamma$-cut with coboundary $\delta(p) = \sum \alpha_i \sigma_i$ and let $\delta(p)^- = \sum_{\alpha_i < 0} \alpha_i \sigma_i$. The vector $[p, -\delta(p)^-]^T$ is a finite feasible solution to \ref{lp:cut}.
\end{lemma}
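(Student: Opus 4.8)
The plan is to verify directly that the proposed vector $[p, -\delta(p)^-]^T$ satisfies each of the three constraint families of \ref{lp:cut} and that its objective value is finite. Write $y_{d-1} = p$ and $y_d = -\delta(p)^-$, so that $y_d(\sigma_i) = -\alpha_i$ when $\alpha_i < 0$ and $y_d(\sigma_i) = 0$ otherwise; in particular $y_d \geq 0$, which is the third constraint, and the objective $\sum_{\sigma} y_d(\sigma) c(\sigma) = \sum_{\alpha_i < 0} |\alpha_i| c(\sigma_i)$ is a finite nonnegative number since the sum is over finitely many $d$-simplices and $c$ takes finite values on $\K^d$. (One should note $y_d(\Sigma) = 0$ by construction, which is consistent since $\Sigma \notin \supp(\delta(p))$ once $p$ is a genuine $(d-1)$-cochain — or we simply set it to $0$ — so the infinite capacity $c(\Sigma)$ never enters the objective.)

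The first constraint says $y_{d-1} \cdot \partial\sigma + y_d(\sigma) \geq 0$ for each $\sigma \in \K^d$. Here the key identity is $y_{d-1} \cdot \partial \sigma = (p \circ \partial)(\sigma) = \delta(p)(\sigma) = \alpha_i$ when $\sigma = \sigma_i$. So for $\sigma_i$ with $\alpha_i < 0$ we get $\alpha_i + (-\alpha_i) = 0 \geq 0$; for $\sigma$ with $\alpha_i > 0$ (or $\sigma \notin \supp(\delta(p))$, where the coefficient is $0$) we get $\alpha_i + 0 = \alpha_i \geq 0$. Either way the inequality holds. The second constraint, $y_{d-1}\cdot\partial\Sigma + y_d(\Sigma) = 1$, follows from $\partial\Sigma = -\gamma$ and $y_d(\Sigma) = 0$: indeed $p \cdot (-\gamma) = -p(\gamma) = -(-1) = 1$, using precisely the defining property $p(\gamma) = -1$ of a $\gamma$-cut.

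This argument is entirely a bookkeeping check, so I do not expect a genuine obstacle; the only place that requires a moment's care is making sure the pairing conventions line up — that $y_{d-1}\cdot\partial\sigma$ really equals the coefficient of $\sigma$ in $\delta(p) = p\circ\partial$, and that the sign in $\partial\Sigma = -\gamma$ combines with $p(\gamma) = -1$ to give $+1$ rather than $-1$. Once those sign conventions are pinned down the three constraints fall out immediately and finiteness is clear. I would present it as a short verification: introduce the abbreviations $y_{d-1} := p$, $y_d := -\delta(p)^-$, check $y_d \geq 0$ and finiteness of the objective, then check the per-simplex inequality by splitting on the sign of $\alpha_i$, and finally check the $\Sigma$-equality.
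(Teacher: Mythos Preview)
Your proof is correct and follows the same direct verification as the paper's own argument, only spelled out in more detail. One small correction to your parenthetical: in the augmented complex $\Sigma$ \emph{is} in $\supp(\delta(p))$ since $\delta(p)(\Sigma) = p(-\gamma) = 1$, but because this coefficient is positive you still get $\Sigma \notin \supp(\delta(p)^-)$ and hence $y_d(\Sigma) = 0$, which is what you actually need for finiteness and for the second constraint.
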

\begin{proof}
We view the (co)chains $p$ and $-\delta(p)^-$ as vectors and let them take the roles of $y_{d-1}$ and $y_d$, respectively.
The first constraint is satisfied since for all $\sigma \in \K^d$ we have $-\delta(p)^-(\sigma) = - p \cdot \partial(\sigma)$ when $p \cdot \partial(\sigma)$ is negative, and $-\delta(p)^-(\sigma)=0$ otherwise. The second constraint and the finiteness of the solution is satisfied by the fact that $p(\gamma) = -1$.
\end{proof}

Lemma~\ref{lem:feasible1} tells us that a solution to \ref{lp:cut} yields a directed combinatorial $\gamma$-cut.
Recall, by \autoref{prop:dircomcut} every $\gamma$-cut $p$ yields a directed combinatorial $\gamma$-cut by taking the coboundary $\delta(p) = \sum \alpha_i \sigma_i$ and considering the negative components $\delta(p)^- = \{\sigma_i \mid \alpha_i < 0\}$. 
By Lemma~\ref{lem:feasible2} $\delta(p)^-$ is a feasible solution to \ref{lp:cut}; the cost of this solution is $c \cdot \delta(p)^-$.
The coefficients $\alpha_i$ need not always equal one; hence in general we have $\|C\| \neq c \cdot \delta(p)^-$.
It follows that \ref{lp:cut} need not return a minimum directed combinatorial $\gamma$-cut.
In Theorem~\ref{thm:totuni} we will give conditions describing when \ref{lp:cut} returns a directed combinatorial $\gamma$-cut.
To conclude the section we state our main theorem about \ref{lp:cut} whose proof is immediate from Lemmas \ref{lem:feasible1} and \ref{lem:feasible2}.

\begin{theorem}\label{thm:maxflowmincut}
Let $y = [y_{d-1}, y_d]^T$ be an optimal solution to \ref{lp:cut}. 
The set $\supp(y_d)$ is a directed combinatorial $\gamma$-cut such that $y_d = \delta(y_{d-1})^-$.
Moreover, $y_d$ minimizes $c \cdot \delta(p)^-$ where $p$ ranges over all $\gamma$-cuts.
\end{theorem}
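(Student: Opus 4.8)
The plan is to observe that Theorem~\ref{thm:maxflowmincut} is essentially a repackaging of the two preceding lemmas together with LP duality, so the work is entirely in assembling pieces rather than proving anything new. First I would invoke Lemma~\ref{lem:feasible1} directly: for an optimal solution $y = [y_{d-1}, y_d]^T$ to \ref{lp:cut}, that lemma already tells us $\supp(y_d)$ is a directed combinatorial $\gamma$-cut, and its proof exhibits the set equality $y_d = \delta(y_{d-1})^-$ (using that $y_d(\sigma_i) = 0$ exactly when $\alpha_i \geq 0$ and $y_d(\sigma_i) = -\alpha_i$ when $\alpha_i < 0$, which is forced by the first constraint being tight at optimality together with $y_d \geq 0$ and minimality of the objective). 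So the first two sentences of the theorem require essentially only a careful restatement, making sure to note that at an optimum each inequality $y_{d-1}\cdot\partial\sigma + y_d(\sigma) \geq 0$ is tight whenever $y_d(\sigma) > 0$, since otherwise decreasing $y_d(\sigma)$ would lower the objective.

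For the "moreover" claim — that $y_d = \delta(y_{d-1})^-$ minimizes $c \cdot \delta(p)^-$ over all $\gamma$-cuts $p$ — I would argue by a two-way comparison of optimal values. On one hand, by Lemma~\ref{lem:feasible2}, for any $\gamma$-cut $p$ the vector $[p, -\delta(p)^-]^T$ is feasible for \ref{lp:cut} with objective value $\sum_{\sigma} (-\delta(p)^-)(\sigma) c(\sigma) = \sum_{\alpha_i < 0} (-\alpha_i) c(\sigma_i)$; but one must be slightly careful here, since $c \cdot \delta(p)^-$ as I want to phrase the minimization should be read as $\sum_{\alpha_i<0}|\alpha_i| c(\sigma_i)$ to match the nonnegative objective of \ref{lp:cut}. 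Hence the optimal value of \ref{lp:cut} is at most $c \cdot \delta(p)^-$ for every $\gamma$-cut $p$. On the other hand, the optimal $y_{d-1}$ is itself a $\gamma$-cut (shown inside the proof of Lemma~\ref{lem:feasible1}: $y_{d-1}(\gamma) = -1$), and for it the objective value is exactly $c \cdot \delta(y_{d-1})^-$ by the set equality $y_d = \delta(y_{d-1})^-$. Combining, $c \cdot \delta(y_{d-1})^- = \opt(\ref{lp:cut}) \leq c \cdot \delta(p)^-$ for all $\gamma$-cuts $p$, which is exactly the asserted minimality.

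I expect the only subtlety — and it is minor — to be bookkeeping around signs and absolute values: the objective of \ref{lp:cut} is $\sum y_d(\sigma)c(\sigma)$ with $y_d \geq 0$, whereas $\delta(p)^-$ as defined in Lemma~\ref{lem:feasible2} carries the negative coefficients themselves, so $-\delta(p)^-$ is the nonnegative vector that actually plays the role of $y_d$; I would state once at the outset that $c\cdot\delta(p)^-$ is shorthand for $\sum_{\alpha_i<0}|\alpha_i|c(\sigma_i)$ so that the equality chain above typechecks. A second point worth a sentence is why the optimum of \ref{lp:cut} is attained (so that "$y_d$ minimizes" is meaningful): the feasible region is nonempty by Lemma~\ref{lem:feasible2} applied to any $\gamma$-cut — one exists because $\gamma$ is null-homologous, so $\delta$ restricted appropriately is nonzero on some chain pairing with $\gamma$ — and the objective is bounded below by $0$, so an optimum exists by standard LP theory. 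With those two remarks in place the proof is, as the paper says, immediate from Lemmas~\ref{lem:feasible1} and~\ref{lem:feasible2}.
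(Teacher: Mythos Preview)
Your proposal is correct and follows the paper's own approach: the paper simply declares the theorem ``immediate from Lemmas~\ref{lem:feasible1} and~\ref{lem:feasible2},'' and you have correctly unpacked how those two lemmas combine, including the tightness/complementary-slackness observation needed to upgrade the support equality in Lemma~\ref{lem:feasible1} to the value equality $y_d = -\delta(y_{d-1})^-$ and the sign bookkeeping that the paper elides.
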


\subsection{Integral solutions}\label{sec:integral_flow}
In this section we provide an example of a simplicial flow network with integral capacity constraints and fractional maximum flow. By Theorem~\ref{thm:dey} such a network must contain some relative torsion. This is achieved by the inclusion of a \mobius{} strip in our simplicial flow network. Our example will be used later in Theorem~\ref{thm:inthard} showing that computing a maximum integral flow in a simplicial flow network is \NP-hard. 

\begin{figure}[!htb]
   \centering
    \includegraphics[scale=.5]{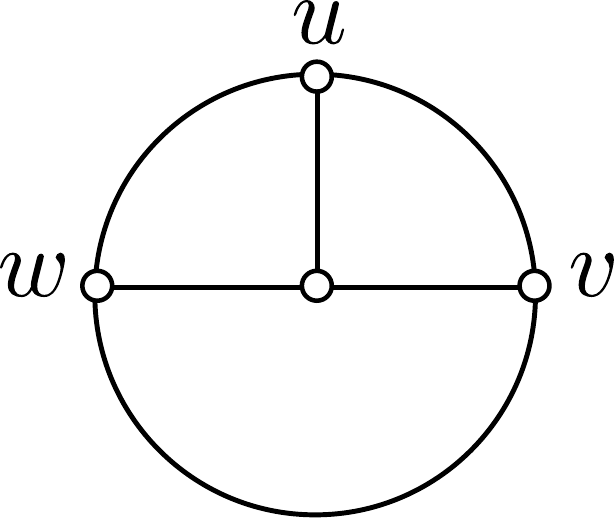}\qquad
    \includegraphics[scale=.5]{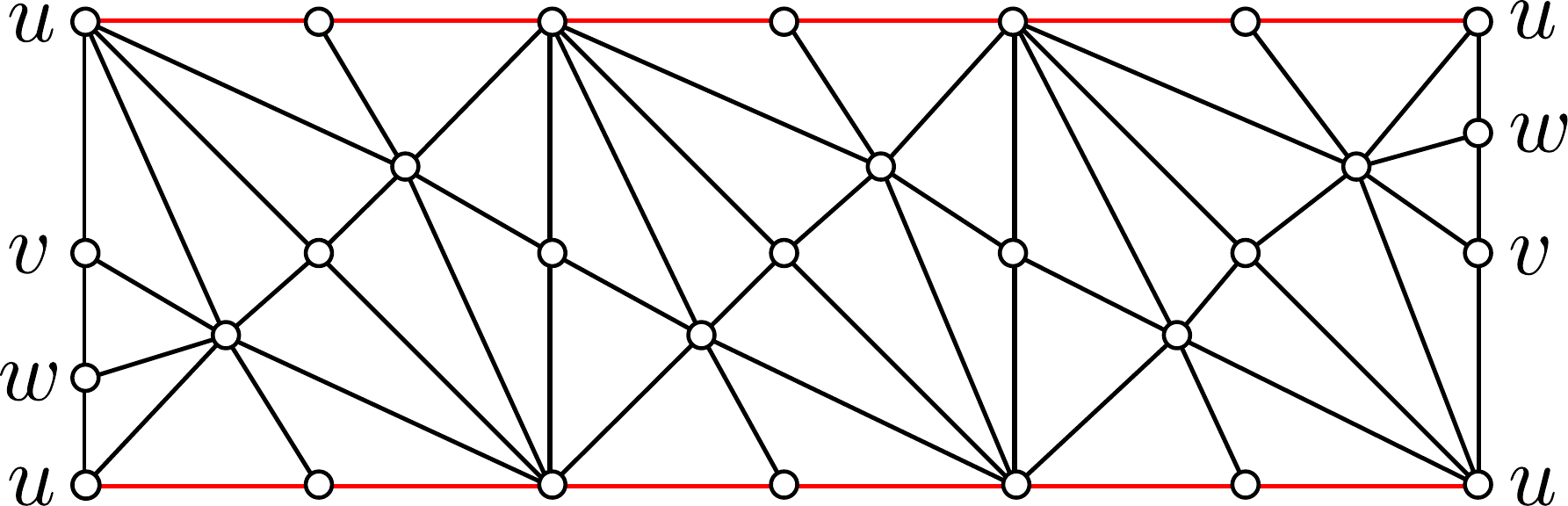}
 \caption{A triangulated disk $\mathcal{D}$ (left) and \mobius{} strip $\M$ (right). The \mobius{} strip has two points on its boundary identified forming the vertex $u$. In red we have the input cycle (a figure-eight) $\gamma$ and we set $\alpha = uwvu$. We orient the complex such that $\partial \mathcal{M} = \gamma + 2\alpha$ and  $\partial \mathcal{D} = -\alpha$. The capacity on each simplex in both the disk and \mobius{} strip is one.}
 \label{fig:mbdisk}
\end{figure}

\paragraph{Fractional maximum flow}
We will now explicitly describe a simplicial flow network with integral capacities whose maximum flow value is fractional.
Let $\M$ be a triangulated \mobius{} strip with boundary $\partial \M = 2\alpha + \gamma$ such that two vertices in $\alpha$ have been identified making $\alpha$ a simple cycle. 
This identification makes $\gamma$ a figure-eight.
Now let $\D$ be a triangulated disk oriented such that $\partial \D = - \alpha$. See Figure~\ref{fig:mbdisk} for an illustration. Call the resulting complex $\M\D$.
The capacity function $c$ has $c(t) = 1$ for each triangle $t \in \M\D$. Now we solve the max-flow problem on $(\M\D, c, \gamma)$. Note that for any flow $f$ we have $f(t_1) = f(t_2)$ for all triangles $t_1,t_2 \in \M$; moreover, for all $t_1,t_2 \in \D$ we also have $f(t_1)=f(t_2)$.
The value of any flow $f$ on $(\M\D,c,\gamma)$ is equal to its value on $\M$, and in order to maintain conservation of flow we must have $f(\D) = 2f(\M)$.
Now, the capacity constraints imply that the maximum flow $f$ has $f(\M) = 1/2$ and $f(\D) = 1$ . We have $\partial f = \frac{1}{2}\partial \M + \partial \D = \frac{1}{2}\gamma + \alpha - \alpha$. Hence, $\|f\| = 1/2$.

\paragraph{Maximum integral flow}
Given a simplicial flow network $(\K, c, \gamma)$ with integral capacities we consider the problem of finding the maximum integral flow. That is, a $d$-chain $f \in C_d(\K, \Z)$ obeying the capacity constraints such that $\partial f = k\gamma$ where $k \in \Z$ is maximized.
We show the problem is \NP-hard by a reduction from graph 3-coloring. Our reduction is inspired by a MathOverflow post from Sergei Ivanov showing that finding a subcomplex homeomorphic to the 2-sphere is \NP-hard~\cite{MOhardness}. In the appendix we adapt the proof to show that the high dimensional generalization of computing a directed path in a graph is also \NP-hard.
Given a graph $G$ we construct a 2-dimensional simplicial flow network whose maximum flow is integral if and only if $G$ is 3-colorable.

\begin{theorem}\label{thm:inthard}
Let $(\K, c, \gamma)$ be a simplicial flow network where $\K$ is a 2-complex and $c$ is integral. Computing a maximum integral flow of $(\K, c, \gamma)$ is \NP-hard.
\end{theorem}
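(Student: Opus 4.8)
The plan is to reduce from graph 3-colorability, following the approach attributed to Ivanov's construction of NP-hard sphere-detection, and mirroring the structure of the fractional-flow example in Figure~\ref{fig:mbdisk}. Given a graph $G = (V,E)$, I will build a $2$-dimensional simplicial flow network $(\K, c, \gamma)$ so that a certain target value of $k$ (the flow value) is achievable by an \emph{integral} flow if and only if $G$ is properly $3$-colorable. The intuition from the M\"obius example is the key device: a M\"obius strip with boundary $2\alpha + \gamma'$ forces any real flow through it to use half-integral values on its triangles, so an integral flow is obstructed unless something ``cancels'' the torsion. By chaining together many such gadgets whose local consistency corresponds to an edge constraint, and whose local ``state'' corresponds to a vertex color in $\{0,1,2\}$, the existence of a globally consistent integral flow becomes equivalent to a proper $3$-coloring.

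\textbf{Concretely, the construction proceeds in stages.} First I would fix the overall topology: $\gamma$ will be a null-homologous $1$-cycle (a wedge of small cycles, as in the figure-eight of the example), and the complex $\K$ will be assembled from ``vertex gadgets'' $\K_v$ for $v \in V$ and ``edge gadgets'' $\K_{uv}$ for $uv \in E$, glued along shared $1$-cycles. Each vertex gadget is designed so that, in any feasible flow of the prescribed value, the restriction of $f$ to $\K_v$ is forced into one of exactly three discrete ``modes'' (this is where $3$-colorability, rather than $2$- or $k$-colorability, enters — likely through a gadget with $\mathbb{Z}_3$-torsion or a triple of M\"obius-like strips). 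Each edge gadget $\K_{uv}$ is designed so that it admits a feasible integral contribution exactly when $v$ and $u$ are in different modes; if they are in the same mode, the gadget forces a half-integral (or otherwise non-integral) value somewhere, as the M\"obius strip does. Capacities are all small integers (one, or a fixed constant), and $\gamma$ is chosen so that the maximum \emph{fractional} flow value is always the same fixed rational number regardless of $G$, while the maximum \emph{integral} flow attains that value iff the whole system of gadgets can be satisfied simultaneously. One then observes that $\K$ has size polynomial in $|V| + |E|$ and is constructible in polynomial time.

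\textbf{The correctness argument has two directions.} For the forward direction, given a proper $3$-coloring $\chi \colon V \to \{0,1,2\}$, I would explicitly write down an integral flow: set each vertex gadget $\K_v$ to the mode corresponding to $\chi(v)$, and since every edge is properly colored, each edge gadget can be filled in integrally; summing these local integral chains gives a global integral $f$ with $\partial f = k\gamma$ of the target value, which one checks by a boundary computation analogous to $\partial f = \tfrac12 \partial \M + \partial \D$ in the example (but now with all coefficients integral because no M\"obius strip is forced into its half-integral state). For the reverse direction, given an integral flow of the target value, I would argue that the capacity constraints together with conservation of flow force each $\K_v$ into one of the three modes — defining $\chi(v)$ — and force each edge gadget to be in its ``integral-feasible'' configuration, which by construction only happens when its two endpoints are in distinct modes; hence $\chi$ is a proper coloring. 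Combining the two directions with the polynomial bound on $|\K|$ and the NP-hardness of $3$-colorability completes the proof for $d = 2$; the extension to $d > 1$ can be handled exactly as in Theorem~\ref{thm:combcuthard}, by replacing each triangle-subdivision step with the $d$-simplex subdivision $\sigma \mapsto \{\sigma_i\}$ and observing the boundary is preserved, though the theorem as stated only claims $d = 2$.

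\textbf{The main obstacle} I anticipate is designing the vertex gadget so that it has \emph{exactly three} discrete feasible modes (not two, not infinitely many) and the edge gadget so that ``same mode'' is exactly the infeasible case — getting the $\mathbb{Z}_3$ behavior to emerge from integral homology/torsion constraints on a small, explicitly triangulable $2$-complex, while simultaneously keeping the maximum fractional flow value constant and independent of $G$ so that the reduction is genuinely about integrality and not about the flow value itself. Verifying that conservation of flow plus unit capacities really does pin the gadgets down to the intended finite set of states (rather than leaving fractional slack that could route around the obstruction) will require a careful case analysis of the local chain groups and their torsion, and this is the part of the argument I expect to be the most delicate.
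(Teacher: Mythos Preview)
Your plan is essentially the paper's proof: reduce from $3$-colorability, build vertex gadgets offering three color options, and use torsion in edge gadgets to obstruct integral flow when endpoints share a color. The paper's concrete realization, however, is simpler than you anticipate. The vertex gadget needs no $\Z_3$-torsion or \mobius{} strips: it is a punctured sphere $\mathcal{S}$ with boundary components $\gamma$ and $\beta_v$ (one per vertex $v$), and three ordinary punctured disks $R_v, B_v, G_v$ glued along each $\beta_v$. Integrality plus unit capacity already force every triangle of a given disk to carry the same value in $\{0,1\}$, and conservation at $\beta_v$ then forces exactly one of the three disks to carry $1$ --- the ``three discrete modes'' come for free from integrality, not from any exotic topology. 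The torsion lives only in the edge gadget: for each edge $e=(u,v)$ and each pair of colors the paper runs a tube between the corresponding color disks; the \emph{same-color} tubes are punctured and capped by a real projective plane with boundary $-2\alpha$, which plays exactly the role your \mobius{} strip would. Finally, your concern about keeping the fractional maximum independent of $G$ is unnecessary for the reduction: one only needs that the integral maximum is at least $1$ if and only if $G$ is $3$-colorable.
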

\begin{proof}
Let $G = (V, E)$ be a graph. We will construct a simplicial flow network $(\K, c, \gamma)$ such that its maximum flow is integral if and only if $G$ is 3-colorable.

We start our construction with a punctured sphere $\mathcal{S}$ containing $|V| + 1$ boundary components called $\gamma$ and $\beta_v$ for each $v \in V$. For each boundary component $\beta_v$ we construct three disks $R_v,B_v,G_v$ each with boundary $-\beta_v$.
These disks represent the three colors in our coloring: red, blue, and green. 
We refer to these disks as \textit{color disks} and use $\C_v$ to denote an arbitrary color disk associated with $v$ and use $k \in \{r,b,g\}$ to denote an arbitrary color. 
On each color disk $\C_v$ we add a boundary component for each edge $e=(u,v)$ incident to $v$. By $\beta_{v, e, k}$ we denote the boundary component corresponding to the vertex $v$, edge $e$, and color $k$.
For each edge $e=(u, v)$ and each pair of boundary components $\beta_{u, e, k_u}$ and $\beta_{v, e, k_v}$ with $k_u \neq k_v$ we construct a tube with boundary components $-\beta_{u, e, k_u}$ and $-\beta_{v, e, k_v}$ denoted $\mathcal{T}_{e, k_u, k_v}$. 
When $k_u = k_v = k$ we construct a tube $\mathcal{T}_{e, k, k}$ and puncture it with a third boundary component $\alpha$ and construct a negatively oriented real projective plane $\mathcal{R}\mathcal{P}_{e, k}$ with boundary $\partial \mathcal{R}\mathcal{P}_{e, k} = -2\alpha$.
We call the resulting complex $\K$ and assign a capacity $c(\sigma)=1$ for every triangle $\sigma$ in $\K$. 

We will show that a maximum integral flow $f$ of $\K$ has $\|f\| = 1$ if and only if $G$ is 3-colorable. The following four properties of a maximum integral flow $f$ imply that $G$ is 3-colorable.
\begin{itemize}
\item $f$ must assign exactly one unit of flow to each triangle in $\mathcal{S}$ since the value of $f$ is equal to $f(\mathcal{S})$.
\item For each vertex $v \in V$ exactly one color disk $\C_v$ is assigned one unit of flow while the other two color disks associated with $v$ are assigned zero units of flow. Otherwise, either conservation of flow is violated or some color disk is assigned a fractional flow value.
\item For each edge $e=(u,v) \in E$ exactly one tube $\mathcal{T}_{e,k_u,k_v}$ with $k_u \neq k_v$ must be assigned one unit of flow with all other tubes associated with $e$ assigned zero units of flow.
The tube $\mathcal{T}_{e, k_u, k_v}$ assigned one unit of flow is the tube connecting the color disks $\C_v$ and $\C_u$ that are assigned one unit of flow by the previous property.

\item $f$ assigns zero flow to every $\mathcal{T}_{e,k,k}$ and $\mathcal{R}\mathcal{P}_{e, k}$ since otherwise the triangles in $\mathcal{R}\mathcal{P}_{e,k}$ would need to have $1/2$ units of flow assigned to them to maintain conservation of flow.
\end{itemize}
These four properties imply that the set of color disks $\{\C_v \mid f(\sigma) = 1,\, \forall \sigma \in \C_v\}$ corresponds to a 3-coloring of $G$.
Conversely, given a 3-coloring of $G$ we assign a flow value of one to each color disk corresponding to the 3-coloring. We extend this assignment to a $\gamma$-flow of value one by assigning a flow value of one to $\mathcal{S}$ and the tubes corresponding to the 3-coloring.
\end{proof}

\subsection{Integral cuts}\label{sec:integral_cuts}
The goal of this section is to show that for simplicial complexes relative torsion-free in dimension $d-1$ there exists optimal solutions to \ref{lp:cut} whose support is a minimum combinatorial $\gamma$-cut.
Note that by Theorem~\ref{thm:dey} a simplicial complex that is relative torsion-free in dimension $d-1$ has a totally unimodular $d$-dimensional boundary matrix. The total unimodularity is key to our proof.
However, we first provide an example of a complex (with relative torsion) whose optimal solution's support does not form a minimum combinatorial $\gamma$-cut.
Our construction is a slight modification of $\M\D$ defined in Section~\ref{sec:integral_flow}.

Consider the simplicial complex constructed by taking $\M\D$ and glueing a wedge sum of two disks $\mathcal{W}$ along the figure-eight $\gamma$. That is, $\partial \mathcal{W} = \gamma$. We give every triangle in the resulting complex a capacity equal to one. A maximum $\gamma$-flow has value $3/2$, so the dual program finds a $\gamma$-cut of the same value. 
One potential optimal solution is a $(d-1)$-cochain whose coboundary assigns a value of $-1/2$ to two triangles in $\mathcal{W}$ and a value of $-1/2$ to one triangle in $\D$. 
The support of this coboundary has weight equal to three, however a minimal combinatorial $\gamma$-cut has weight two by taking only one triangle from $\mathcal{W}$ and one from $\D$. See Figure~\ref{fig:cochain} for an illustration.

\begin{figure}[!htb]
    \centering
    \includegraphics[scale=.5]{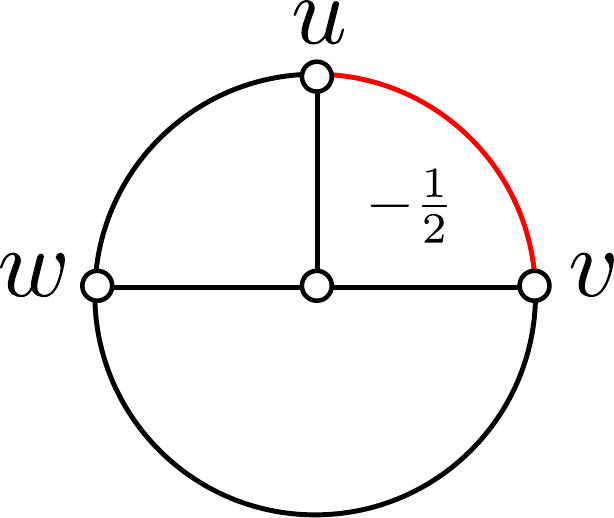}\qquad
    \includegraphics[scale=.5]{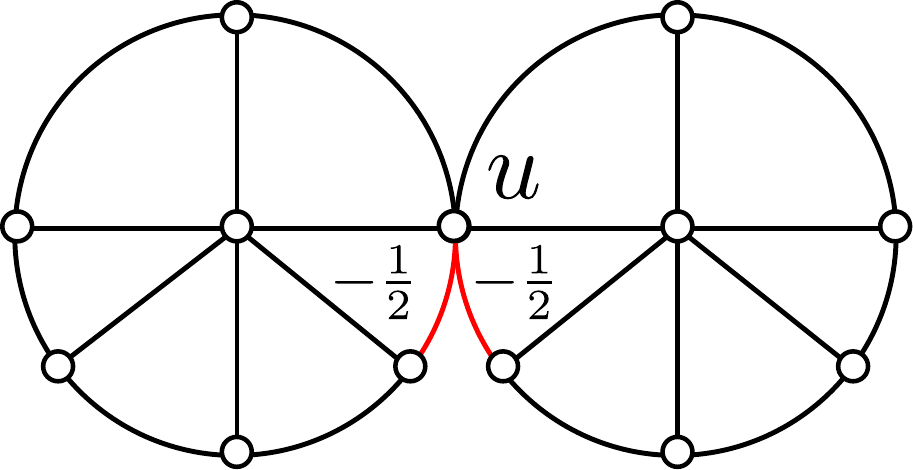}
    \includegraphics[scale=.5]{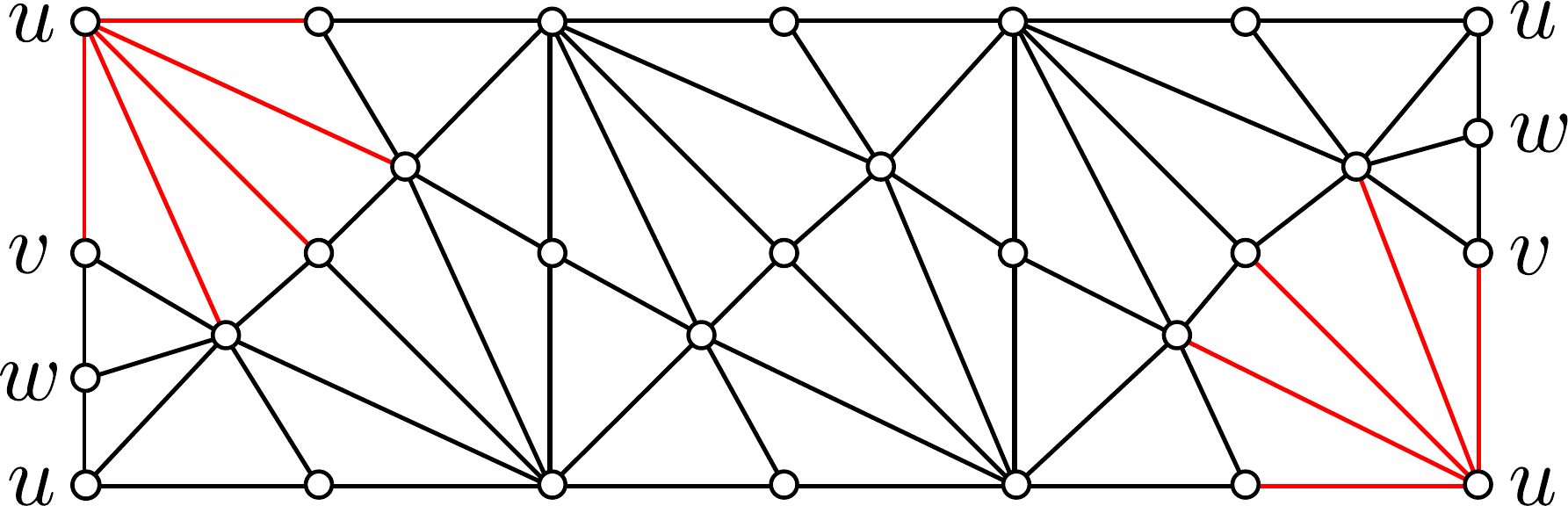}
    \caption{The simplicial complex $\M\D$ with a wedge sum of two disks $\mathcal{W}$ identified to the figure-eight $\gamma$. In red we have a 1-cochain which assigns a value of $-1/2$ to each red edge. The coboundary of the red cochain assigns a value of $-1/2$ to one triangle in $\D$ and a value of $-1/2$ to two triangles in $\mathcal{W}$. The value of the red cochain coincides with the value of the maximum $\gamma$-flow. However, its support is not a minimum combinatorial $\gamma$-cut. A minimum combinatorial $\gamma$-cut picks one triangle from $\D$ and one triangle from $\mathcal{W}$.}
    \label{fig:cochain}
\end{figure}

Now, we show that when $\K$ is relative torsion-free in dimension $d-1$ \ref{lp:cut} has an optimal solution whose support is a minimum directed combinatorial $\gamma$-cut. Specifically, we show that a solution existing on a vertex of the polytope defined by the constraints of \ref{lp:cut} is a cochain $y_{d-1}$ with negative coboundary $y_d$ such that $y_d(\sigma) \in \{0,1\}$ for all $\sigma \in \K^d$ hence $\sum y_d(\sigma) c(\sigma) = \|\supp(y_d)\|$. That is, the value of a vertex solution to \ref{lp:cut} is equal to the cost of $\supp(y_d)$ as a directed combinatorial $\gamma$-cut.

\begin{theorem}\label{thm:totuni}
Let $\K$ be $d$-dimensional simplicial complex that is relative torsion-free in dimension $d-1$ and $y = [y_{d-1}, y_d]^T$ be an optimal vertex solution to the dual program. The set $\supp(y_d)$ is a minimum directed combinatorial $\gamma$-cut.
\end{theorem}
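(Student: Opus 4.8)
The plan is to combine Theorem~\ref{thm:maxflowmincut}, the total unimodularity granted by Theorem~\ref{thm:dey}, and the classical fact that a linear program with totally unimodular constraint matrix and integral right-hand side has integral vertex solutions. First I would invoke Theorem~\ref{thm:maxflowmincut} to record that $\supp(y_d)$ is already a directed combinatorial $\gamma$-cut with $y_d = \delta(y_{d-1})^-$, and that $y_d$ minimizes $c \cdot \delta(p)^-$ over all $\gamma$-cuts $p$. So the only thing left to prove is that, at a vertex of the polytope, the minimized quantity $c \cdot \delta(y_{d-1})^- = \sum_\sigma y_d(\sigma) c(\sigma)$ actually equals $\|\supp(y_d)\| = \sum_{\sigma \in \supp(y_d)} c(\sigma)$; equivalently, that every nonzero coefficient $\alpha_i$ appearing in $\delta(y_{d-1}) = \sum \alpha_i \sigma_i$ with $\alpha_i < 0$ is in fact $-1$, i.e. $y_d(\sigma) \in \{0,1\}$ for every $\sigma \in \K^d$.

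To get this integrality I would argue that the constraint matrix of \ref{lp:cut} is totally unimodular. The constraints of \ref{lp:cut} in the variables $(y_{d-1}, y_d)$ (after the reductions in the text that let us split $y_{d-1}$ into its $\pm$ parts and treat $y_d(\Sigma)=0$) are built from the boundary matrix $\partial$ together with an identity block for the $y_d \geq 0$ part; schematically each row is $y_{d-1}\cdot \partial\sigma + y_d(\sigma) \gtreqqless \cdots$. Since $\K$ is relative torsion-free in dimension $d-1$, Theorem~\ref{thm:dey} gives that $\partial_d$ is totally unimodular, and appending an identity block (and negating rows, which does not affect subdeterminant values up to sign) preserves total unimodularity by the standard closure properties of TU matrices. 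The right-hand side is integral (it is $0$'s and $1$). Hence every vertex of the feasible polytope is integral, so in particular $y_d(\sigma) \in \mathbb{Z}$ for all $\sigma$; combined with $y_d \geq 0$ and the first constraint forcing $y_d(\sigma) \le -y_{d-1}\cdot\partial\sigma$ — together with the TU-derived fact that $y_{d-1}\cdot\partial\sigma \in \{-1,0,1\}$ at a vertex — we get $y_d(\sigma)\in\{0,1\}$. Therefore $\sum_\sigma y_d(\sigma) c(\sigma) = \|\supp(y_d)\|$.

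Finally I would close the loop: by Theorem~\ref{thm:maxflowmincut} the value $\sum_\sigma y_d(\sigma) c(\sigma)$ is the minimum of $c\cdot\delta(p)^-$ over all $\gamma$-cuts $p$, and by Proposition~\ref{prop:dircomcut} and Lemma~\ref{lem:feasible2} every directed combinatorial $\gamma$-cut that arises as the negative support of a coboundary corresponds to such a feasible solution, with its size bounded below by the LP value; conversely any minimum directed combinatorial $\gamma$-cut, being a minimal cut, equals $\supp(\delta(p))$ for some cochain $p$ by the earlier proposition, and when the coefficients are $\pm 1$ (which holds here by TU) its size is exactly $c\cdot\delta(p)^-$. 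So $\|\supp(y_d)\|$ equals the optimum over all directed combinatorial $\gamma$-cuts, making $\supp(y_d)$ minimum. The main obstacle I anticipate is the bookkeeping in the second step: carefully checking that the specific constraint matrix of \ref{lp:cut} (with the $\Sigma$ column, the duplicated $\pm\partial$ rows, the identity block, and the equality constraint) really is TU and that its vertices are integral, and then pinning down that integrality plus nonnegativity forces values in $\{0,1\}$ rather than merely in $\mathbb{Z}$ — this requires knowing the entries of $\delta(y_{d-1})$ lie in $\{-1,0,1\}$, which is exactly where the relative-torsion-free hypothesis does its work.
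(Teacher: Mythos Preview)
Your approach is essentially the paper's: invoke Theorem~\ref{thm:dey} to get total unimodularity of the constraint matrix of \ref{lp:cut}, deduce that an optimal vertex has $y_d(\sigma)\in\{0,1\}$, and conclude $\sum_\sigma y_d(\sigma)c(\sigma)=\|\supp(y_d)\|$. The paper makes the integrality step explicit by writing the constraint matrix as the block $\begin{bmatrix}\delta & I\\ 0 & I\end{bmatrix}$, augmenting it with a diagonal $\pm1$ block $I'_{n_{d-1}}$ to pin down the signs of the free variables $y_{d-1}(\tau)$ at the given vertex, and then applying Cramer's rule to the square subsystem of tight constraints; because the right-hand side has a single $1$ and the rest zeros, every component of the vertex lies in $\{-1,0,1\}$, and $y_d\ge 0$ forces $y_d(\sigma)\in\{0,1\}$.

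One step in your sketch needs repair. From ``TU $\Rightarrow$ integral vertex'' you only get $y_d\in\mathbb Z$, and your proposed bridge ``$y_{d-1}\cdot\partial\sigma\in\{-1,0,1\}$ at a vertex'' does not follow: that inner product is a sum of up to $d+1$ signed entries of $y_{d-1}$ and can be any integer even when $y_{d-1}$ is itself $\{-1,0,1\}$-valued, so total unimodularity of $\partial$ alone does not bound it. The paper sidesteps this by using Cramer's rule with the specific $\{0,1\}$ right-hand side to obtain $y_d(\sigma)\in\{-1,0,1\}$ directly, rather than routing through $\delta(y_{d-1})$. (Also, the first constraint reads $y_d(\sigma)\geq -y_{d-1}\cdot\partial\sigma$, not $\leq$.) Your closing paragraph on minimality over \emph{all} directed combinatorial $\gamma$-cuts invokes the proposition that a minimal cut equals the support of a coboundary, but that proposition is stated for undirected combinatorial cuts; the paper does not spell this step out either and simply concludes minimality from $y_d\in\{0,1\}$ together with optimality of $y$ for \ref{lp:cut}.
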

\begin{proof}
We can write the constraint matrix of \ref{lp:cut} as the $2n_d \times (n_d + n_{d-1})$ block matrix
\[A = \begin{bmatrix} \delta & I_{n_d} \\  0_{n_d} & I_{n_d} \end{bmatrix}.\]
Since $\K$ is relative torsion-free in dimension $d-1$ Theorem~\ref{thm:dey} tells us that $\partial_d$ is totally unimodular; further, we have that $\partial^T = \delta$ is also totally unimodular.
Total unimodularity is preserved under the operation of adding a row or column consisting of exactly one component equal to $1$ and the remaining components equal to $0$, so $A$ is totally unimodular~\cite[Section 19.4]{LinearIntegerProgramming}. 
We write \ref{lp:cut} as the linear system $Ax \geq b$ where $b$ is a $n_d + n_{d-1}$ dimensional vector with exactly one component equal to $1$ and the remaining components equal to $0$.
Let $y = [y_{d-1}, y_d]^T$ be an optimal vertex solution to \ref{lp:cut}.
For every $(d-1)$-simplex $\tau \in \K^{d-1}$ we either have $y_{d-1}(\tau) \geq 0$ or $y_{d-1}(\tau) \leq 0$. Let $I'_{n_{d-1}}$ be the matrix whose rows correspond to these inequalities. Note that $I'_{n_{d-1}}$ is a diagonal matrix with entries in $\{-1, 1\}$. Now we consider the $(2n_d + n_{d-1}) \times (n_d + n_{d-1})$ dimensional linear system $A' x \geq b'$ where \[A' = \begin{bmatrix} \delta & I_{n_d} \\ 0 & I_{n_d} \\ I'_{n_{d-1}} & 0 \end{bmatrix}\] 
and $b'$ is constructed by appending extra zeros to $b$. We construct $y'$ from $y$ similarly. 
Note that $A'$ is totally unimodular and $y'$ is a vertex solution of the system. There exists a vertex $v$ of the polyhedron $P \subseteq \R^{n_d + n_{d-1}}$ corresponding to the linear system such that $A'y' = v \geq b'$ such that $n_{d-1} + n_d$ constraints are linearly independent and tight.
Hence, there is an $(n_{d-1} + n_d) \times (n_{d-1} + n_d)$ square submatrix $A''$ with $A''y' = b''$ where $b''$ is $b'$ restricted to the tight constraints.
We will use Cramer's rule to show that the vertex solution $y$ has components coming from the set $\{-1, 0, 1\}$.
Let $A''_{i, b''}$ be the matrix obtained by replacing the $i^{\text{th}}$ column of $A''$ with $b''$. By Cramer's rule we compute the $i^{\text{th}}$ component of $y$ as $y_i = \frac{\det(A''_{i,b''})}{\det(A'')}$. Since both $A''_{i,b''}$ and $A''$ are totally unimodular we have $v_i \in \{-1, 0, 1\}$. Further, we know that $A''$ is non-singular because it corresponds to linearly independent constraints.

By the above argument we know that an optimal solution $y$ to \ref{lp:cut} has all of its components contained in the set $\{-1, 0, 1\}$. The constraint $y_d \geq 0$ means that for all $d$-simplices $\sigma$ we have $y_d(\sigma) \in \{0, 1\}$ and $\sum_{\sigma \in \K^d} y_d(\sigma) c(\sigma) = \|\supp(y_d)\|$. Hence, $\supp(y_d)$ is a minimum directed combinatorial $\gamma$-cut.
\end{proof}

\section{Embedded simplicial complexes}\label{sec:maxflow-embed}
In this section we consider a simplicial flow network $(\K, c, \gamma)$ where $\K$ is a $d$-dimensional simplicial complex with an embedding into $\R^{d+1}$.
Alexander duality implies that $\R^{d+1} \setminus \K$ consists of $\beta_d + 1$ connected components. 
We call these connected components \textit{voids}; exactly one void is unbounded and we denote the voids by $V_i$ for $1 \leq i \leq \beta_{d+1}$.
Given an embedding into $\R^{d+1}$, computing the voids of $\K$ can be done in polynomial time~\cite{Dey2020}.
Further, we assume that the $d$-simplices are consistently oriented with respect to the voids. 
The embedding guarantees that every $d$-simplex $\sigma$ appears on the boundary of at most two voids; by our assumption if $\sigma$ appears on the boundary of two voids then it most be oriented positively on one and negatively on the other. We denote the boundary of the void $V_i$ by $\Bd(V_i)$.
Every $d$-simplex contained in the support of some $d$-cycle is on the boundaries of exactly two voids; it follows that the boundaries of any set of $\beta_d$ voids is a basis of $H_d(\K)$.

In order to state our theorems we need one additional assumption on $\K$. 
We assume there exists some void $V_i$ containing two unit $\gamma$-flows $\Gamma_1,\Gamma_2$ whose supports partition $\Bd(V_i)$: $\supp(\Gamma_1) \cap \supp(\Gamma_2) = \emptyset$ and $\supp(\Gamma_1) \cup \supp(\Gamma_2) = \Bd(V_i)$.
This assumption makes our problem analogous to an $st$-flow network in a planar graph such that $s$ and $t$ appear on the same face. The existence of two unit $\gamma$-flows partitioning the boundary is analogous to the two disjoint $st$-paths on the boundary of the face.
It will be convenient to take the negation of $\Gamma_1$ and treat it as a unit $(-\gamma)$-flow; otherwise the assumption conflicts with the assumed consistent orientation. This is equivalent as it does not change the support of the flow, so for the rest of the section we will take $\Gamma_1$ to be a unit $(-\gamma)$-flow.

From $\K$ we construct its directed dual graph $\K^*$ as follows. Each void becomes a vertex of $\K^*$. 
Each $d$-simplex on the boundary of two voids becomes an edge; since we assumed the $d$-simplices are consistently oriented we direct the dual edge from the negatively oriented void to the positively oriented void. The remaining $d$-simplices only appear on one void and become loops in $\K^*$. For a $d$-simplex $\sigma$ on the boundary of voids $u$ and $v$ we denote its corresponding dual edge $\sigma^* = (u^*, v^*)$ and we weight the edges by the capacity function: $c^*(\sigma^*) = c(\sigma)$.
Let $v_i^*$ be the vertex dual to the void whose boundary is partitioned by $\supp(\Gamma_1)$ and $\supp(\Gamma_2)$. We split $v_i^*$ into two new vertices denoted $s^*$ and $t^*$. The edges incident to $v_i^*$ whose dual $d$-simplices were contained in $\supp(\Gamma_1)$ become incident to $s^*$, and the edges whose dual $d$-simplices were contained in $\supp(\Gamma_2)$ become incident to $t^*$. We add the directed edge $(t^*, s^*)$ and set its capacity to infinity; $c^*((t^*, s^*)) = \infty$.
Returning to the analogy of a planar graph with $s$ and $t$ on the same face, splitting $v_i^*$ is analogous to adding an additional edge from $t$ to $s$ which splits their common face into two. However, for our purposes we are only concerned with the algebraic properties of the construction and do not actually need to modify the simplicial complex.

We need to update the chain complex associated with $\K$ to account for the voids and the splitting of $v_i^*$.
We add an additional basis element $\Sigma$ to $C_d(\K)$ such that $\partial \Sigma = \gamma$ and give it infinite capacity; $c(\Sigma) = \infty$. In our construction $\Sigma$ is dual to the edge $(t^*, s^*)$. In our planar graph analogy $\Sigma$ plays the role of an edge from $t$ to $s$ drawn entirely in the outer face; to make this precise we will need to add an additional chain group $C_{d+1}(\K)$.
We add each void $V_j$ with $j \neq i$ as a basis element of $C_{d+1}(\K)$ and define the boundary operator as $\partial_{d+1}V_j =\sum_{\sigma \in \bd(v)} (-1)^{k_\sigma} \sigma$ where $k_\sigma = 0$ if $\sigma$ is oriented positively on $V_j$ and $k_\sigma = 1$ if $\sigma$ is oriented negatively on $V_j$. 
Next we add additional basis elements $S$ and $T$ whose boundaries are defined by $\partial_{d+1}S = \Gamma_1 + \Sigma$ and $\partial_{d+1}T = \Gamma_2 - \Sigma$.
The inclusion of $C_{d+1}(\K)$ results in a valid chain complex since by definition the image of $\partial_{d+1}$ under each basis element is a $d$-cycle. Moreover, in the new complex we have $H_d(\K) \cong 0$ since the boundaries of the voids generate $H_d(\K)$.

Given our new chain complex we can extend the dual graph $\K^*$ to a dual complex; this construction is reminiscent of the dual of a polyhedron.
We define the dual complex by the isomorphism of chain groups $C_k(\K^*) \cong C_{d - k +1}(\K)$.
The dual boundary operator $\partial^*_k \colon C_k(\K^*) \rightarrow C_{k-1}(\K^*)$ is the coboundary operator $\delta_{d -k +2}$, and the dual coboundary operator $\delta^*_k \colon C_{k - 1}(\K^*) \rightarrow C_k(\K^*)$ is the boundary operator $\partial_{d - k + 2}$. The primal boundary operator commutes with the dual coboundary operator, and the primal coboundary operator commutes with the dual boundary operator. We summarize the duality in the following diagram.

\begin{equation*}
    \begin{tikzcd}
    C_{d+1}(\K) 
        \arrow[d, leftrightarrow, "\cong"] 
        \arrow[r, "\partial_{d+1}"] 
  & C_{d}(\K) 
        \arrow[d, leftrightarrow, "\cong"] 
        \arrow[r, "\partial_{d}"] 
        \arrow[l, "\delta_{d+1}"]
  & \dots 
        \arrow[r, "\partial_1"] 
        \arrow[l, "\delta_d"]
  & C_0(\K) 
        \arrow[l, "\delta_1"]
        \arrow[d, leftrightarrow, "\cong"] \\
    C_0(\K^*) 
        \arrow[r, "\delta^*_1"] 
  & C_1(\K^*) 
        \arrow[l, "\partial^*_1"]
        \arrow[r, "\delta^*_2"] 
  & \dots 
        \arrow[l, "\partial^*_2"]
        \arrow[r, "\delta^*_{d+1}"]
  & C_{d+1}(\K^*)
        \arrow[l, "\partial_{d+1}^*"]
    \end{tikzcd}
\end{equation*}

We now have enough structure to state our duality theorems.
In Section~\ref{sub:maxflow} we show that computing a max-flow for $(\K, c, \gamma)$ is equivalent to computing a shortest path from $s^*$ to $t^*$ in $\K^*$.
In Section~\ref{sub:mincut} we show that computing a minimum cost $\gamma$-cut $p$ is equivalent to computing a minimum cost unit $s^*t^*$-flow in $\K^*$.

\subsection{Max-flow / shortest path duality}\label{sub:maxflow}
We compute a shortest path from $s^*$ to $t^*$ in $\K^*$ using a well-known shortest paths linear program. Details on the linear program can be found in~\cite{jeff}.

\begin{equation}\tag{LP3}
\begin{aligned}
\text{maximize}& \quad \dist(t^*) \\
\text{subject to}& \quad \dist(s^*) = 0\\
& \quad \dist(v^*) - \dist(u^*) \leq c^*((u^*, v^*)) && \forall \, (u^*, v^*) \in E
\end{aligned}
\label{lp:shortest}
\end{equation}
The solution to \ref{lp:shortest} is a function $\dist \colon V(\K^*) \rightarrow \R$ which maps a vertex to its distance from $s^*$ under the weight function $c$. By duality, $\dist$ is a $(d+1)$-cochain mapping the voids to $\R$.
In the following theorem we will show that $\dist$ is equivalent to a $\gamma$-flow with value equal to $\dist(t^*)$.

\begin{theorem}\label{thm:dual1}
Let $(\K,c,\gamma)$ be a simplicial flow network where $\K$ is a $d$-dimensional simplicial complex embedded into $\R^{d+1}$ with two unit $\gamma$-flows whose supports partition the boundary of some void $\Bd(V_i)$.
There is a bijection between $\gamma$-flows of $(\K,c,\gamma)$ and $s^*t^*$-paths in $\K^*$ such that the value of a $\gamma$-flow equals the length of its corresponding $s^*t^*$-path.
\end{theorem}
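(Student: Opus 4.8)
The plan is to exhibit explicit maps in both directions and check they are mutually inverse and value-preserving, using the duality dictionary $C_k(\K^*) \cong C_{d-k+1}(\K)$ together with the fact that in the augmented chain complex $H_d(\K) \cong 0$. First I would pass from the shortest-path LP~\ref{lp:shortest} to its dual. The LP variables $\dist$ form a $(d+1)$-cochain on $\K$ (equivalently a $0$-chain on $\K^*$ assigning a real number to each void other than $V_i$, with $\dist(s^*)=0$). Its coboundary $\delta_{d+1}(\dist) = \partial^*_1(\dist)$ is a $d$-chain $f$ on $\K$: for a $d$-simplex $\sigma$ dual to the edge $(u^*,v^*)$ we get $f(\sigma) = \dist(v^*) - \dist(u^*)$, and the edge constraint is exactly $f(\sigma) \le c(\sigma)$; nonnegativity of $f$ on the tree edges of a shortest-path solution comes from the standard fact that along a shortest path $\dist$ is monotone, so I would argue the optimal vertex solution of \ref{lp:shortest} has $0 \le f(\sigma) \le c(\sigma)$. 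Since $f = \delta_{d+1}(\dist) \in \im(\partial_{d+1}^{\K})$... wait — more carefully, $f$ lies in the image of the dual boundary $\partial^*_1 = \delta_{d+1}$, hence $\partial_d f = \partial_d \delta_{d+1}(\dist)$; I must instead observe that $f$ as a $d$-chain is a \emph{cycle} because it is dual to a $0$-boundary's coboundary, i.e.\ it lies in $\ker \partial_d$ by the commuting square of the duality diagram (the dual of "$\partial^* \partial^* = 0$" read through the isomorphisms). Then $\partial f = k\gamma$ for some $k$, and tracking the special basis element $\Sigma$ (dual to the edge $(t^*,s^*)$, with $\partial\Sigma = \gamma$) gives $k = f(\Sigma) = \dist(s^*) - \dist(t^*)$ up to sign; since $c(\Sigma)=\infty$ this is unconstrained, and after fixing orientations this equals $\dist(t^*)$. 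Thus $f$ is a $\gamma$-flow of value $\dist(t^*)$.

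For the reverse direction, given a $\gamma$-flow $f$ I would form the $\gamma$-circulation $f' = f + k\Sigma \in \ker\partial_d = Z_d(\K)$ in the augmented complex. Because $H_d(\K)\cong 0$ in this complex (the void boundaries and $S,T$ generate everything), every $d$-cycle is a $d$-boundary, so $f' = \partial_{d+1}(g)$ for some $(d+1)$-chain $g$; under the duality $g$ is a $0$-chain on $\K^*$, i.e.\ a function $\dist$ on the voids, and $f' = \partial_{d+1}(g)$ translates to $f'(\sigma) = \dist(v^*) - \dist(u^*)$ for each dual edge, which is precisely a potential/distance labeling. Feasibility $f(\sigma)\le c(\sigma)$ becomes the edge constraint of \ref{lp:shortest}; normalizing $\dist(s^*)=0$ (we may add a constant to $g$ since $H_{d+1}$ of the augmented complex is one-dimensional, spanned by the fundamental class of the sphere-like object — or simply because $\dist$ is only determined up to a global additive constant) pins down $\dist$. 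To get from a general labeling to an honest $s^*t^*$-path I would invoke the standard LP argument: the shortest-path polytope's vertices correspond to actual paths (shortest-path tree distances), and conversely each simple $s^*t^*$-path in $\K^*$, being a sequence of dual edges, corresponds to a $d$-chain whose boundary is $\gamma$ (each intermediate void contributes a canceling pair, analogous to conservation of flow), with value equal to the path length $\sum c^*$ along it; this uses that the two $\gamma$-flows $\Gamma_1,\Gamma_2$ partitioning $\Bd(V_i)$ are exactly what makes $s^*$ and $t^*$ well-defined endpoints after the vertex split.

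The main obstacle I expect is the bookkeeping of signs and orientations: making precise that the dual of the coboundary operator is the boundary operator shifted in degree (the commuting diagram in the excerpt), that the special elements $\Sigma, S, T$ interact correctly with $\partial_{d+1}$ so that $H_d$ of the augmented complex really vanishes, and that the value $f(\Sigma)$ matches $\dist(t^*)$ with the correct sign under the chosen consistent orientation (recall the excerpt already had to flip $\Gamma_1$ to a $(-\gamma)$-flow for exactly this reason). A secondary subtlety is the passage between arbitrary feasible distance labelings and genuine $s^*t^*$-paths: one should either restrict to vertex solutions of \ref{lp:shortest} and cite that these are shortest-path-tree distances, or decompose a feasible labeling's tight edges into a path; I would handle this by working at the level of LP vertices, so that "bijection" is really a bijection between combinatorial $s^*t^*$-paths and the extreme $\gamma$-flows, with the value equality holding throughout. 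Everything else — linearity of the two maps, checking they invert each other — is routine diagram-chasing through the isomorphisms $C_k(\K^*)\cong C_{d-k+1}(\K)$.
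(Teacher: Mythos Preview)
Your approach is essentially the paper's: both directions use the identification $C_0(\K^*)\cong C_{d+1}(\K)$ together with $H_d(\K)\cong 0$ in the augmented complex to write a $\gamma$-circulation as $\partial_{d+1}$ of a $(d{+}1)$-chain, then read that chain as a distance labeling $\dist$ on the voids with $\dist(t^*)=f(\Sigma)$. One bookkeeping fix: the operator taking $\dist\in C_0(\K^*)$ to a $d$-chain is $\delta^*_1=\partial_{d+1}$, not $\partial^*_1=\delta_{d+1}$, and once you write $f=\partial_{d+1}(\dist)$ the cycle property is immediate from $\partial_d\partial_{d+1}=0$ (no roundabout argument needed); your extra care about extracting an honest combinatorial $s^*t^*$-path from a feasible labeling actually goes beyond what the paper proves, since the paper tacitly identifies ``$s^*t^*$-paths'' with feasible solutions of \ref{lp:shortest}.
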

\begin{proof}
Recall, by our discussion in Section~\ref{sec:cutflow} there is a value-preserving bijection between $\gamma$-flows and $\gamma$-circulations in $\K$ and the value of a $\gamma$-circulation $f$ is given by $f(\Sigma)$.
We can write $f$ as a linear combination of $d$-cycles and by our construction there is a basis for the $d$-cycle space given by $\beta_d$ elements of $\im \partial_{d+1}$.
To form our basis we pick every element of $\im \partial_{d+1}$ except $\partial_{d+1} S$.
Hence, $f$ can be written as a linear combination $f = \sum_{i=0}^{\beta_d - 1} \alpha_i B_i + \alpha_T B_T$ where $\partial_{d+1} V_i = B_i$ and $\partial_{d+1} B_T = T$.
We construct a $(d+1)$-cochain $P$ by the mapping $P(V_i) = \alpha_i$, $P(S) = 0$, $P(T) = \alpha_T$.
Dual to $P$ is a $0$-cochain on the vertices of $\K^*$ which we call $\dist$.
By construction we have $\dist(s^*) = 0$. Since $\K$ is consistently oriented on the voids and $f$ obeys the capacity constraints we have $\dist(v^*) - \dist(u^*) \leq c^*((u^*, v^*))$ for every edge $(u^*, v^*)$.
Finally, we have $\dist(t^*) = f(\Sigma)$ since $f(\Sigma) = P(T) - P(S) = \alpha_T$.

Conversely, let $\dist$ be some vector satisfying the constraints of \ref{lp:shortest}.
As $\dist$ is a cochain on the vertices of $\K^*$ by duality we may view $\dist$ as an element of $C_{d+1}(\K)$ hence $\partial(\dist)$ is a circulation in $\K$ obeying the capacity constraints.
Further, we have that the component of $\partial(\dist)$ indexed by $\Sigma$ is equal to $\dist(t^*) - \dist(s^*) = \dist(t^*)$. Hence, $\|\partial(\dist)\| = \dist(t^*)$ which completes the proof.
\end{proof}

\subsection{Min-cut / min cost flow duality}\label{sub:mincut}

We begin this section by stating the minimum cost flow problem in graphs.
The minimum cost flow problem asks to find the cheapest way to send $k$ units of flow from $s$ to $t$.
An instance of the minimum cost flow problem is a tuple $(G, w, c, k)$ where $G=(V,E)$ is a directed graph, $w, c \in C_1(G)$, and $k \in \R$. The 1-chains represent the weight and capacity of each edge, and $k$ is the demand of the network. The goal of the minimum cost flow problem is to find an $st$-flow obeying the following constraints.
\begin{align*}
\text{minimize}& \quad \sum_{e \in E} w(e)f(e)\\
\text{subject to}& \quad \delta(v) \cdot f = 0 & \forall v \in V \setminus \{s, t\}\\
& \quad 0 \leq f(e) \leq c(e) \\
& \quad \delta(s) \cdot f = -k \\
& \quad \delta(t) \cdot f = k 
\end{align*}
The first two constraints are conservation of flow and capacity constraints. The third and fourth are the demand constraints which say $f$ must send exactly $k$ units of flow from $s$ to $t$; note that only one of these constraints is necessary. 
We will compute a minimum directed $\gamma$-cut in $\K$ by solving the minimum cost flow problem with $k=1$ in $\K^*$.
We assume there is a weight function $w \colon \K^d \rightarrow \R^+$ on the $d$-skeleton of $\K$, which after dualizing becomes a weight function $w^*$ on the edges of $\K^*$.
In the following theorem the capacity function is not needed, so we will assume each edge in $\K^*$ has infinite capacity.

\begin{theorem}\label{thm:dual2}
Let $\K$ be a $d$-dimensional simplicial complex embedded into $\R^{d+1}$ with two unit $\gamma$-flows whose supports partitions the boundary of some void $\Bd(V_i)$.
There is a bijection between $\gamma$-cuts $p$ in $\K$ and unit $s^*t^*$-flows $f$ in $\K^*$ such that $\|p\| = \sum w^*(e)f(e)$.
\end{theorem}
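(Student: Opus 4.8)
The plan is to exploit the chain-cochain duality diagram relating $\K$ and $\K^*$ exactly as in the proof of Theorem~\ref{thm:dual1}, but one dimension down. A $\gamma$-cut in $\K$ is a $(d-1)$-cochain $p$ with $p(\gamma) = -1$; equivalently, since $\partial\Sigma = \gamma$ in the extended chain complex, $p(\partial\Sigma) = -1$, i.e.\ $\delta(p)(\Sigma) = -1$. Under the isomorphism $C_{d-1}(\K) \cong C_2(\K^*)$ and $C_d(\K)\cong C_1(\K^*)$, the coboundary operator $\delta_d$ on $\K$ becomes the dual boundary operator $\partial^*_2 \colon C_2(\K^*)\to C_1(\K^*)$. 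So $p$ corresponds to a $2$-chain $P$ in $\K^*$ whose boundary $\partial^*_2 P$ is a $1$-chain $f := -\delta(p)$ in $\K^*$ with $f$ supported on dual edges. First I would record that $f$, being a boundary of a $2$-chain in $\K^*$, is automatically a circulation: $\partial^*_1 f = 0$, which says $f$ satisfies conservation of flow at every vertex of $\K^*$ \emph{except} possibly at $s^*$ and $t^*$, the two vertices obtained by splitting $v^*_i$ (the vertex where the splitting breaks exactness). The condition $\delta(p)(\Sigma) = -1$ translates, via the identification of $\Sigma$ with the edge $(t^*,s^*)$, into $f$ pushing exactly one unit across $(t^*,s^*)$, hence $f$ restricted to the original dual edges is a unit $s^*t^*$-flow. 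The cost identity $\|p\| = \sum |\alpha_i c(\sigma_i)|$ matches $\sum_e w^*(e)|f(e)|$ edge-by-edge because $w^*(\sigma^*) = w(\sigma)$ and the dual edge of $\sigma$ carries coefficient $\alpha_\sigma = \delta(p)(\sigma)$; if we allow signed flow on the dual graph (edges traversed backwards), $|f(e)|$ is the natural cost and the identity is immediate, while if we want a genuine nonnegative flow we reorient each dual edge so that $f(e)\ge 0$, which does not change $\sum w^*(e)|f(e)|$.

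For the converse, I would start from a unit $s^*t^*$-flow $f$ in $\K^*$ (with infinite edge capacities, as assumed), extend it by the edge $(t^*,s^*)$ carrying one unit so that the extended $f$ becomes a circulation on all of $\K^*$, i.e.\ $\partial^*_1 f = 0$. By the Hodge decomposition applied to the dual chain complex — or more directly because $H_1(\K^*)$ is controlled by $H_d(\K)\cong 0$ in the augmented complex (the voids were added precisely to kill $H_d$) — every $1$-cycle in $\K^*$ is a boundary, so $f = \partial^*_2 P$ for some $2$-chain $P \in C_2(\K^*)$. Dualizing $P$ back gives a $(d-1)$-cochain $p \in C^{d-1}(\K)$ with $\delta(p) = -f$ on the $d$-simplices and $\delta(p)(\Sigma) = -1$, hence $p(\gamma) = p(\partial\Sigma) = -1$, so $p$ is a $\gamma$-cut; and the cost identity again holds termwise. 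Checking that the two constructions are mutually inverse is routine once both directions are set up, since each is literally the duality isomorphism $C_{d-1}(\K)\leftrightarrow C_2(\K^*)$ followed by applying $\partial^*_2$ (resp.\ $\delta_d$).

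The main obstacle I anticipate is bookkeeping around the vertex split and the augmented chain complex: one must be careful that the added basis elements $S, T, \Sigma$ and the added top group $C_{d+1}(\K)$ interact correctly so that (i) a $1$-cycle in $\K^*$ that does \emph{not} use $(t^*,s^*)$ corresponds to a genuine coboundary-of-a-cochain in the \emph{original} complex (this is why we only need $\partial^*_2 P = f$, not more), and (ii) the ``$-1$ across $(t^*,s^*)$'' bookkeeping lines up with the sign convention $p(\gamma)=-1$ rather than $+1$. I would handle this by tracking the single distinguished coordinate $\Sigma$ throughout: it is the one place where exactness of the dual complex fails in the way we want, and pinning $f(\Sigma) = 1$ (equivalently $P(S) - P(T)$ or whichever combination the construction in Section~\ref{sec:maxflow-embed} dictates) is exactly the unit-flow / unit-cut normalization. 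A secondary, purely cosmetic point is the signed-vs-unsigned flow issue in the cost identity, which I would dispatch in one sentence by reorienting dual edges as above. Everything else is a direct transcription of the $\K \leftrightarrow \K^*$ dictionary already established before the theorem.
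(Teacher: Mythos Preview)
Your proposal is correct and follows essentially the same argument as the paper: both directions use the isomorphism $C_{d-1}(\K)\cong C_2(\K^*)$ so that $\delta_d$ becomes $\partial^*_2$, track the distinguished coordinate $\Sigma\leftrightarrow (t^*,s^*)$ to pin down the unit normalization, and invoke $H_d(\K)\cong 0\Rightarrow H_1(\K^*)\cong 0$ to lift a dual circulation to a $2$-chain in the backward direction. The only cosmetic difference is that the paper negates $p$ to $\bar p$ and works with $\delta(\bar p)$, whereas you set $f=-\delta(p)$ directly; these are the same map.
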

\begin{proof}
Let $p$ be a $\gamma$-cut in $\K$ and let $\bar{p}$ be its negation; that is, $\bar{p}$ is a $(d-1)$-cochain with $\bar{p}(\gamma) = 1$. By construction we have $\delta(\bar{p})(\Sigma) = \bar{p}((\partial \Sigma) = 1$.
We define $f$ to be the image of $\delta(\bar{p})$ under the duality isomorphism. Since $\bar{p}^*$ is a 2-chain in $\K^*$ and $f = \partial^*(\bar{p}^*)$ we see that $f$ is a 1-circulation in $\K^*$.
By removing the edge $\Sigma^* = (t^*, s^*)$ from $f$ we see that $f$ is an $s^*t^*$-flow with $\|f\| = 1$ since $f(\Sigma^*) = \delta(\bar{p})(\Sigma) = 1$.
Finally, we have $\|p\| = \|p'\| = \sum w(\sigma)\delta(p)(\sigma)  = \sum w(\sigma^*)f(\sigma^*)$.

Conversely, let $f^*$ be a 1-circulation in $\K^*$ with $f(\Sigma^*) = 1$. 
By assumption we have $H_d(\K) \cong 0$ and by duality $H_1(\K^*) \cong 0$.
It follows that $f^*$ can be written as a linear combination of boundaries $f^* = \sum \alpha_i B^*_i$ where $B_i^* \in \im \partial^*_2$.
Let $p^*$ be a 2-chain with $\partial_2^* p^* = f^*$. 
Dual to $f^*$ is a $d$-coboundary $f = \sum \alpha_i B_i$ where $B_i \in \im(\delta_d)$ and dual to $p^*$ is a $(d-1)$-cochain $p$ with $\delta(p) = f$.
We have that $\bar{p}$ is a $\gamma$-cut since $\bar{p}(\gamma) = \bar{p}(\partial \Sigma) = \delta(\bar{p})(\Sigma) = -f(\Sigma) = -f^*(\Sigma^*) = -1$.
Finally, we have $\sum w^*(e)f^*(e) = \sum |w(\sigma) \delta(p)(\sigma)| = \|p\| = \|\bar{p}\|$.
\end{proof}

\begin{corollary}
Let $\K$ be a $d$-dimensional simplicial complex embedded in $\R^{d+1}$ with two unit $\gamma$-flows partitioning some $\Bd(V_i)$.
There is a polynomial time algorithm computing a minimum directed combinatorial $\gamma$-cut.
\end{corollary}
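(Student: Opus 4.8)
The plan is to reduce the problem, via Theorem~\ref{thm:dual2}, to a minimum cost flow computation in the dual graph $\K^*$ — which is an ordinary directed graph and therefore tractable — and then to recover a combinatorial cut from the resulting dual flow by Proposition~\ref{prop:dircomcut}.

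First I would build the directed dual graph $\K^*$ together with the augmented chain complex of Section~\ref{sec:maxflow-embed}. The voids of $\K$ are computable in polynomial time from the embedding~\cite{Dey2020}, and orienting the $d$-simplices consistently with the voids, splitting the vertex $v_i^*$ according to the partition of $\Bd(V_i)$ by the two unit $\gamma$-flows $\Gamma_1,\Gamma_2$, and adjoining the edge $(t^*,s^*)$ dual to $\Sigma$ are all routine bookkeeping. I would then transport the capacity function to edge weights $w^* = c^*$ on $\K^*$, as in the hypotheses of Theorem~\ref{thm:dual2}.

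Second, I would solve the minimum cost flow problem on $\K^*$ with source $s^*$, sink $t^*$, weights $w^*$, and demand $k=1$; since $w^*$ is positive and $k=1$ this is at most a shortest $s^*t^*$-path computation and is solvable in strongly polynomial time. Because $\K^*$ is a graph, the node--arc incidence matrix governing its flow polytope is totally unimodular, so an optimal vertex solution $f$ is integral; discarding circulations (which carry nonnegative cost) we may take $f$ to be the indicator of a simple directed $s^*t^*$-path, so $f(e)\in\{0,1\}$. By Theorem~\ref{thm:dual2}, $f$ corresponds to a $\gamma$-cut $p$ in $\K$ with $\|p\| = \sum_e w^*(e)f(e)$; since $f$ is $\{0,1\}$-valued and supported on a simple path, $\delta(p)$ has all coefficients in $\{-1,0,1\}$ and of a single sign along that path, and choosing the orientation so that $p(\gamma)=-1$ makes them nonpositive. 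Hence $C := \supp(\delta(p)^-) = \supp(\delta(p))$ is a directed combinatorial $\gamma$-cut (Proposition~\ref{prop:dircomcut}) with $\|C\| = \sum_{\sigma\in C}c(\sigma) = \|p\| = \sum_e w^*(e)f(e)$, and the whole construction is polynomial.

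The remaining, and I expect principal, task is minimality of $C$. One direction is easy: Proposition~\ref{prop:dircomcut} applied to an arbitrary $\gamma$-cut $p'$ gives $\|\delta(p')^-\|\le\|p'\|$, so $\min_{C'}\|C'\|\le\min_{p'}\|p'\|$, and Theorem~\ref{thm:dual2} together with the max-flow/shortest-path identity of Theorem~\ref{thm:dual1} shows $\min_{p'}\|p'\|$ equals the cost of the minimum cost unit $s^*t^*$-flow we computed, namely $\|C\|$. For the reverse inequality I would show that the maximum $\gamma$-flow value is a lower bound on $\|C'\|$ for every directed combinatorial $\gamma$-cut $C'$: infeasibility of ``$\partial\Gamma=\gamma,\ \Gamma\ge 0$'' on $\K\setminus C'$ yields, by Farkas' lemma, a $\gamma$-cut $q$ with $\supp(\delta(q)^-)\subseteq C'$, and since the relevant optimization again lives on the graph $\K^*$ one can take $q$ so that $\delta(q)$ is $\{-1,0,1\}$-valued, giving $\|C'\| \ge \|\supp(\delta(q)^-)\| = c\cdot\delta(q)^-\ge\min_{q'}c\cdot\delta(q')^-$, which by Theorem~\ref{thm:maxflowmincut} is exactly the maximum $\gamma$-flow value. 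Combining the two bounds with $\|C\| = $ (max-flow value) forces $\|C\| = \min_{C'}\|C'\|$. The delicate points I anticipate are the sign/orientation bookkeeping in the duality $C_k(\K^*)\cong C_{d-k+1}(\K)$ and the verification that the integral, simple-path structure of the optimal dual flow genuinely forces the associated cochain's coboundary to be $\{-1,0,1\}$-valued with one-signed support; once these are pinned down, polynomial-time solvability of min-cost flow in graphs does the rest.
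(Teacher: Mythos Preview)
Your approach matches the paper's: reduce to a minimum cost flow in the dual graph $\K^*$ with unit demand, take an integral optimum, and pull it back through Theorem~\ref{thm:dual2} to a $\gamma$-cut whose coboundary is $\{-1,0\}$-valued, so that $\|p\|=\|\supp(\delta(p))\|$. The only cosmetic difference is that the paper also imposes capacity $1$ on every edge of $\K^*$ (which does not change the optimum for a unit demand with positive costs); its proof is three sentences and contains no explicit minimality argument at all.

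Your attempt to supply that minimality argument is therefore genuinely additional, and the hard half is right: Farkas gives a certificate $q$ with $\supp(\delta(q)^-)\subseteq C'$, and because that certificate corresponds under Theorem~\ref{thm:dual2} to a unit $s^*t^*$-flow in a graph, it can be taken to be a simple walk and hence $\{-1,0,1\}$-valued; together with Theorem~\ref{thm:dual1} this gives $\|C\|=\text{(max-flow)}\le\|C'\|$ for every directed combinatorial cut $C'$. One step is wrong, however: the ``easy direction'' inequality you attribute to Proposition~\ref{prop:dircomcut}, namely that the combinatorial cost of $\supp(\delta(p')^-)$ is at most $\|p'\|$, is false in general (take $\delta(p')$ with two entries equal to $-\tfrac12$ on unit-weight simplices). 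You don't need it: $\min_{C'}\|C'\|\le\|C\|$ is immediate because $C$ is itself a directed combinatorial $\gamma$-cut.
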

\begin{proof}
We solve the minimum cost circulation problem in $\K^*$ setting the demand and every capacity constraint equal to one. The resulting flow is dual to a $\gamma$-cut $p$ in $\K$. Since the minimum cost circulation is integral we have $\|\supp(\delta(p))\| = \|p\|$. That is, the cost of $p$ as a $\gamma$-cut equals the cost of $\supp(\delta(p))$ as a combinatorial $\gamma$-cut.
\end{proof}

\section{Ford-Fulkerson algorithm}\label{sec:ff}
In this section we show how the Ford-Fulkerson algorithm can be used to compute a maximum flow of simplicial flow network $(\K, c, \gamma)$.
In a simplicial flow network the Ford-Fulkerson algorithm picks out a \textit{augmenting chain} at every iteration which is a high dimensional generalization of an augmenting path.
As shown in Section~\ref{sec:integral_flow} a maximum flow of a simplicial flow network with integral capacities may not be integral, so it is not immediate that Ford-Fulkerson is guaranteed to halt.
To remedy this, our implementation of Ford-Fulkerson contains a heuristic reminiscent of the network simplex algorithm.
Our heuristic guarantees that at every iteration of Ford-Fulkerson the flow is a solution on a vertex of the polytope defined by the linear program. Hence, our heuristic makes our implementation of Ford-Fulkerson into a special case of the simplex algorithm.
It follows that Ford-Fulkerson does halt on a simplicial flow network, but the running time may be exponential.
Our heuristic for picking augmenting chains takes $O(n^{\omega +1})$ time since it requires solving $O(n)$ linear systems, each taking $O(n^\omega)$ time using standard methods~\cite{LUP}. In Section~\ref{sec:FFtime} we show that solving a particular type of linear system reduces to finding an augmenting chain, giving us an imprecise lower bound on the complexity of our heuristic. 

We begin this section by defining the concepts of the \textit{residual complex} and of \textit{augmenting chains} which serve as high dimensional generalizations of the residual graph and augmenting paths used in the Ford-Fulkerson algorithm for graphs.
We show that a flow is maximum if and only if its residual complex contains no augmenting chains, generalizing the well-known graph theoretic result.
This work is an extension of previous work done by Latorre who showed one direction of the theorem and leaving the other open~\cite{latorre2012maxflow}.

\subsection{The residual complex}
We now present our definitions of the residual complex and an augmenting chain.

\begin{definition}[Residual complex]
Let $(\K, c, \gamma)$ be a simplicial flow network and $f$ be a feasible flow on the network.
We define a new simplicial flow network called the \textbf{residual complex} to be the tuple $(\K_f, c_f, \gamma)$ constructed in the following way. The $d$-skeleton of $\K_f$ is the union $\K^d \cup -\K^d$, that is, for each $d$-simplex $\sigma$ in $\K$ we add an additional $d$-simplex $-\sigma$ whose orientation is opposite of $\sigma$. $\K_f^{d'} = \K^{d'}$ for dimensions $d' < d$.
The \textbf{residual capacity function} $c_f \colon \K^d_f \rightarrow \R$ is given by \[c_f(\sigma) = \begin{cases} c(\sigma) - f(\sigma) & \sigma \in \K^d,\\ f(\sigma) & -\sigma \in \K^d. \end{cases} \]
\end{definition}

\begin{definition}[Augmenting chain]

Let $\K_f$ be a residual complex for the simplicial flow network $(\K, c, \gamma)$. An \textbf{augmenting chain} is a $d$-chain $\Gamma \in C_d(\K_f)$ such that $\Gamma = \sum \alpha_i \sigma_i$ and $\partial \Gamma = \gamma$ with $\alpha_i \geq 0$.
\end{definition}

Note that an augmenting chain need not obey the residual capacity constraint $c_f$. This is because after finding an augmenting chain the amount of flow sent through the chain will be normalized by the coefficients $\alpha_i$ producing a new chain respecting the capacity constraints.
The following two lemmas prove the main result of the section. The first of which was observed by Latorre~\cite{latorre2012maxflow}.

\begin{lemma}[Latorre~\cite{latorre2012maxflow}]\label{lem:FF1}
Let $(\K, c, \gamma)$ be a simplicial flow network.
If $f$ is a maximum flow then $\K_f$ contains no augmenting chains.
\end{lemma}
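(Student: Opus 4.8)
The plan is to prove the contrapositive: if the residual complex $\K_f$ contains an augmenting chain then $f$ is not maximum. So suppose $\Gamma = \sum_i \alpha_i \sigma_i \in C_d(\K_f)$ is an augmenting chain, i.e.\ $\partial \Gamma = \gamma$ and every $\alpha_i \geq 0$, and that (as is implicit in the residual construction) each $\sigma_i \in \supp(\Gamma)$ has positive residual capacity $c_f(\sigma_i) > 0$. I will use $\Gamma$ to push a small positive amount of extra flow through $\K$, producing a feasible flow of strictly larger value.

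First I would normalize $\Gamma$: whenever a simplex $\sigma \in \K^d$ and its reversed copy $-\sigma$ both occur in $\supp(\Gamma)$, cancel the smaller coefficient against the larger. Since $\partial(-\sigma) = -\partial \sigma$, this does not change $\partial \Gamma = \gamma$ and keeps all coefficients nonnegative, so we may assume at most one of $\sigma, -\sigma$ lies in $\supp(\Gamma)$ for each $\sigma$. Let $\pi \colon C_d(\K_f) \to C_d(\K)$ be the linear map with $\pi(\sigma) = \sigma$ and $\pi(-\sigma) = -\sigma$ for $\sigma \in \K^d$; because the $(d-1)$-skeleton of $\K_f$ is that of $\K$, the map $\pi$ intertwines the two boundary operators, whence $\partial(\pi(\Gamma)) = \partial \Gamma = \gamma$ in $C_{d-1}(\K)$. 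Set $\epsilon \coloneqq \min\{\, c_f(\sigma_i)/\alpha_i : \alpha_i > 0 \,\} > 0$ and define $f' \coloneqq f + \epsilon\, \pi(\Gamma)$.

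It then remains to verify that $f'$ is feasible with $\|f'\| > \|f\|$, which is routine casework on $\sigma \in \K^d$. If the forward copy $\sigma$ appears in $\Gamma$ with coefficient $\alpha$, then $f'(\sigma) = f(\sigma) + \epsilon\alpha$ lies between $f(\sigma) \geq 0$ and $f(\sigma) + c_f(\sigma) = c(\sigma)$; if instead the reversed copy $-\sigma$ appears with coefficient $\alpha$, then $f'(\sigma) = f(\sigma) - \epsilon\alpha$ lies between $f(\sigma) - c_f(-\sigma) = 0$ and $f(\sigma) \leq c(\sigma)$; and if neither appears, $f'(\sigma) = f(\sigma)$ is already feasible. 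Thus $0 \leq f'(\sigma) \leq c(\sigma)$ for all $\sigma$, while $\partial f' = \partial f + \epsilon \gamma = (\|f\| + \epsilon)\gamma$, so $f'$ is a feasible flow with $\|f'\| = \|f\| + \epsilon > \|f\|$, contradicting maximality of $f$.

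I do not expect a serious obstacle here: this is the classical augmentation argument transported through the simplicial boundary operator. The two points that need care are the orientation bookkeeping for the reversed copies $-\sigma$ (handled by the cancellation step and the map $\pi$) and the requirement that every simplex in the augmenting chain carry strictly positive residual capacity, which is exactly what guarantees $\epsilon > 0$; without it the statement would fail, so I would make that hypothesis explicit. The genuinely substantive direction is the converse (no augmenting chain $\Rightarrow$ $f$ maximum), which is where torsion and the possibility of fractional maximum flows enter, and that will presumably be the next lemma rather than this one.
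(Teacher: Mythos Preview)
Your proposal is correct and follows essentially the same augmentation argument as the paper: take an augmenting chain, scale it by $\epsilon = \min_i c_f(\sigma_i)/\alpha_i$, add it to $f$, and verify feasibility and strictly larger value. Your treatment is in fact a bit more careful than the paper's---you explicitly handle the cancellation of $\sigma$ and $-\sigma$, introduce the projection $\pi$, and flag the need for $c_f(\sigma_i) > 0$ on $\supp(\Gamma)$ (a hypothesis the paper uses implicitly but never states)---but the underlying idea is identical.
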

\begin{proof}
Let $\Gamma = \sum \alpha_i \sigma_i$ be an augmenting chain in $\K_f$ and let $\alpha = \min \{\frac{1}{\alpha_i}c_f(\sigma_i)\}$.
Define the new flow as follows \[f'(\sigma) = \begin{cases} f(\sigma) + \alpha \cdot \alpha_i & \sigma = \sigma_i, \\ f(\sigma) - \alpha \cdot \alpha_i & -\sigma = \sigma_i. \end{cases} \]
We have $f(\sigma_i) + \alpha \cdot \alpha_i \leq f(\sigma_i) + c_f(\sigma_i) = c(\sigma_i)$ and $f(\sigma_i) - \alpha \cdot \alpha_i \geq f(\sigma_i) - c_f(\sigma_i) = 0$ so $f'$ obeys the capacity constraints.
To show that $f'$ obeys conservation of flow we compute the following equality \[\sum_{\sigma \in \K^d} f'(\sigma) \partial(\sigma) = \sum_{\sigma \in \K^d \setminus \supp(\Gamma)} f(\sigma)\partial(\sigma) + \sum_{\sigma_i \in \supp(\Gamma)} (f(\sigma_i) \pm \alpha \cdot \alpha_i) \partial(\sigma_i) = \alpha \cdot \partial \left( \sum_{\sigma_i \in \supp(\Gamma)} \alpha_i \sigma_i \right) = 0. \] 
\end{proof}

\begin{lemma}\label{lem:FF2}
Let $(\K, c, \gamma)$ be a simplicial flow network. If $f$ is a flow such that $\K_f$ contains no augmenting chains then $f$ is a maximum flow.
\end{lemma}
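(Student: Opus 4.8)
The plan is to prove the contrapositive: if $f$ is not a maximum flow, then $\K_f$ contains an augmenting chain. So suppose $f$ is feasible but $\|f\| < \|f^*\|$ for some feasible flow $f^*$ with $\partial f^* = \|f^*\| \gamma$. I will produce a $d$-chain $\Gamma \in C_d(\K_f)$ with nonnegative coefficients and $\partial \Gamma = \gamma$.

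First I would consider the difference $g = f^* - f$, a $d$-chain in $\K$ with $\partial g = (\|f^*\| - \|f\|)\gamma =: k\gamma$ where $k > 0$. I would then "lift" $g$ into the residual complex $\K_f$: for each $d$-simplex $\sigma \in \K^d$, if $g(\sigma) \geq 0$ put that amount of flow on the residual copy $\sigma$, and if $g(\sigma) < 0$ put $-g(\sigma)$ units on the reversed copy $-\sigma$. Call the resulting chain $\Gamma' \in C_d(\K_f)$; by construction $\Gamma'$ has nonnegative coefficients and its image under the natural projection $C_d(\K_f) \to C_d(\K)$ (sending $-\sigma \mapsto -\sigma$) equals $g$, so $\partial \Gamma' = k\gamma$ where here I mean the boundary computed in $\K_f$, which agrees with that in $\K$ since $\K_f^{d-1} = \K^{d-1}$. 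I should also check feasibility with respect to $c_f$ is not required by the definition of augmenting chain, but it is worth noting that $g(\sigma) \le c(\sigma) - f(\sigma) = c_f(\sigma)$ when $g(\sigma) > 0$ and $-g(\sigma) \le f(\sigma) = c_f(-\sigma)$ when $g(\sigma) < 0$, so in fact $\Gamma'$ already respects residual capacities; this will not be needed but is reassuring.

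Finally, to get an augmenting chain of value exactly $1$ (matching the definition $\partial \Gamma = \gamma$), I would rescale: set $\Gamma = \frac{1}{k}\Gamma'$. This still has nonnegative coefficients, lies in $C_d(\K_f)$, and satisfies $\partial \Gamma = \gamma$, so it is an augmenting chain. This contradicts the hypothesis that $\K_f$ contains no augmenting chains, completing the contrapositive.

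The main obstacle I anticipate is the bookkeeping around orientations and the precise relationship between the boundary operator on $\K$ and on $\K_f$ — in particular making rigorous the claim that a chain with mixed-sign coefficients in $\K$ corresponds to a nonnegative-coefficient chain in $\K_f$ with the same boundary. One must be careful that $\partial(-\sigma) = -\partial(\sigma)$ in $\K_f$ so that the lift is boundary-preserving, and that no $(d-1)$-simplices were duplicated (they were not, by definition of the residual complex). A secondary subtlety is purely notational: the definition of augmenting chain fixes value $1$, whereas the flow difference naturally has value $k = \|f^*\| - \|f\|$, so the rescaling step is essential and should be stated explicitly. None of these is deep, but they are exactly the places where a careless argument would go wrong, so I would write them out carefully.
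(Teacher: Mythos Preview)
Your proposal is correct and follows essentially the same approach as the paper: both take the contrapositive, form the difference $f^* - f$ (the paper writes $g - f$), lift negative coefficients onto the reversed simplices $-\sigma$ in $\K_f$, and verify the result is an augmenting chain. You are in fact slightly more careful than the paper in two places: you explicitly rescale by $1/k$ to match the requirement $\partial\Gamma = \gamma$ (the paper leaves the boundary as $(\|g\|-\|f\|)\gamma$), and you correctly observe that obeying the residual capacities is not part of the definition of an augmenting chain, whereas the paper checks them anyway.
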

\begin{proof}
By way of contradiction assume that $f$ is not a maximum flow and $g$ is some other flow with higher value than $f$. We show that the $d$-chain $g - f$ is an augmenting chain in $\K_f$. First, we note that the boundary of $g - f$ is equal to $(\|g\| - \|f\|)\gamma$ implying $g - f$ obeys conservation of flow, so all that remains is to check that $g - f$ obeys the capacity constraints in $\K_f$. Let $\sigma$ be a $d$-simplex in $\K$, there are two cases to consider. First, if $f(\sigma) \leq g(\sigma)$ we have $g(\sigma) - f(\sigma) \leq c(\sigma) - f(\sigma) = c_f(\sigma)$ and the capacity constraint is obeyed.
Second, if $g(\sigma) < f(\sigma)$ we interpret this as applying $|g(\sigma) - f(\sigma)|$ flow to $-\sigma$. Hence, $|g(\sigma) - f(\sigma)| = f(\sigma) - g(\sigma) < f(\sigma) = c_f(\sigma)$ which concludes the proof.
\end{proof}

Lemmas~\ref{lem:FF1} and \ref{lem:FF2} give us the main theorem of the section.

\begin{theorem}\label{thm:residual}
Let $(\K,c, \gamma)$ be a simplicial flow network. A flow $f$ is a maximum flow if and only if $\K_f$ contains no augmenting chains.
\end{theorem}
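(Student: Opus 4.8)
The plan is to derive Theorem~\ref{thm:residual} directly from the two lemmas already proven, since together they establish both implications of the biconditional. Concretely, the forward direction ($f$ maximum $\implies$ $\K_f$ has no augmenting chains) is exactly the content of Lemma~\ref{lem:FF1}, and the reverse direction ($\K_f$ has no augmenting chains $\implies$ $f$ is maximum) is exactly Lemma~\ref{lem:FF2}. So the proof is essentially a one-line assembly: cite the two lemmas and observe that their conjunction is the claimed equivalence.

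First I would state that if $f$ is a maximum flow, then by Lemma~\ref{lem:FF1} the residual complex $\K_f$ contains no augmenting chains. Conversely, if $\K_f$ contains no augmenting chains, then by Lemma~\ref{lem:FF2} the flow $f$ is maximum. Combining these two statements gives the desired biconditional: $f$ is a maximum flow if and only if $\K_f$ contains no augmenting chains.

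There is essentially no obstacle here, since all the real work was done in establishing Lemmas~\ref{lem:FF1} and~\ref{lem:FF2}. The only thing worth double-checking is that the hypotheses of the two lemmas match exactly (both assume only a simplicial flow network $(\K, c, \gamma)$ and a feasible flow $f$, with no extra conditions such as integrality or embeddability), so that there is no gap between the two halves of the equivalence. Since that is indeed the case, the theorem follows immediately. I would therefore write the proof as a short paragraph invoking both lemmas, perhaps noting that this generalizes the classical graph-theoretic characterization of maximum flows via the absence of augmenting paths in the residual graph, and that it answers in full a question left open by Latorre~\cite{latorre2012maxflow}, who had established only the direction recorded in Lemma~\ref{lem:FF1}.
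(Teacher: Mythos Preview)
Your proposal is correct and matches the paper's approach exactly: the paper itself simply states that Lemmas~\ref{lem:FF1} and~\ref{lem:FF2} give the theorem, without writing out any further argument. Your additional remarks about checking the hypotheses and the connection to Latorre's work are fine but optional.
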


\subsection{Augmenting chain heuristic}
In this section we provide a heuristic for the Ford-Fulkerson algorithm that is guaranteed to halt on a simplicial flow network.
Our example in Section~\ref{sec:integral_flow} shows that a maximum flow may have fractional value, so it's not immediately clear that Ford-Fulkerson halts on all simplicial flow networks.
To remedy this our heuristic ensures that at each step the flow corresponds to a vertex of the flow polytope (defined in the next paragraph). 
As there are a finite number of vertices, and the value of the flow increases at every step, it follows that under this heuristic Ford-Fulkerson must halt.
Under our heuristic Ford-Fulkerson becomes a special case of the simplex algorithm.
Our heuristic is reminiscent of the network simplex algorithm which maintains a tree at every iteration. See the book by Ahuja, Magnanti, and Orlin for an overview of the network simplex algorithm~\cite{Ahuja}.

We define the \textit{flow polytope} of $(\K, c, \gamma)$ to be the polytope $P \subset \R^{n_d}$ defined by the constraints of the maximum flow linear program \ref{lp:flow}.
A \textit{vertex} of the polytope $P$ is any feasible solution to \ref{lp:flow} with at least $n_d$ tight linearly independent constraints.
We will ensure that at every step of Ford-Fulkerson our flow $f$ is a vertex of $P$. 
To do this we will make sure that the $d$-simplices corresponding to non-tight constraints of \ref{lp:flow} form an acyclic complex. 
Some straightforward algebra implies that this condition is enough to make at least $n_d$ constraints tight.
Let $\H_f$ be the subcomplex of $d$-simplices \say{half-saturated} by $f$; that is, $\sigma \in \H_f$ if and only if its capacity constraint is a strict inequality: $0 < f(\sigma) < c(\sigma)$. 
The half-saturated simplices do not make either of their two corresponding constraints tight, while $d$-simplices not in $\H_f$ make exactly one of their corresponding constraints tight.
We require that $\H_f$ be an acyclic complex at each step of Ford-Fulkerson. 
In the case of graphs, this just means that $\H_f$ is a forest. 
For a $d$-dimensional complex it means that $H_d(\H_f) = 0$. 
Acyclic complexes have been studied by Duval, Klivans, and Martin who show that they share many properties with forests and trees in graphs~\cite{Duval2016}.
The following lemma shows that if $\H_f$ is acyclic then $f$ is a vertex of the flow polytope.

\begin{lemma}\label{lem:vertex}
Let $f$ be a feasible flow for the $d$-dimensional simplicial flow network $(\K, c, \gamma)$.
If the subcomplex of half-saturated $d$-simplices $\H_f$ is acyclic then $f$ is a vertex of the flow polytope $P$.
\end{lemma}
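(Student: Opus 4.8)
The plan is to show that the normal vectors of the constraints of \ref{lp:flow} that are tight at $f$ span all of $\R^{n_d}$; since any spanning set contains $n_d$ linearly independent vectors, this immediately exhibits $n_d$ linearly independent tight constraints and certifies that $f$ is a vertex of $P$. Throughout I would work in the augmented chain complex, i.e.\ the one containing the extra basis element $\Sigma$ with $\partial\Sigma=-\gamma$ and $c(\Sigma)=\infty$, so that ``$\K$'' below refers to this complex, $n_d$ counts $\Sigma$, and $Z_d(\K)=\ker\partial_d$ is precisely the set of chains satisfying all the equality constraints $\delta(\tau)\cdot f=0$.

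Next I would enumerate the tight constraints. Because $f$ is feasible it satisfies every equality constraint, and the normals of these are the $d$-cochains $\delta(\tau)$ for $\tau\in\K^{d-1}$, which span $\im(\delta_d)\subseteq\R^{n_d}$. A capacity constraint attached to a $d$-simplex $\sigma$ is tight exactly when $\sigma\notin\H_f$ (that is, when $f(\sigma)\in\{0,c(\sigma)\}$), in which case its normal is the standard basis vector $e_\sigma$; a half-saturated simplex contributes no tight capacity constraint. Hence the span of all tight-constraint normals is $V:=\im(\delta_d)+\operatorname{span}\{e_\sigma:\sigma\notin\H_f\}$, and the whole lemma reduces to proving $V=\R^{n_d}$.

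I would prove this by computing $V^\perp$. From the adjunction $\langle\delta_d\tau,v\rangle=\langle\tau,\partial_d v\rangle$ one gets $(\im\delta_d)^\perp=\ker\partial_d=Z_d(\K)$, while $\bigl(\operatorname{span}\{e_\sigma:\sigma\notin\H_f\}\bigr)^\perp$ is the subspace of chains supported on $\H_f$. Therefore $V^\perp$ consists of the $d$-cycles of $\K$ whose support lies in $\H_f$; since $\H_f$ is a subcomplex this set is exactly $Z_d(\H_f)$. But $\H_f$ contains no $(d+1)$-simplices, so $B_d(\H_f)=0$ and $Z_d(\H_f)=H_d(\H_f)$, which vanishes by the acyclicity hypothesis. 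Thus $V^\perp=\{0\}$, so $V=\R^{n_d}$ and $f$ is a vertex of $P$.

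The routine ingredients are the adjunction computation and the observation that a $d$-cycle of $\K$ supported in the subcomplex $\H_f$ is the same thing as an element of $Z_d(\H_f)$ (using that $\partial_d$ restricted to chains on $\H_f$ lands in chains on $\H_f$). The only mild subtlety is the bookkeeping around $\Sigma$: its upper capacity constraint is never tight, but $\Sigma$ may lie in $\H_f$ once $\|f\|>0$, which is harmless because the acyclicity assumption is read as a statement about $\H_f$ as a subcomplex of the augmented complex. I do not anticipate a real obstacle beyond this; the entire content is the orthogonality argument above.
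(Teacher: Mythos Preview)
Your proof is correct. Both your argument and the paper's identify the span $V=\im(\delta_d)+\operatorname{span}\{e_\sigma:\sigma\notin\H_f\}$ of the tight-constraint normals and show it equals $\R^{n_d}$, but they finish differently. The paper counts dimensions: it observes $\dim\im(\delta_d)=n_d-\beta_d$, then explicitly selects $\beta_d$ simplices outside $\H_f$ whose basis vectors $e_\sigma$ are independent of $\im(\delta_d)$ (by arranging that removing them kills $H_d(\K)$ without raising $\dim H_{d-1}(\K)$). Your route through $V^\perp=Z_d(\K)\cap C_d(\H_f)=Z_d(\H_f)=H_d(\H_f)=0$ is cleaner and sidesteps that selection step; it also makes transparent that the entire content of the lemma is ``no nonzero $d$-cycle is supported on $\H_f$.'' The paper's approach, by contrast, actually exhibits a basis of $n_d$ tight constraints, which ties in a little more concretely with the simplex picture. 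Your handling of the $\Sigma$ bookkeeping matches the paper's implicit convention of working in the augmented complex throughout.
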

\begin{proof}
In order for $f$ to be a vertex of $P$ we need to show that at least $n_d$ of the constraints of \ref{lp:flow} are tight, and that these $n_d$ tight constraints are linearly independent.
As $f$ is a flow the $2n_{d-1}$ constraints ensuring conservation of flow are always tight.
It follows that we have $n_d - \beta_d$ tight linearly independent constraints corresponding to a basis for $\im \delta_d$.
As $\H_f$ is acyclic we have that $|\supp(\H_f)| \leq n_d - \beta_d$ since at least one $d$-simplex from each basis element of $H_d(\K)$ must be missing from $\H_f$. 
This implies that at most $2(n_d - \beta_d)$ of the $2n_d$ capacity constraints are not tight; equivalently, at least $2n_d - 2(n_d - \beta_d) = 2\beta_d$ of the capacity constraints are tight.
Since $\H_f$ is acyclic we can pick a set of $\beta_d$ $d$-simplices $\Sigma$ such that $\dim H_d(\K \setminus \Sigma) = 0$ and $\dim H_{d-1}(\K \setminus \Sigma) = \dim H_{d-1}(\K)$. Each $d$-simplex $\sigma \in \Sigma$ corresponds to some tight capacity constraint, and since removing $\sigma$ does not change the dimension of $H_{d-1}$ it is not contained in $\im \delta_d$.
It follows that the tight capacity constraint corresponding to $\sigma$ is linearly independent from the $n_d - \beta_d$ conservation of flow constraints.
Finally, since each tight capacity constraint corresponds to a unique $\sigma$ they are all linearly independent from each other.
\end{proof}

At each iteration of Ford-Fulkerson we want to pick an augmenting chain such that the resulting flow leaves $\H_f$ acyclic.
It's not clear how to pick such an augmenting chain. However, no matter what augmenting chain we pick we can always repair the flow in a way that the resulting flow leaves $\H_f$ acyclic.
We describe our method for repairing the flow in the following lemma.

\begin{lemma}\label{lem:repair}
Let $f$ be a feasible flow for the $d$-dimensional simplicial flow network $(\K, c, \gamma)$. If the subcomplex of half-saturated $d$-simplices $\H_f$ is not acyclic then in $O(n^{\omega+1})$ time we can construct a new flow $f'$ such that $\H_{f'}$ is acyclic and $\|f\| = \|f'\|$.
\end{lemma}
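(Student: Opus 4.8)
The plan is to break the cycles of $\H_f$ one at a time by pushing flow around them while never creating new half-saturated simplices. Since $\H_f$ carries no $(d+1)$-chains we have $B_d(\H_f)=0$, so "$\H_f$ is not acyclic" means $H_d(\H_f)=Z_d(\H_f)\neq 0$: there is a nonzero $d$-cycle $z$ whose support lies entirely in $\supp(\H_f)$. I would obtain such a $z$ by computing a nonzero kernel vector of the submatrix of $\partial_d$ whose columns are indexed by the half-saturated $d$-simplices; this is a single linear-system solve, costing $O(n^\omega)$ time.

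Next I would push flow along $z$, i.e. consider the family $f_t = f + tz$. Since $z$ is a $d$-cycle, $\partial f_t = \partial f$ for every $t$, so each $f_t$ obeys conservation of flow and has the same value $\|f_t\| = \|f\|$. Every $\sigma \in \supp(z) \subseteq \supp(\H_f)$ satisfies $0 < f(\sigma) < c(\sigma)$, so $f_t$ is feasible for all $t$ of sufficiently small absolute value; and because every $d$-simplex of $\K$ has finite capacity while $\supp(z)$ is nonempty, feasibility is violated once $t$ grows past some finite $t^\ast > 0$. Set $f' = f_{t^\ast}$. By the choice of $t^\ast$ at least one $\sigma^\ast \in \supp(z)$ now has $f'(\sigma^\ast) \in \{0, c(\sigma^\ast)\}$, so $\sigma^\ast \notin \H_{f'}$; meanwhile every $d$-simplex outside $\supp(z)$ keeps its old flow value, and every $\sigma \in \supp(z)$ either stays half-saturated or leaves $\H$. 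Hence $\supp(\H_{f'}) \subsetneq \supp(\H_f)$ and $\|f'\| = \|f\|$.

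Iterating this repair, $|\supp(\H_f)|$ strictly decreases, so after at most $n_d \le n$ steps the half-saturated subcomplex is acyclic (in the extreme it becomes empty, which is trivially acyclic); each step is one $O(n^\omega)$ kernel computation plus $O(n)$ work to locate $t^\ast$, giving the claimed $O(n^{\omega+1})$ bound, and the value is preserved throughout. The step deserving care is the claim that pushing $f$ along $z$ never promotes a saturated or empty simplex to a half-saturated one — this holds precisely because $z$ is supported inside the already half-saturated part of the complex, so all coordinates outside $\supp(z)$ are untouched — together with checking that $t^\ast$ is finite and strictly positive so that each repair makes genuine progress.
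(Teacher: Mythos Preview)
Your proposal is correct and follows essentially the same approach as the paper: find a $d$-cycle supported on the half-saturated simplices, push flow along it until some simplex saturates, and repeat. The only cosmetic differences are that the paper computes a full basis of $H_d(\H_f)$ rather than a single kernel vector, and it tracks $\dim H_d(\H_f)$ as the decreasing potential rather than $|\supp(\H_f)|$; both choices yield the same $O(n)$ bound on iterations and the same overall $O(n^{\omega+1})$ running time.
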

\begin{proof}
Let $f$ be a feasible flow such that $\H_f$ is not acyclic.
In polynomial time we compute a basis for $H_d(\H_f)$.
Let $\Sigma = \sum \alpha_i \sigma_i$ be a basis element of $H_d(\H_f)$ and let $\alpha = \min \{ \frac{1}{\alpha_i} c_f(\sigma_i)\}$. As in Lemma~\ref{lem:FF1} we construct the new flow $f'$ by \[f' = \begin{cases} f(\sigma) + \alpha \cdot \alpha_i & \sigma = \sigma_i \\ f(\sigma) - \alpha \cdot \alpha_i & -\sigma = \sigma_i \end{cases}. \]
The new flow $f'$ saturates some $d$-simplex in $\Sigma$ and does not introduce any new $d$-cycles to $\H_f$ as it only affects the half-saturated edges. We call the new subcomplex of half-saturated simplices $\H_{f'}$ and observe that $\dim H_d(\H_{f'}) < \dim H_d(\H_f)$.
Since $f'$ is constructed by adding a $d$-cycle to $f$ we have that $\|f\| = \|f'\|$.
We repeat the process of computing a homology basis for $\H_{f'}$ and saturating some basis element until $\H_{f'}$ is acyclic.

It remains to compute the running time of the above procedure.
Computing a homology basis takes $O(n^{\omega})$ time . To repair the flow we need to make at most $O(n)$ homology basis computations, hence the total running time is $O(n^{\omega + 1})$.
\end{proof}

To wrap up the section, we state our main theorem whose proof is immediate from Lemmas~\ref{lem:vertex} and \ref{lem:repair}.

\begin{theorem}\label{thm:heuristic}
Given a simplicial flow network $(\K, c, \gamma)$ we can compute a maximum flow $f$ by using the Ford-Fulkerson algorithm with the following heuristic: at every iteration pick an augmenting chain such that the subcomplex of half-saturated $d$-simplices $\H_f$ is acyclic.
\end{theorem}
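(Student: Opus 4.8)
The plan is to run the Ford-Fulkerson algorithm with the stated heuristic and track the sequence of flows it produces, showing that this sequence is a strictly increasing walk through the vertices of the flow polytope $P$ of \ref{lp:flow}, that it must therefore terminate, and that the flow at termination is maximum. The invariant I would carry is that the current flow $f$ is always a vertex of $P$; by \autoref{lem:vertex} this is equivalent to the subcomplex $\H_f$ of half-saturated $d$-simplices being acyclic, which is precisely what the heuristic enforces at each step.

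First I would handle initialization: begin from $f \equiv 0$, for which $\H_f = \emptyset$ is vacuously acyclic, so $f$ is a vertex of $P$. At a generic iteration, given a vertex flow $f$, I would test whether $\K_f$ contains an augmenting chain. If it does not, then \autoref{thm:residual} immediately certifies that $f$ is a maximum flow and the algorithm halts correctly. If an augmenting chain $\Gamma = \sum_i \alpha_i \sigma_i$ exists, I would augment along it exactly as in the proof of \autoref{lem:FF1}: put $\alpha = \min_i \tfrac{1}{\alpha_i} c_f(\sigma_i)$ and push $\alpha$ units around $\Gamma$, incrementing along $\sigma_i$ when $\sigma_i \in \K^d$ and decrementing along $\sigma_i$ when $-\sigma_i \in \K^d$. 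That same computation shows the resulting flow $f^{+}$ is feasible, and $\partial f^{+} = (\|f\| + \alpha)\gamma$ gives $\|f^{+}\| = \|f\| + \alpha$.

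Next I would apply the repair step, since $f^{+}$ need not be a vertex: invoke \autoref{lem:repair} to replace it by a flow $f'$ with $\H_{f'}$ acyclic and $\|f'\| = \|f^{+}\|$, and then \autoref{lem:vertex} certifies that $f'$ is again a vertex of $P$. Thus, provided $\alpha > 0$, each completed iteration moves from one vertex of $P$ to a \emph{different} vertex of strictly larger value. Since $P \subset \R^{n_d}$ has only finitely many vertices and the value is nondecreasing throughout and strictly increasing at every iteration, no vertex can be visited twice, so the algorithm performs only finitely many iterations and halts; by the stopping condition together with \autoref{thm:residual}, the flow it outputs is maximum.

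The step I expect to require the most care is the strict progress $\alpha > 0$, on which termination rests: an arbitrarily chosen augmenting chain could route flow through a residual $d$-simplex of zero residual capacity, forcing $\alpha = 0$ and no progress. To avoid this I would search for augmenting chains only inside the subcomplex of $\K_f$ spanned by the $d$-simplices of strictly positive residual capacity, where every augmenting chain automatically has $\alpha > 0$. This restriction does not weaken the stopping test: if $f$ is not maximum and $g$ is any feasible flow with $\|g\| > \|f\|$, then the chain $g - f$ built in the proof of \autoref{lem:FF2} is supported only on residual $d$-simplices whose residual capacity strictly exceeds the corresponding coefficient, so it lies in this subcomplex and its bottleneck is at least $1$; hence an augmenting chain is still available whenever one is needed, and the argument goes through.
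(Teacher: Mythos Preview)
Your proposal is correct and follows essentially the same route as the paper, which simply declares the theorem immediate from \autoref{lem:vertex} and \autoref{lem:repair}: maintain the invariant that $\H_f$ is acyclic (hence $f$ is a vertex of the flow polytope), augment and repair at each step, and conclude termination from finiteness of vertices together with strictly increasing value, then invoke \autoref{thm:residual} for optimality. Your explicit treatment of the $\alpha>0$ issue---restricting the augmenting-chain search to the positive-residual subcomplex and observing via the argument of \autoref{lem:FF2} that $g-f$ already lives there---is a detail the paper leaves implicit; one small overstatement is that the residual capacities need not \emph{strictly} exceed the coefficients of $g-f$, but you only need positivity on the support, which you do have.
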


\subsection{Lower bounds}\label{sec:FFtime}
We have shown a heuristic for which given a simplicial flow network $(\K,c,\gamma)$ Ford-Fulkerson is guaranteed to halt with a maximum flow.
In the worst case our algorithm runs in exponential time. 
In this section we focus our attention on the time required to find an augmenting chain.
The running time of our heuristic is determined by the time it takes to compute the augmenting chain and repair the flow at each iteration of the algorithm.
Computing the augmenting chain takes $O(n^{\omega})$ time by solving the linear system. 
Repairing the flow takes $O(n^{\omega +1})$ time since it requires $O(n)$ homology basis computations which each take time $O(n^{\omega})$.
We show that it is unlikely that this running time can be substantially improved. 
More specifically, we show that finding a non-negative solution to a linear system $Ax = b$ when $A$ has entries in $\{-1,0,1\}$ reduces to computing an augmenting chain for $(\K, c, \gamma)$, hence the complexity of solving a linear system in this form serves as a lower bound on computing an augmenting chain.
Given a linear system we construct a 2-complex with a 1-cycle $\gamma$ such that finding a 2-chain $\Gamma$ with non-negative coefficients and $\partial \Gamma = \gamma$ is equivalent to finding a non-negative feasible solution to the linear system. 
Further, the complex $\K$ used in the reduction is relatively torsion-free, so the total unimodularity of its boundary matrix cannot be used to speed up the computation.
Our reduction is essentially the same as one used by Chen and Freedman to show homology localization over $\Z_2$ is \NP-hard \cite{Chen2010}. However, we modify it slightly since we consider coefficients over $\R$.
We only give a proof sketch as our reduction is almost identical to that of Chen and Freedman's.

\begin{theorem}\label{thm:lowerbound}
Let $Ax = b$ be a linear system where $A$ has entries in $\{-1,0,1\}$. In polynomial time we can construct a 2-dimensional, relatively torsion-free, simplicial complex $\K$ and a 1-cycle $\gamma$ such that if a 2-chain $\Gamma$ is an augmenting chain for $\gamma$ then it is a non-negative solution to $Ax = b$.
\end{theorem}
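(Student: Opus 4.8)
The plan is to adapt, to real coefficients, the reduction of Chen and Freedman~\cite{Chen2010} for $\Z_2$ homology localization: we encode each variable of the system as a ``column gadget'' (a triangulated disk) and each equation as a circle, so that a non-negative $2$-chain bounding $\gamma$ is forced to be a non-negative combination of the gadgets whose coefficient vector solves $Ax=b$. Concretely, write $A\in\{-1,0,1\}^{m\times n}$. First I would build the $1$-skeleton of $\K$ as a bouquet of $m$ triangulated circles $c_1,\dots,c_m$ (one per equation) joined at a common vertex, and set $\gamma:=\sum_{i=1}^m b_i\,c_i$, which is visibly a $1$-cycle. Next, for each column $j$ I would attach a triangulated disk $D_j$ whose boundary runs once around $c_i$ positively when $A_{ij}=1$, once negatively when $A_{ij}=-1$, and skips $c_i$ when $A_{ij}=0$, so that as a $1$-chain $\partial D_j=\sum_i A_{ij}c_i$; since every entry of $A$ is $0$ or $\pm1$ this attaching word has length $O(m)$, so $D_j$ can be filled by a fan of $O(m)$ triangles from a new central vertex and $\K$ is built in polynomial time with $O(nm)$ simplices. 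The disks are glued to one another only along the bouquet, hence $C_2(\K)=\bigoplus_{j=1}^n C_2(D_j)$, and I would orient $\K$ so that each $D_j$ is consistently oriented on its triangles.

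The core of the proof is a rigidity observation. Let $\Gamma=\sum_i\alpha_i\sigma_i$ be an augmenting chain for $\gamma$, i.e.\ $\alpha_i\ge 0$ and $\partial\Gamma=\gamma$, and split $\Gamma=\sum_j\Gamma_j$ with $\Gamma_j\in C_2(D_j)$. Every edge interior to $D_j$ lies on exactly the two triangles of $D_j$ sharing it and on no part of $\supp(\gamma)$, so $\partial\Gamma_j$ vanishes on all interior edges of $D_j$; since the triangle-adjacency graph of a disk is connected through its interior edges, this forces $\Gamma_j=x_jD_j$ for a single scalar $x_j$, and $\alpha_i\ge 0$ together with the chosen orientation gives $x_j\ge 0$. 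Comparing coefficients on the edges of each $c_i$ in $\gamma=\partial\Gamma=\sum_j x_j\partial D_j=\sum_i\!\big(\sum_j A_{ij}x_j\big)c_i$ against $\gamma=\sum_i b_ic_i$ yields $(Ax)_i=b_i$ for every $i$. Hence the vector $x=(x_1,\dots,x_n)$ of gadget coefficients of $\Gamma$ is a non-negative solution of $Ax=b$; conversely $\sum_j x_jD_j$ is an augmenting chain for any non-negative solution $x$, so the two notions are in bijection. Since producing a non-negative solution of $Ax=b$ costs $\Theta(n^\omega)$ by current linear-algebra methods, computing an augmenting chain is at least as expensive, which is the (deliberately imprecise) lower bound of Section~\ref{sec:FFtime}.

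It remains to arrange that $\K$ is relatively torsion-free in dimension $1$, so that by Theorem~\ref{thm:dey} its boundary matrix $\partial_2$ is totally unimodular and the reduction genuinely rules out exploiting unimodular structure for a speed-up; here each $2$-simplex lies in a triangulated disk and the disks meet only along the bouquet, so every pair of pure subcomplexes $(\mathcal{L},\mathcal{L}_0)$ of $\K$ splits along the disks and the relative homology reduces to that of triangulated disks rel.\ boundary subcomplexes, which is free — exactly as in Chen and Freedman's construction. I expect this last point, together with the care needed so that each $\partial D_j$ is \emph{exactly} the prescribed signed sum of the $c_i$ (and not merely homologous to it), to be the only delicate step; everything else is a routine transcription of the $\Z_2$ argument, which is why a sketch suffices.
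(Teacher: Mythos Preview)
Your proposal is correct and follows essentially the same Chen--Freedman reduction as the paper; the paper encodes each column as a punctured sphere with one boundary circle per nonzero entry (so the row cycles $C_i$ are disjoint rather than wedged), whereas you use a disk attached along a word in a bouquet---both give the same boundary map $\partial D_j=\sum_i A_{ij}c_i$ and the same rigidity argument, which you spell out more explicitly than the paper does. One small wrinkle: your fan triangulation is not literally simplicial, since the boundary word of $D_j$ passes through the wedge vertex once per nonzero entry and the cone edge to that vertex would be repeated; this is easily repaired by a mild subdivision, and the paper likewise defers the triangulation details (and the relative-torsion-freeness check) to the cited references.
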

\begin{proof}
We construct a cell complex $\K$ from $A$ as follows. For each of the $m$ rows we construct a 1-cycle $C_i$.
For each column vector $v_j$ we construct a punctured sphere $T_j$ with boundary components $C_{i,j},$ for each $v_{i,j} = 1$ and $-C_{i,j}$ for each $v_{i,j} = -1$.
Define the 1-cycle $\gamma$ to be $b$ with respect to the basis given by the boundary components of $\K$.
By our construction a vector $x = (x_1,\dots,x_n)$ is a feasible solution to $Ax = b$ if and only if the 2-chain $\sum x_i T_i$ has boundary $\gamma$. Hence, computing an augmenting chain for $\gamma$ is equivalent to computing a non-negative solution to the linear system.
It remains to show that $\K$ can be triangulated into a simplicial complex. We refer the reader to \cite{Chen2010} for a triangulation. To see that $\K$ is relatively torsion-free we refer to reader to \cite{Dey2011} which characterizes relative torsion in 2-complexes by forbidding certain \mobius{} subcomplexes.
\end{proof}

\bibliographystyle{plain}
\bibliography{flows}

\appendix
\section{Directed paths in simplicial complexes}
In a directed graph an $st$-path is equivalent to an 1-chain $P = \sum \alpha_i e_i$ with $\partial P = t - s$ such that $\alpha_i \in \{0, 1\}$.
The generalization to simplicial complexes is straightforward: we define a $d$-dimensional directed $\gamma$-path in a $d$-complex $\K$ to be a $d$-chain $P = \sum \alpha_i \sigma_i$ such that $\partial P = \gamma$ and $\alpha_i \in \{0, 1\}$ for all $i$.
We will now show that computing a directed $\gamma$-path is \NP-complete for $d \geq 2$. The reduction from graph 3-coloring is a slight adaptation of the proof of \autoref{thm:inthard}.

\begin{theorem}\label{thm:aughard}
Computing a directed $\gamma$-path in a $d$-dimensional simplicial complex is \NP-complete for $d \geq 2$.
\end{theorem}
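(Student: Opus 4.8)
The statement has two halves, membership in \NP{} and \NP-hardness. Membership is immediate: a directed $\gamma$-path is a $d$-chain $\Gamma=\sum\alpha_i\sigma_i$ with $\alpha_i\in\{0,1\}$ and $\partial\Gamma=\gamma$, so the set $\{\sigma_i:\alpha_i=1\}$ is a polynomial-size certificate checkable by a single matrix--vector multiplication against the boundary matrix. The real content is the hardness direction, which I would obtain by recycling, essentially verbatim, the gadget complex $\K$ built from a graph $G$ in the proof of \autoref{thm:inthard}: the punctured sphere $\mathcal{S}$ carrying the boundary circles $\gamma$ and $\beta_v$; the three color disks $\C_v$ with $\partial\C_v=-\beta_v$ (before the edge punctures); the per-edge circles $\beta_{v,e,k}$ punched into each $\C_v$; the edge tubes $\mathcal{T}_{e,k_u,k_v}$ for $k_u\ne k_v$; and, in the monochromatic case, the thrice-punctured tube $\mathcal{T}_{e,k,k}$ sealed by the torsion gadget $\mathcal{R}\mathcal{P}_{e,k}$ with $\partial\mathcal{R}\mathcal{P}_{e,k}=-2\alpha$. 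One checks, as there, that $\gamma$ is null-homologous over $\R$ in $\K$. The claim to establish is then: $\K$ contains a directed $\gamma$-path if and only if $G$ is $3$-colorable.

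For the ``if'' direction I would, given a proper $3$-coloring $c$, exhibit the explicit $\{0,1\}$-chain $\Gamma$ that takes every $d$-simplex of $\mathcal{S}$, every $d$-simplex of the chosen color disk at each vertex, and every $d$-simplex of the connecting tube $\mathcal{T}_{e,c(u),c(v)}$ at each edge $e=(u,v)$, all with coefficient $1$; since $c$ is proper no tube $\mathcal{T}_{e,k,k}$ appears, so no $\alpha$ term survives, and a direct boundary computation gives $\partial\Gamma=\gamma$. The ``only if'' direction is the structural heart and reprises the four bullet points of \autoref{thm:inthard}, but with the hypothesis ``$\alpha_i\in\{0,1\}$'' in place of ``$\alpha_i\in\Z$'': since $\gamma$ occurs only on $\mathcal{S}$ and every interior face of $\mathcal{S}$ forces equal coefficients on its two incident $d$-simplices, a directed $\gamma$-path must use all of $\mathcal{S}$ with coefficient $1$; cancelling each $\beta_v$ then forces exactly one color disk at $v$ to be used with coefficient $1$, which reads off a candidate coloring; cancelling the resulting $\beta_{v,e,k}$ terms forces, for each edge, exactly the tube joining the two chosen color disks to be used with coefficient $1$; and if that tube were $\mathcal{T}_{e,k,k}$ the leftover $\alpha$ could only be killed by $\mathcal{R}\mathcal{P}_{e,k}$ used with coefficient $\tfrac12\notin\{0,1\}$, which is impossible, so the coloring is proper.

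The step I expect to be the main obstacle is making the ``only if'' bookkeeping airtight, in particular ruling out exotic cancellations in the tube part that do not simply select the connecting tube --- chains of several tubes, or cancellations that reuse color disks with coefficient $0$. The observation that resolves this is that, under a fixed consistent orientation, each circle $\beta_{v,e,k}$ is a unit-coefficient boundary component of exactly two kinds of gadget, namely $\C_v$ (only when $k=c(v)$, by the previous step) and the tubes incident to it; so over $\{0,1\}$ coefficients the equations forcing the total boundary to vanish on these circles reduce to a perfect matching on the circles of each edge gadget, whose unique solution is the connecting tube. This is the combinatorics already implicit in \autoref{thm:inthard}, so I expect the adaptation to be brief.

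Finally, to cover every $d\ge 2$ rather than only $d=2$, I would replace each two-dimensional gadget by its $d$-dimensional analogue: a punctured $d$-sphere for $\mathcal{S}$, triangulated $d$-balls for the color ``disks'', suitably punctured $(S^{d-1}\times[0,1])$-tubes for the $\mathcal{T}$'s, and, in place of $\mathcal{R}\mathcal{P}_{e,k}$, a triangulated $d$-complex obtained by attaching a $d$-cell to the $(d-1)$-sphere $\alpha$ by a degree-$2$ map, so the corresponding $d$-chain has boundary $2\alpha$ and again is usable only with coefficient $\tfrac12$ over $\R$ (such $\Z_2$-type relative torsion gadgets exist in every dimension $\ge 2$). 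The ``if'' chain still has all coefficients in $\{0,1\}$ and the structural argument goes through unchanged, giving \NP-hardness, hence --- with the membership observation above --- \NP-completeness for all $d\ge 2$.
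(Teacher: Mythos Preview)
Your proposal is correct, but it takes a different route from the paper's own proof. You reuse the construction of \autoref{thm:inthard} verbatim, including the projective-plane torsion gadget $\mathcal{R}\mathcal{P}_{e,k}$ with $\partial\mathcal{R}\mathcal{P}_{e,k}=-2\alpha$, and argue that a $\{0,1\}$-chain cannot cancel the stray $\alpha$ because that would force a coefficient of $\tfrac12$ on the gadget. The paper instead dispenses with torsion entirely: it builds all nine tubes for each edge (no extra puncture, no $\mathcal{R}\mathcal{P}$), and in each monochromatic tube $\mathcal{T}_{e,k,k}$ it simply reverses the orientation of a single $2$-simplex. Since the interior $1$-cells of a tube are shared only within that tube, any $\{0,1\}$-chain must use all simplices of a given tube with the same coefficient; with one simplex flipped this forces either coefficient $0$ on the whole tube or an illegal $-1$ on the flipped simplex, so monochromatic tubes are unusable.

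What each approach buys: the paper's orientation-flip is lighter---no torsion subcomplex, and the resulting $\K$ is relatively torsion-free---whereas your approach has the virtue of literally recycling an already-analyzed construction, so the ``only if'' bookkeeping (your four bullet points and the perfect-matching argument on the circles $\beta_{v,e,k}$, which is correct) comes essentially for free from \autoref{thm:inthard}. You also treat the extension to $d>2$ more explicitly than the paper, which only writes out $d=2$; your degree-$2$ attaching-map gadget is the natural higher-dimensional replacement for $\mathcal{R}\mathcal{P}_{e,k}$ and works for the same reason.
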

\begin{proof}
First, to show that computing a directed $\gamma$-path is in \NP~we note that we can check the coefficients and the boundary of a 2-chain in polynomial time. To prove \NP-hardness we give a reduction from graph 3-coloring.

Given a graph $G = (V, E)$ we construct a 2-complex $\K$ with a boundary component $\gamma$ such that there exists a directed $\gamma$-path if and only if $G$ is 3-colorable.
First, construct a punctured sphere with $|V| + 1$ boundary components. One of these boundary components is $\gamma$, the remaining $|V|$ are in bijection with the vertices of $G$ and we will denote the component corresponding to the vertex $v$ as $\gamma_v$.
For each vertex $v$ we construct three additional punctured spheres $v_r, v_b, v_g$ each with $\deg(v) + 1$ boundary components. These punctured spheres correspond to the three potential colors of $v$: red, blue, and green and we refer to them as the \emph{color surfaces} of $v$. We glue $v_r, v_b, v_g$ to $\gamma_v$ each along some boundary component.
For each edge $e=(u, v)$ we construct nine tubes with two boundary components. Each tube connects a color surface of $u$ to a color surface of $v$, for example the tube $\mathcal{T}_{r, b}$ connects the red color surface of $u$ with the blue color surface of $v$.
We orient the complex such that each 2-chain with boundary $\gamma$ and coefficients in $\{0, 1\}$ is an oriented manifold with boundary $\gamma$.
For each tube connecting two color surfaces of the same color (for example, $\mathcal{T}_{r, r}$) we invert the orientation of one simplex, such that any bounding chain for $\gamma$ including the inverted simplex will have to assign it a coefficient of -1.

Any 2-chain with boundary $\gamma$ and coefficients in $\{0, 1\}$ is a surface as for each vertex it must contain exactly one color surface and for each edge it must contain exactly one tube. Moreover, if any tube connecting two color surfaces of the same color is contained in the solution it must contain some simplex a negative coefficient and is not an directed $\gamma$-path. It follows that $G$ is 3-colorable if and only if there is some directed $\gamma$-path.
\end{proof}
\end{document}